\pgfplotsset{width=7cm,compat=1.8}
\def\ssk{\smallskip}
\def\pagebreak{\vfill\eject}
\def\leftdisplay#1$${\leftline{\hskip 36pt$\displaystyle{#1}$}$$}
\def\comment#1 \par{\ifnum\commenton=1
    \medskip{\it [#1]}\par\medskip\fi}
\def\commenta#1 \par{\ifnum\commentaon=1
    \medskip{\it [#1]}\par\medskip\fi}
\def\commentb#1 \par{\ifnum\commentbon=1
    \medskip{\it [#1]}\par\medskip\fi}
\def\commentc#1 \par{\ifnum\commentcon=1
    \medskip{#1}\par\medskip\fi}
\def\commentpi#1 \par{\ifnum\commentpion=1
    {{#1}}\par\fi}
\def\commentpistart#1 \par{\ifnum\commentpion=1 %start of proof
    \medskip{\noindent {\em Proof:~}}{{#1}}\par\fi}
\def\commentpiend#1 \par{\ifnum\commentpion=1 %end of proof
    {{#1}}\hfill $\Box$\par\medskip\fi}
\def\commentpistartend#1 \par{\ifnum\commentpion=1 %start of proof
    \medskip{\noindent {\em Proof:~}}{{#1}}\hfill $\Box$\par\medskip\fi}
\def\commentph#1 \par{\ifnum\commentphon=1
    {{#1}}\par\fi}
\def\commentphstart#1 \par{\ifnum\commentphon=1 %start of proof
    \medskip{\noindent {\em Proof:~}}{{#1}}\par\fi}
\def\commentphend#1 \par{\ifnum\commentphon=1 %end of proof
    {{#1}}\hfill $\Box$\par\medskip\fi}
\def\commentphstartend#1 \par{\ifnum\commentphon=1 %start of proof
    \medskip{\noindent {\em Proof:~}}{{#1}}\hfill $\Box$\par\medskip\fi}
\def\summ_#1{\setbox1=\hbox{$\displaystyle\sum_{#1}$}
             \setbox2=\hbox{$\displaystyle\sum$}
             \dimen1=0.25\wd1
             \dimen2=0.25\wd2
             \advance \dimen1 by -\dimen2
             \dimen3 = 0.5\dimen1
             \hskip -\dimen3
             \hbox {$\displaystyle\sum_{#1}$}
             \hskip -\dimen1
             }
\def\Nats{{\hbox{$\mathpalette{}{I\kern-.33em N}$}}}
\def\Reals{{\hbox{$\mathpalette{}{I\kern-.33em R}$}}}
\newcommand{\set}[1]{{\{#1\}}}
\newcommand{\emp}{\emptyset}
\newcommand{\calA}{{\cal A}}
\newcommand{\calG}{{\cal G}}
\newcommand{\calL}{{\cal L}}
\newcommand{\calM}{{\cal M}}
\newcommand{\calO}{{\cal O}}
\newcommand{\calP}{{\cal P}}
\def\Nats{{\hbox{$\mathpalette{}{I\kern-.33em N}$}}}
\def\Reals{{\hbox{$\mathpalette{}{I\kern-.33em R}$}}}
\def\set#1{{\{#1\}}}
\def\st{\, : \,}    % such that---used in definition of sets
\def\emp{\emptyset}
\def\calA{{\cal A}}
\def\calG{{\cal G}}
\def\calL{{\cal L}}
\def\calM{{\cal M}}
\def\calO{{\cal O}}
\def\calP{{\cal P}}
\newtheorem{proposition}{Proposition}
\newtheorem{lemma}{Lemma}
\newtheorem{corollary}{Corollary}
\newtheorem{definition}{Definition}
\newtheorem{theorem}{Theorem}
\def\und{\underline}
\def\phi{\varphi}
\renewcommand{\AA}{A}
\def\succceq{\succcurlyeq}
\def\eqq{$\parallel$}
\newcommand{\NO}{{\rm NO}}
\newcommand{\PO}{{\rm PO}}
\newcommand{\PSO}{{\rm PSO}}
\newcommand{\OO}{{\rm O}}
\newcommand{\SO}{{\rm SO}}
\newcommand{\CSD}{{\rm CSD}}
\newcommand{\MPO}{{\rm MPO}}
\newcommand{\EXT}{{\rm EXT}}
\newcommand{\Opt}{\textrm{Opt}}
\newcommand{\POM}{{\rm POM}}
\newcommand{\RR}{\mathfrak{R}}
\newcommand{\alphabeta}{(\alpha, \beta)}
\newcommand{\gralphabeta}{\alpha \ge \beta}
\newcommand{\Glex}{\calG}
\newcommand{\modelslex}{\models}
\newcommand{\modelsstar}{\models^*}
\newcommand{\calLpq}{\calL_{pqT}}
\newcommand{\calLpqn}{\calL'_{pqT}}
\newcommand{\op}{\Omega}
\newcommand{\shortspecialcell}[2][c]{\begin{tabular}[#1]{@{}p{2cm}@{}}#2\end{tabular}}
\title{
Efficient  Inference and Computation of Optimal Alternatives for Preference Languages Based On Lexicographic Models
\\ \textit{Longer Version of IJCAI-17 Submission}
}
\author{Nic Wilson, Anne-Marie George \\
Insight Centre for Data Analytics,
School of Computer Science and IT \\
University College Cork, Ireland \\
\{nic.wilson, annemarie.george\}@insight-centre.org
}
\begin{document}

\maketitle

\begin{abstract}
We analyse preference inference, through consistency,
for general preference languages based on lexicographic models.
We identify a property, which we call \emph{strong compositionality},
that applies for many natural kinds of preference statement,
and
that allows a greedy algorithm for determining consistency of a set of preference statements.
We
also consider different natural definitions of optimality, and their relations to each other, for general preference languages based on lexicographic models.
Based on our framework, we show that testing consistency, and thus inference, is polynomial for a specific preference language $\calLpqn$,
which allows strict and non-strict statements, comparisons between outcomes and between partial tuples, both ceteris paribus and strong statements, and their combination.
Computing different kinds of optimal sets is also shown to be polynomial; this is backed up by our experimental results.
\end{abstract}

\newcommand{\outc}{\und{V}}
\newcommand{\Tr}{\textit{Tr}}
\newcommand{\RTr}{\textit{RTr}}

\newcommand{\Gammadestrict}{\Gamma^{(\ge)}}
\newcommand{\phidestrict}{\phi^{(\ge)}}
\newcommand{\psidestrict}{\psi^{(\ge)}}

\section{Introduction}
\label{sec:intro}

Preferences are considered in many different fields such as recommender systems and human-computer interaction~\cite{ChenP07,TrabelsiWB13}, databases~\cite{AgrawalW00,KieBling02} and multi-objective decision making and social choice~\cite{ArrowR86,Lang02,SandholmB06}.
They are used to, e.g., find optimal decisions, solutions or items that satisfy the wishes or needs of a user or a group.
In order to reason with preferences, typically assumptions are made on the
form of the relation used to model the user preferences.
In this paper we restrict our considerations to lexicographic models,
and we consider preference inference,
which involves reasoning about the the whole set of preference models that are consistent with the input preferences, as in~\cite{KohliJ07,Wilson14};
this contrasts with
 work that focuses on learning one lexicographic model that fits best with the given preferences~\cite{BoothCLMS10,BrauningH12,DombiIV07,YamanWLdJ10}.

Preference inference aims to overcome gaps of knowledge in the user preferences by analysing the given preferences and deciding whether another preference statement can be deduced.
Since in this paper we allow negations of statements, preference inference
can be reduced to testing consistency of a set of preference statements.
We define a property, strong compositionality, that is satisfied by many natural types of preference statements and
enables a simple greedy algorithm for testing consistency.

Fast computation is essential in many applications;
we give a concrete instance $\calLpqn$ of a preference language that allows polynomial computation.
This can represent a relatively expressive form of preference input.
It can express preferences between a pair of complete assignments to a set of variables, and, more generally, between partial assignments
to just a subset of the variables, allowing \emph{ceteris paribus} assumptions or stronger implications on the values of other variables.
We also allow non-strict statements, two forms of strict preference statements and can represent certain negated statements.

Previous work
on preference inference based on standard lexicographic models
 have considered more restricted preference languages.
Wilson~\cite{Wilson14} considered only non-negated non-strict statements,
which can only express that one assignment is at least as good as another (or equivalent).
Kohli and Jedidi \cite{KohliJ07}
%on the other hand
considers only non-negated strict statements, which can only express that one complete assignment is strictly better than another.
In~\cite{WilsonGOS15}, preference statements are comparisons of complete assignments,
and fixed value orders for the variables are given.

Other work on preference inference includes that based on
weighted sum models~\cite{WilsonRM15,MontazeryW16},
on hierarchical models~\cite{WilsonGOS15},
conditional lexicographic models~\cite{Wilson09};
Pareto orders~\cite{GeorgeW16}, and on general strict total orders, as in e.g., work on conditional preference structures such as \cite{BoutilierBDHP04-CP-NetsATool}.
While lexicographic models prevent tradeoffs between variables, they usually allow  better complexity results for general preference languages.

Given a set of preference inputs and a set of alternatives,
there are several natural notions of optimal set; for instance being undominated with respect to the induced preference relation;
or being possibly optimal, i.e., optimal in at least one model.
We establish relationships between the different notions of optimality.
For preference language $\calLpqn$, we analyse the running times and set relations theoretically and experimentally, illustrating the efficiency of the algorithms.

%\commentphnot
%The paper is organised as follows: Section~\ref{sec:lex-models} gives basic definitions.
%The concepts of  inference, consistency and strongly compositional statements are investigated in Section~\ref{sec:lex-inf-strong-compos}.
%Section~\ref{sec:pref-languages-calLpq} introduces languages $\calLpq$ and $\calLpqn$ and an algorithm to decide consistency in this context.
%In Section~\ref{sec:optimality}, different notions of optimality are compared for general preference languages together with a discussion of computational methods and complexity results for $\calLpqn$.
%The last section concludes. A longer version of this paper including proofs and additional comments can be found under~\cite{WilsonG17longer}.

%\commentph
The paper is organised as follows: Section~\ref{sec:lex-models} gives basic definitions.
The concepts of  inference, consistency and strongly compositional statements are investigated in Section~\ref{sec:lex-inf-strong-compos}.
Section~\ref{sec:pref-languages-calLpq} introduces languages $\calLpq$ and $\calLpqn$ and an algorithm to decide consistency in this context.
In Section~\ref{sec:optimality}, different notions of optimality are compared for general preference languages.
The efficiency of computing these for $\calLpqn$ is analysed theoretically and experimentally in Section~\ref{sec: experiments}.
The last section concludes.

\newcommand{\dom}{D}

\section{Lex Models, Composition and Extension}
\label{sec:lex-models}

In this section, we give some basic definitions; in particular, we define lexicographic models, along with
a natural composition operation and extension relation, that are important in our approach.
A lexicographic model has an associated ordering of (some of the) variables, along with value orderings for each such variable.
This generates a (lexicographic) ordering relation, by first comparing outcomes $\alpha$ and $\beta$ on the first (i.e., most important) variable $Y$; only if $\alpha(Y) = \beta(Y)$ is the next most important variable considered.

Throughout the paper we consider a fixed set $V$ of $n$ variables,
where for each $X \in V$, $\dom(X)$
is the set of possible values of $X$.
For subset of variables $A \subseteq V$ let $\und{A} = \prod_{X \in A} \dom(X)$ be the
set of possible assignments to $A$.
For $X\in V$, we abbreviate $\und{\set{X}}$ to $\und{X}$
(which is essentially $\dom(X)$).
An \emph{outcome} is an element of $\und{V}$,
i.e., an assignment to all the variables.
For outcome $\alpha$ and subset $U$ of  $V$,
we define $\alpha(U)$ to be the tuple in $\und{U}$ generated by projecting (i.e., restricting) $\alpha$ to $U$.
These definitions describe a decision scenario over a set of alternatives given by different features.
Consider for example the choice between different flight connections.
A set of variables to describe the flight connections could be $V = \{\mathit{airline}, \mathit{time}, \mathit{class}\}$.
The variable domains are then given by $\dom(\mathit{airline}) = \und{\mathit{airline}} = \{\mathit{KLM}, \mathit{LAN}\}$, $\dom(\mathit{time}) = \{\mathit{day}, \mathit{night}\}$ and $\dom(\mathit{class}) = \{\mathit{economy}, \mathit{business}\}$.
Then projection of the outcome $\alpha = (\mathit{KLM}, \mathit{day}, \mathit{economy})$ to the variables $U = \{\mathit{airline}, \mathit{class}\}$ is given by $\alpha(U) = (\mathit{KLM}, \mathit{economy})$.

Define $\Glex$ to be the set of lexicographic models (over the set of variables $V$);
a \emph{lexicographic model} or \emph{lex model}, $\pi$ (over $V$),  is defined
to be a (possibly empty) sequence
of the form
$(Y_1, \ge_{Y_1}), \ldots, (Y_k, \ge_{Y_k})$,
 where  $Y_i$ ($i=1, \ldots, k$) are different variables in $V$,
and each $\ge_{Y_i}$ is a total order on $\und{Y_i}$.
The associated relation $\succceq_\pi$ $\subseteq \outc\times\outc$
is defined as follows:
for outcomes $\alpha$ and $\beta$,
$\alpha \succceq_\pi \beta$ if and only if
either (i) for all $i=1, \ldots, k$,  $\alpha(Y_i) = \beta(Y_i)$;
or
(ii) there exists $i \in \set{1, \ldots, k}$ such that
for all $j < i$, $\alpha(Y_j) = \beta(Y_j)$ and
$\alpha(Y_i) >_{Y_i} \beta(Y_i)$ (i.e., $\alpha(Y_i) \ge_{Y_i} \beta(Y_i)$ and $\alpha(Y_i) \not= \beta(Y_i)$).
Thus $\succceq_\pi$ is a total pre-order on $\und{V}$, which is a total order if
%and only if
$k=n = |V|$.

The corresponding strict relation $\succ_\pi$ is given by
$\alpha \succ_\pi \beta$ if and only if there exists $i \in \set{1, \ldots, k}$ such that
$\alpha(Y_i) >_{Y_i} \beta(Y_i)$
and for all $j < i$, $\alpha(Y_j) = \beta(Y_j)$.
The corresponding equivalence relation $\equiv_\pi$
is given by
$\alpha \equiv_\pi \beta$ if and only if
for all $i=1, \ldots, k$,  $\alpha(Y_i) = \beta(Y_i)$.
Thus, $\alpha \equiv_\pi \beta$ if and only if
$\alpha(V_\pi) = \beta(V_\pi)$,
where $V_\pi=\set{Y_1, \ldots, Y_k}$ is set of the variables involved in $\pi$.
Consider lexicographic models $\pi = (\mathit{airline}, \mathit{KLM} > \mathit{LAN}), (\mathit{time}, \mathit{day} > \mathit{night})$
and $\pi' = (\mathit{class}, \mathit{economy} > \mathit{business}), (\mathit{time}, \mathit{night} > \mathit{day})$
for the previous example of flight connections.
In $\pi$ the choice of airline decides which connection is preferred;
only if these are the same, is the flight time considered.
$\pi'$ first compares the classes;
only if these are the same, are the flight times considered.
For outcomes $\alpha = (\mathit{KLM}, \mathit{day}, \mathit{economy})$, $\beta = (\mathit{KLM}, \mathit{night}, \mathit{business})$
and $\gamma = (\mathit{LAN}, \mathit{day}, \mathit{economy})$ we have $\alpha \succ_\pi \beta \succ_\pi \gamma$
and $\alpha \equiv_{\pi'} \gamma \succ_{\pi'} \beta$.

\ssk\noindent
\textbf{Composition of lexicographic models:}
We define an important composition operation on lex models,
which can be shown to be associative.
Let
$\pi = (Y_1, \ge_{Y_1}), \ldots, (Y_k, \ge_{Y_k})$,
and
$\pi' = (Z_1, \ge_{Z_1}), \ldots, (Z_l, \ge_{Z_l})$
be two lexicographic models.
Let $\pi''$ be the sequence $\pi'$ but where pairs $(Z_i, \ge_{Z_i})$
are omitted if $Z_i \in V_\pi$.
Define lex model $\pi\circ\pi'$ to be
$\pi$ followed by $\pi''$.
Note that $V_{\pi\circ\pi'} = V_\pi \cup V_{\pi'}$.

%\commentph
\begin{lemma}
\label{le:circ-associative}
$\circ$ is an associative operation on $\Glex$.
\end{lemma}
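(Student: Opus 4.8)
The plan is to find a clean closed-form description of $\pi\circ\pi'$ from which associativity is immediate, rather than unfolding the two bracketings directly. The guiding observation is that $\circ$ implements a ``first occurrence wins'' rule: in $\pi\circ\pi'$ the variable set is $V_\pi\cup V_{\pi'}$, the variables of $\pi$ are listed first in their $\pi$-order, followed by the variables of $V_{\pi'}\setminus V_\pi$ in their $\pi'$-order, and each variable $X$ carries the value order it had in the \emph{leftmost} of the two models containing it (from $\pi$ if $X\in V_\pi$, otherwise from $\pi'$). I would take three arbitrary lex models $\pi,\pi',\pi''$ and show that both associations reduce to a single such normal form.

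To formalise this, for an arbitrary finite sequence $S$ of pairs $(X,\ge_X)$ in which variables may repeat, let $\mathrm{red}(S)$ denote the subsequence obtained by keeping only the first occurrence of each variable, together with the order attached to that occurrence. Writing $\cdot$ for concatenation of sequences, the definition of $\circ$ is exactly $\pi\circ\pi' = \mathrm{red}(\pi\cdot\pi')$: since $\pi$ and $\pi'$ are each repetition-free, the only duplicate variables in $\pi\cdot\pi'$ are those in $V_\pi\cap V_{\pi'}$, and for such a variable the first (hence retained) occurrence is the one in $\pi$, so the deleted occurrences are precisely the pairs $(Z_i,\ge_{Z_i})$ of $\pi'$ with $Z_i\in V_\pi$, matching the stated construction of $\pi''$.

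The single lemma I would prove is that reduction absorbs an already-reduced block on either side:
$$\mathrm{red}(\mathrm{red}(S)\cdot T) \;=\; \mathrm{red}(S\cdot T) \;=\; \mathrm{red}(S\cdot \mathrm{red}(T)).$$
Each equality is checked by comparing the two resulting sequences entry by entry: both list exactly the variables of $V_S\cup V_T$, both list them in order of first appearance in $S\cdot T$ (note $\mathrm{red}$ preserves the first-appearance order of the block it acts on, so pre-reducing a block changes neither which survivor appears first nor the interleaving with the other block), and both attach to each variable the order of its first occurrence in $S\cdot T$. Granting the lemma, associativity follows by two rewrites: $(\pi\circ\pi')\circ\pi'' = \mathrm{red}(\mathrm{red}(\pi\cdot\pi')\cdot\pi'') = \mathrm{red}(\pi\cdot\pi'\cdot\pi'')$, and symmetrically $\pi\circ(\pi'\circ\pi'') = \mathrm{red}(\pi\cdot\mathrm{red}(\pi'\cdot\pi'')) = \mathrm{red}(\pi\cdot\pi'\cdot\pi'')$, so the two sides coincide.

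The routine part is the definitional identity $\pi\circ\pi'=\mathrm{red}(\pi\cdot\pi')$; the main obstacle is the lemma, and within it the bookkeeping that the value order surviving for a repeated variable is always that of its leftmost occurrence, and that pre-reducing one block does not disturb the relative order of the surviving first occurrences. Handling the variables in $V_S\cap V_T$, whose first occurrence lies in $S$ and whose $T$-occurrences must be shown to be deleted in both expressions, is where care is needed; everything else is a direct comparison of first-occurrence data.
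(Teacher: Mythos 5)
Your proof is correct, but it takes a genuinely different route from the paper's. The paper proves associativity by reduction to the pairwise-disjoint case: it introduces an operator that deletes from a lex model all pairs whose variable lies in a given set $U$, observes that $\pi\circ\pi'$ is unchanged if $\pi'$ is first stripped of the variables of $V_\pi$, replaces $\pi_2,\pi_3$ by stripped versions $\pi'_2,\pi'_3$ that are pairwise disjoint from everything to their left (so that $\circ$ becomes plain concatenation and associativity is immediate there), and then closes the gap with the identity $(\pi_2\circ\pi_3)$ stripped of $V_{\pi_1}$ equals $\pi'_2\circ\pi'_3$ --- a commutation of deletion with composition. You instead exhibit an explicit normal form: $\pi\circ\pi'=\mathrm{red}(\pi\cdot\pi')$, and both bracketings of a triple product collapse to $\mathrm{red}(\pi\cdot\pi'\cdot\pi'')$ via the two-sided absorption law $\mathrm{red}(\mathrm{red}(S)\cdot T)=\mathrm{red}(S\cdot T)=\mathrm{red}(S\cdot\mathrm{red}(T))$. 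Your key lemma (idempotent-style absorption of $\mathrm{red}$) plays the role that the deletion--composition commutation plays in the paper, and your argument needs no disjointness base case. What your approach buys is symmetry and a canonical description of arbitrary iterated products (any bracketing of $\pi_1\circ\cdots\circ\pi_k$ equals $\mathrm{red}(\pi_1\cdots\pi_k)$, giving generalised associativity in one stroke); what the paper's buys is that it stays entirely within the class of repetition-free sequences, never introducing the auxiliary universe of sequences with repeated variables. Your identification of $\circ$ with $\mathrm{red}$ of the concatenation is verified correctly, and the entry-by-entry check of the absorption law (same surviving variables, same first-appearance order, same attached value orders) is sound, including the delicate case of variables in $V_S\cap V_T$.
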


\newcommand{\downaminus}{\downarrow -}

\begin{proof}
%
%\commentphstart
For $\pi,\pi'\in\Glex$,
we say that $\pi$ and $\pi'$ are disjoint if
$V_\pi \cap V_{\pi'} = \emp$.
For
set of variables $U\subseteq V$,
define $\pi^{\downaminus U}$ to be
$\pi$ with all pairs of the form $(X, \ge_X)$ omitted,
when $X\in U \cap V_\pi$.
For any $\pi,\pi'\in\Glex$,
we have that $\pi\circ\pi' = \pi\circ(\pi')^{\downaminus V_\pi}$,
by definition.

%\commentph
Consider any $\pi_1, \pi_2, \pi_3 \in\Glex$.
For $j=1, 2, 3$, let $V_j$ equal $V_{\pi_j}$.
Then, $(\pi_1\circ\pi_2)\circ\pi_3 =
(\pi_1\circ\pi'_2)\circ\pi'_3$,
where
$\pi'_2 = (\pi_2)^{\downaminus V_1}$ and
$\pi'_3 = (\pi_3)^{\downaminus V_1 \cup V_2}$.

%\commentphend
Associativity clearly holds for pairwise disjoint lexicographic models,
so
$(\pi_1\circ\pi'_2)\circ\pi'_3 =
\pi_1 \circ(\pi'_2 \circ \pi'_3)$.
Now, $\pi_1 \circ(\pi_2 \circ \pi_3)$
equals $\pi_1 \circ(\pi_2 \circ \pi_3)^{\downaminus V_1}$,
so, to prove associativity of $\circ$ it is sufficient
to prove that
$(\pi_2 \circ \pi_3)^{\downaminus V_1}$ equals
$\pi'_2 \circ \pi'_3$.
We have that
$(\pi_2 \circ \pi_3)^{\downaminus V_1}$ equals
$(\pi_2 \circ (\pi_3)^{\downaminus V_2})^{\downaminus V_1}$,
which equals
$\pi_2^{\downaminus V_1} \circ \pi_3^{\downaminus V_1 \cup V_2}$,
i.e., $\pi'_2 \circ \pi'_3$, completing the proof.
\end{proof}

\begin{definition}
For lex models $\pi$ and $\pi'$ in $\Glex$
we say that $\pi'$ \emph{extends} $\pi$ if $\pi' \not=\pi$
and the sequence $\pi'$ begins with $\pi$.
We then write $\pi' \sqsupset \pi$,
and write $\pi'\sqsupseteq\pi$ to mean
that $\pi'$ extends or equals $\pi$.
This holds if and only if
%$\pi' \not=\pi$ and
there exists a lex model $\pi''$ such that
$\pi' = \pi\circ\pi''$.
\end{definition}

%\commentph
\begin{lemma}
For $\pi,\pi'\in\Glex$, we have
$\pi'\sqsupseteq\pi$ if and only if $\pi\circ\pi' = \pi'$.
\end{lemma}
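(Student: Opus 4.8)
The plan is to prove the two implications separately, leaning on two facts about $\circ$: its associativity (Lemma~\ref{le:circ-associative}) and its idempotency, $\pi\circ\pi=\pi$. The reverse implication is essentially immediate from the characterisation already recorded in the preceding Definition, namely that $\pi'\sqsupseteq\pi$ holds exactly when $\pi'=\pi\circ\pi''$ for some lex model $\pi''$. Indeed, if $\pi\circ\pi'=\pi'$, then taking $\pi''=\pi'$ exhibits $\pi'$ in the required form $\pi\circ\pi''$, so $\pi'\sqsupseteq\pi$.

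The first thing I would establish explicitly is idempotency. By the defining description of composition, $\pi\circ\pi$ is $\pi$ followed by $\pi$ with every pair $(X,\ge_X)$ omitted whose variable $X$ lies in $V_\pi$. Since every variable of $\pi$ belongs to $V_\pi$, this second copy is entirely deleted, i.e.\ it is the empty sequence, and hence $\pi\circ\pi=\pi$. Equivalently, in the notation of the associativity proof, $\pi\circ\pi=\pi\circ\pi^{\downaminus V_\pi}$ and $\pi^{\downaminus V_\pi}$ is the empty lex model because $V_\pi\cap V_\pi=V_\pi$.

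For the forward implication, suppose $\pi'\sqsupseteq\pi$. By the Definition's characterisation there is a lex model $\pi''$ with $\pi'=\pi\circ\pi''$. Then I would compute, using associativity followed by idempotency,
\[
\pi\circ\pi' \;=\; \pi\circ(\pi\circ\pi'') \;=\; (\pi\circ\pi)\circ\pi'' \;=\; \pi\circ\pi'' \;=\; \pi',
\]
which is exactly the desired conclusion. Combining the two directions gives the equivalence.

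There is no serious obstacle here: the content of the lemma is that $\pi$ acts as a left identity on everything that extends it, and this is forced once one observes that $\pi$ is idempotent under $\circ$. The only point requiring a little care is to be sure the two auxiliary inputs are legitimately available in this order: associativity is Lemma~\ref{le:circ-associative}, and idempotency must be read off directly from the definition of $\circ$ (or, as above, recovered from the $\downaminus$ notation), since it is the step that turns $\pi\circ\pi''$ back into $\pi'$. I would also remark that this lemma is the natural companion to the Definition, promoting the existential ``$\pi'=\pi\circ\pi''$ for some $\pi''$'' to the explicit fixed-point identity $\pi\circ\pi'=\pi'$, which is the form most convenient for later use.
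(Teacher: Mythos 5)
Your proof is correct. The forward direction follows the same skeleton as the paper's: both write $\pi'=\pi\circ\pi''$ and compute $\pi\circ\pi'=\pi\circ(\pi\circ\pi'')=\pi\circ\pi''=\pi'$. The paper leaves the middle step unjustified (it can be read off directly from the definition of $\circ$, since the paper also chooses $\pi''$ with $V_{\pi''}\cap V_\pi=\emptyset$), whereas you make it explicit by factoring it through associativity (Lemma~\ref{le:circ-associative}) and the idempotency $\pi\circ\pi=\pi$, which you correctly verify from the definition; this is arguably cleaner and reusable. The reverse direction is where you genuinely diverge: the paper argues at the level of sequences, extracting the first $k=|V_\pi|$ pairs of $\pi'=\pi\circ\pi'$ and observing they equal $\pi$, while you simply note that the hypothesis $\pi\circ\pi'=\pi'$ itself exhibits $\pi'$ in the form $\pi\circ\pi''$ with witness $\pi''=\pi'$, and invoke the characterisation recorded in the preceding Definition. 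Your route is shorter but leans on that characterisation (which the paper asserts in the Definition without proof; the direction you need is the trivial one, since $\pi\circ\pi''$ by construction begins with $\pi$), whereas the paper's sequence-level argument is self-contained. Both are sound.
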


\begin{proof}
%\commentphstart
Suppose that $\pi'$ extends or equals $\pi$, and
write $\pi'$ as $\pi\circ\pi''$ where $V_{\pi''} \cap V_\pi = \emp$.
Then, $\pi\circ\pi' = \pi\circ(\pi\circ\pi'')$ which equals
$\pi\circ\pi'' = \pi'$.
%
%\commentphend
Conversely, suppose that $\pi\circ\pi' = \pi'$.
Let $k = |V_{\pi}|$ and let
$\pi''$ be the lexicographic model consisting of the first $k$ pairs
of $\pi'$.
This equals the first $k$ pairs of $\pi\circ\pi'$, i.e., $\pi$,
and thus, $\pi'$ extends or equals $\pi$.
\end{proof}

The composition of the two previously defined lex models is $\pi \circ \pi' =
(\mathit{airline}, \mathit{KLM} > \mathit{LAN}),
(\mathit{time}, \mathit{day} > \mathit{night}),
(\mathit{class}, \mathit{economy} > \mathit{business})$ and $\pi \circ \pi' \sqsupseteq \pi$.

%\commentph
We have the following basic monotonicity property.
\begin{lemma}
\label{le:extends-pi-basic}
For $\pi,\pi'\in\Glex$,
suppose that $\pi'$ extends $\pi$.
If $\alpha \succceq_{\pi'} \beta$ then $\alpha \succceq_{\pi} \beta$.
Also, if $\alpha \succ_{\pi} \beta$ then $\alpha \succ_{\pi'} \beta$.
\end{lemma}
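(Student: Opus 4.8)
The plan is to exploit the single structural fact supplied by the definition of extension: since $\pi'$ extends $\pi$, writing $\pi = (Y_1, \ge_{Y_1}), \ldots, (Y_k, \ge_{Y_k})$, the first $k$ pairs of $\pi'$ are exactly those of $\pi$, and $\pi'$ merely appends further pairs $(Y_{k+1}, \ge_{Y_{k+1}}), \ldots, (Y_m, \ge_{Y_m})$ with $m > k$. Both $\succceq$ and $\succ$ are governed entirely by the first position at which $\alpha$ and $\beta$ disagree along the model's variable sequence, so the whole argument reduces to tracking where this first disagreement falls relative to the cut-off $k$.

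First I would dispatch the second claim, which is the more immediate. Suppose $\alpha \succ_\pi \beta$. By definition there is an index $i \le k$ with $\alpha(Y_j) = \beta(Y_j)$ for all $j < i$ and $\alpha(Y_i) >_{Y_i} \beta(Y_i)$. Because the first $k$ pairs of $\pi'$ coincide with those of $\pi$, this very same index $i$, now read inside $\pi'$, still witnesses $\alpha \succ_{\pi'} \beta$: the variables $Y_1, \ldots, Y_i$ and their orders are unchanged, so the defining condition transfers verbatim. No case analysis is needed here.

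For the first claim I would argue by cases on the first disagreement index in $\pi'$. Assume $\alpha \succceq_{\pi'} \beta$. If $\alpha$ and $\beta$ agree on every variable of $\pi'$, then since $V_\pi \subseteq V_{\pi'}$ they agree on every variable of $\pi$, giving $\alpha \succceq_\pi \beta$ by clause (i). Otherwise, let $i$ be the least index in $\pi'$'s sequence with $\alpha(Y_i) \neq \beta(Y_i)$; the hypothesis $\alpha \succceq_{\pi'} \beta$ forces $\alpha(Y_i) >_{Y_i} \beta(Y_i)$. If $i \le k$, the identical first $k$ pairs show that $i$ is also the first disagreement within $\pi$, so clause (ii) yields $\alpha \succceq_\pi \beta$. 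If instead $i > k$, then $\alpha$ and $\beta$ agree on all of $Y_1, \ldots, Y_k$, i.e.\ on every variable of $\pi$, so clause (i) again gives $\alpha \succceq_\pi \beta$.

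The only point requiring care, and the closest thing to an obstacle, is this last subcase of the first claim, where the first disagreement under $\pi'$ lies among the appended variables: one must notice that this situation means $\alpha$ and $\beta$ are $\equiv_\pi$-equivalent, so the weak relation $\succceq_\pi$ still holds even though $\pi$ detects no strict advantage. Everything else is a direct transfer of the defining conditions, and no appeal to the composition characterisation of $\sqsupseteq$ is actually needed, although one could equally phrase the proof by writing $\pi' = \pi \circ \pi''$ and invoking the preceding lemma.
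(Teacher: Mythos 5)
Your proof is correct. The second claim is handled exactly as in the paper: the first $k$ pairs of $\pi'$ coincide with those of $\pi$, so the witnessing index for $\alpha \succ_\pi \beta$ transfers verbatim. For the first claim, however, you take a slightly different route. The paper does not argue directly: it observes that by the same first-disagreement argument $\beta \succ_\pi \alpha$ implies $\beta \succ_{\pi'} \alpha$, and then uses the fact that $\succceq_\pi$ and $\succceq_{\pi'}$ are total pre-orders, so that $\alpha \not\succceq_\pi \beta$ is equivalent to $\beta \succ_\pi \alpha$; contraposition then yields the $\succceq$ direction with no further case analysis. You instead prove the $\succceq$ direction directly by splitting on where the first disagreement under $\pi'$ falls relative to the cut-off $k$, and you correctly identify the one delicate subcase (first disagreement among the appended variables, where $\alpha \equiv_\pi \beta$ and clause (i) of the definition supplies $\alpha \succceq_\pi \beta$). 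Your version is more elementary in that it never invokes totality of the pre-orders, at the cost of the extra subcase; the paper's version is shorter but leans on the completeness of $\succceq_\pi$. Both are sound, and your closing remark that the composition characterisation of $\sqsupseteq$ is not needed matches the paper, which likewise argues purely from the sequence structure.
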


\begin{proof}
%\commentphstartend
If $\alpha \succ_{\pi} \beta$ then
$\alpha$ and $\beta$ differ on some variable in $\pi$, and
$\alpha$ is better than $\beta$
on the first such variable;
 thus the same holds for $\pi'$, since $\pi'$ extends $\pi$,
  and so $\alpha \succ_{\pi'} \beta$ holds.
Similarly,
if $\beta \succ_{\pi} \alpha$ then $\beta \succ_{\pi'} \alpha$,
which, using the fact that $\succ_\pi$ and $\succ_{\pi'}$ are both weak orders,
is equivalent to:
if $\alpha \not\succceq_{\pi} \beta$ then $\alpha \not\succceq_{\pi'} \beta$,
which thus implies:
if $\alpha \succceq_{\pi'} \beta$ then $\alpha \succceq_{\pi} \beta$.
\end{proof}

%\commentph
The result below is a key property of composition,
which will be important later.
%
%\commentph
\begin{lemma}
\label{le:composition-basic}
Let $\pi,\pi'\in\Glex$.
If $\alpha \succceq_\pi \beta$ and $\alpha \succceq_{\pi'} \beta$
then $\alpha \succceq_{\pi\circ\pi'} \beta$.
If $\alpha \succceq_\pi \beta$ and $\alpha \succ_{\pi'} \beta$
then $\alpha \succ_{\pi\circ\pi'} \beta$
and $\alpha \succ_{\pi'\circ\pi} \beta$.
Suppose now that $\alpha \equiv_\pi \beta$. Then
$\alpha \succceq_{\pi'} \beta$ $\iff$ $\alpha \succceq_{\pi\circ\pi'} \beta$;
and
$\alpha \succ_{\pi'} \beta$ $\iff$ $\alpha \succ_{\pi\circ\pi'} \beta$.
\end{lemma}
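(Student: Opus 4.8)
The plan is to prove the third statement first, since the first two then reduce to it quickly via the monotonicity property (Lemma~\ref{le:extends-pi-basic}). Throughout I will use two facts that are immediate from the definition of composition: $\pi\circ\pi'$ extends or equals $\pi$ (it begins with $\pi$), and $\pi'\circ\pi$ extends or equals $\pi'$.

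For the third statement, assume $\alpha\equiv_\pi\beta$, so that $\alpha$ and $\beta$ agree on every variable in $V_\pi$. I would write $\pi\circ\pi'$ as $\pi$ followed by $\pi''$, where $\pi''=(\pi')^{\downaminus V_\pi}$ is $\pi'$ with its $V_\pi$-variables deleted. Since $\alpha,\beta$ agree on all of $V_\pi$, the prefix $\pi$ yields only ties on the pair $(\alpha,\beta)$, so both $\alpha\succceq_{\pi\circ\pi'}\beta$ and $\alpha\succ_{\pi\circ\pi'}\beta$ are determined entirely by the tail $\pi''$; that is, $\succceq_{\pi\circ\pi'}$ and $\succceq_{\pi''}$ agree on $(\alpha,\beta)$, and likewise for $\succ$. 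It then remains to show that $\pi''$ and $\pi'$ compare $(\alpha,\beta)$ identically. The variables deleted in forming $\pi''$ are precisely those in $V_\pi\cap V_{\pi'}$, on all of which $\alpha$ and $\beta$ agree; deleting variables on which $\alpha,\beta$ agree cannot change the first position at which they differ, and therefore leaves the comparison of $(\alpha,\beta)$ unchanged. Combining these gives $\alpha\succceq_{\pi'}\beta\iff\alpha\succceq_{\pi\circ\pi'}\beta$ and $\alpha\succ_{\pi'}\beta\iff\alpha\succ_{\pi\circ\pi'}\beta$.

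For the first two statements I would split on the two cases contained in $\alpha\succceq_\pi\beta$: either $\alpha\succ_\pi\beta$ or $\alpha\equiv_\pi\beta$. If $\alpha\succ_\pi\beta$, then since $\pi\circ\pi'$ extends $\pi$ (or trivially equals it), Lemma~\ref{le:extends-pi-basic} gives $\alpha\succ_{\pi\circ\pi'}\beta$; this yields $\alpha\succceq_{\pi\circ\pi'}\beta$, settling the first statement in this case, and also the first conclusion of the second statement. If instead $\alpha\equiv_\pi\beta$, I apply the third statement: from $\alpha\succceq_{\pi'}\beta$ I obtain $\alpha\succceq_{\pi\circ\pi'}\beta$ (first statement), and from $\alpha\succ_{\pi'}\beta$ I obtain $\alpha\succ_{\pi\circ\pi'}\beta$ (second statement). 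The remaining conclusion $\alpha\succ_{\pi'\circ\pi}\beta$ of the second statement needs no case split: it follows directly from $\alpha\succ_{\pi'}\beta$ and Lemma~\ref{le:extends-pi-basic}, since $\pi'\circ\pi$ extends or equals $\pi'$.

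The main obstacle I anticipate is the claim in the third statement that deleting tie-variables does not move the first position of disagreement between $\alpha$ and $\beta$. Making this precise means checking both the case where $\alpha$ and $\beta$ differ somewhere in $V_{\pi'}$, so that clause (ii) of the definition of $\succceq$ fires at the same variable for both $\pi'$ and $\pi''$, and the case where they agree throughout $V_{\pi'}$, so that clause (i) holds for both. This verification is routine but is where the argument must be carried out carefully.
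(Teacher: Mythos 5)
Your proof is correct and follows essentially the same route as the paper's: both arguments reduce everything to the case split $\alpha\succ_\pi\beta$ versus $\alpha\equiv_\pi\beta$, handle the strict case via the monotonicity of extension (Lemma~\ref{le:extends-pi-basic}), and rest on the same key observation that when $\alpha\equiv_\pi\beta$ the first variable of disagreement under $\pi\circ\pi'$ coincides with that under $\pi'$. The only difference is organizational: you prove the biconditional third statement first and directly (via deletion of tie-variables), whereas the paper proves the one-directional implications first and obtains the reverse directions at the end by applying them to $(\beta,\alpha)$ and using totality of $\succceq_\pi$.
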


\begin{proof}
%
%\commentphstart
Since $\pi'\circ\pi$ extends $\pi'$,
we have, by Lemma~\ref{le:extends-pi-basic},
that (i)  $\alpha \succ_{\pi'} \beta$
implies  $\alpha \succ_{\pi'\circ\pi} \beta$.
Similarly, (ii) if $\alpha \succ_{\pi} \beta$
then $\alpha \succ_{\pi\circ\pi'} \beta$,
and thus also $\alpha \succceq_{\pi\circ\pi'} \beta$.
We also have
(iii) if $\alpha \equiv_{\pi} \beta$ and $\alpha \succ_{\pi'} \beta$
then $\alpha \succ_{\pi\circ\pi'} \beta$.
This is because
 $\alpha$ and $\beta$ differ on some variable in $\pi\circ\pi'$,
and the first such variable pair $(X, \ge)$ appears in
$\pi'$; and $X$ is the earliest variable in $\pi'$ that $\alpha$
and $\beta$ differ on, so $\alpha \succ_{\pi'} \beta$ implies that
 $\alpha(X) > \beta(X)$ and thus, $\alpha \succ_{\pi\circ\pi'} \beta$.

%\commentph
Suppose that $\alpha \succceq_\pi \beta$ and $\alpha \succ_{\pi'} \beta$.
Then, by (i), $\alpha \succ_{\pi'\circ\pi} \beta$.
If  $\alpha \succ_{\pi} \beta$ then $\alpha \succ_{\pi\circ\pi'} \beta$, by (ii).
So, we only need to deal with the case where $\alpha \equiv_{\pi} \beta$.
Then
$\alpha \succ_{\pi\circ\pi'} \beta$ follows by (iii).

%\commentph
Suppose that $\alpha \succceq_\pi \beta$ and $\alpha \succceq_{\pi'} \beta$.
If $\alpha \succ_{\pi} \beta$ or $\alpha \succ_{\pi'} \beta$ then, by the above arguments,
we have $\alpha \succ_{\pi\circ\pi'} \beta$,
and thus also $\alpha \succceq_{\pi\circ\pi'} \beta$.
So, it just remains to deal with the case where
$\alpha \equiv_\pi \beta$ and $\alpha \equiv_{\pi'} \beta$.
But then $\alpha(V_\pi) = \beta(V_\pi)$
and $\alpha(V_{\pi'}) = \beta(V_{\pi'})$
and so, because $V_{\pi\circ\pi'} = V_\pi \cup V_{\pi'}$,
we have $\alpha(V_{\pi\circ\pi'}) = \beta(V_{\pi\circ\pi'})$,
proving $\alpha \equiv_{\pi\circ\pi'} \beta$
and hence $\alpha \succceq_{\pi\circ\pi'} \beta$.

%\commentphend
%LAST PART done later:
Assume now that $\alpha \equiv_\pi \beta$,
so that $\alpha \succceq_\pi \beta$ and $\beta \succceq_\pi \alpha$.
By the earlier parts we have
 $\alpha \succceq_{\pi'} \beta$ $\Rightarrow$ $\alpha \succceq_{\pi\circ\pi'} \beta$;
 and $\alpha \succ_{\pi'} \beta$ $\Rightarrow$ $\alpha \succ_{\pi\circ\pi'} \beta$.
Since $\beta \succceq_\pi \alpha$,
we also have if $\alpha \not\succceq_{\pi'} \beta$, i.e.,
$\beta \succ_{\pi'} \alpha$, then $\beta \succ_{\pi\circ\pi'} \alpha$,
i.e., $\alpha \not\succceq_{\pi\circ\pi'} \beta$.
Similarly we have $\alpha \not\succ_{\pi'} \beta$ implying
$\alpha \not\succ_{\pi\circ\pi'} \beta$.
\end{proof}

\section{Lexicographic Inference and Strongly Compositional Preference Statements}
\label{sec:lex-inf-strong-compos}

In this section we define \emph{lexicographic inference},
which can also be expressed in terms of (lexicographic) consistency.
We define a notion of \emph{strong compositionality},
and show that
this property
 enables a greedy algorithm for checking consistency, and hence for preference inference.
More specifically, a preference statement $\phi$ is strongly compositional if
 the composition of two lex models satisfies $\phi$ whenever the second and some extension of the first satisfy $\phi$.

Throughout this section, we assume some language $\calL$, and satisfaction relation $\models\, \subseteq  \Glex\times\calL$.
Here, the language $\calL$ is a set of preference statements and the relation $\models$ describes when a lex model satisfies a statement.
For example, we can consider the simple language $\calL^{O} = \{\alpha \geq \beta \mid \alpha,\beta \in \und{V}\}$ with the satisfaction relation $\pi \models \alpha \geq \beta \Leftrightarrow \alpha \succceq_\pi \beta$ as described in the previous section.
We extend the satisfaction relation to subsets $\Gamma$ of $\calL$ in the usual way:
for $\Gamma\subseteq\calL$,
define $\pi \models \Gamma$ $\iff$
$\pi\models \phi$ for each $\phi\in\Gamma$.
If $\pi \models\Gamma$ then we say that \emph{$\pi$ satisfies $\Gamma$},
or $\pi$ is a \emph{[lexicographic] model} of $\Gamma$
(and similarly, for a single preference statement $\phi$).
We say that $\Gamma$ is \emph{consistent},
if there exists
some $\pi\in\Glex$ satisfying $\Gamma$;
otherwise, $\Gamma$ is \emph{inconsistent}.
We will sometimes use negations of preference statements:
when necessary we can always extend the language $\calL$
and the relation $\models$,
by defining $\pi \models \neg\phi$ $\iff$ $\pi\not\models\phi$, when $\phi\in\calL$.

\ssk
A key problem is to determine if a set $\Gamma\subseteq\calL$ of preference statements is consistent.
We will derive methods for determining this in Section~\ref{subsec:checking-cons}.

\subsubsection*{Some Basic Preference Statements}

For outcomes $\alpha, \beta\in\outc$, we can define preference statements $\gralphabeta$ and $\alpha>\beta$,
 where we define
$\pi \modelslex \gralphabeta$ $\iff$ $\alpha \succceq_\pi \beta$,
and
$\pi \modelslex \alpha>\beta$ $\iff$ $\alpha \succ_\pi \beta$.
We also define preference statement $\alpha\equiv\beta$, with
$\pi\modelslex \alpha\equiv\beta$ $\iff$ $\alpha \equiv_\pi \beta$,
which is if and only if
$\pi\modelslex\gralphabeta$
and $\pi\modelslex\beta\ge\alpha$.

We have that $\pi\models \gralphabeta$
$\iff$ $\pi\not\models\beta>\alpha$,
i.e., $\pi\models \neg(\beta>\alpha)$.
So $\gralphabeta$ and $\neg(\beta>\alpha)$ are equivalent preference statements,
in that they are satisfied by exactly the same set of lex models.
Similarly,
$\alpha>\beta$ and $\neg(\beta\ge\alpha)$ are equivalent preference statements.

Consider the example of lex models $\pi,\pi'$ and outcomes $\alpha,\beta,\gamma$ as before.
Then $\pi \models \alpha \geq \beta$ and thus $\pi\not\models\beta>\alpha$.
Also, $\pi' \modelslex \alpha\equiv\gamma$, i.e., $\pi' \modelslex \alpha \geq \gamma$ and $\pi' \modelslex\gamma\ge\alpha$.
%\commentph
%\begin{example} %[continued]
%\label{ex: continued-basic-statements}
%Consider the flight connections from Example~\ref{ex: composition-and-extension}.
%Let $\alpha = (\mathit{day}, \mathit{KLM}, \mathit{economy})$ and $\beta = (\mathit{day}, \mathit{LAN}, \mathit{business})$.
%Then the model $\pi = (\mathit{time}, \mathit{night} > \mathit{day})$ satisfies $\pi\modelslex \alpha\equiv\beta$.
%The extension
%$\pi' = (\mathit{time}, \mathit{night} > \mathit{day}),
%(\mathit{class}, \mathit{economy} > \mathit{business})
%$
%of $\pi$ satisfies $\pi' \modelslex \gralphabeta$,
%and in fact $\pi' \modelslex \alpha>\beta$.
%Thus, $\pi\not\models\beta>\alpha$.
%\end{example}

\subsection{Lexicographic Inference}
\label{subsec:lex-inference}

We define the lexicographic inference relation
$\Gamma \modelslex \phi$ $\iff$
$\pi\models\phi$ for all $\pi \in \Glex$ such that $\pi\models\Gamma$.
For instance, we have, for $\Gamma\subseteq\calL$ and $\alpha, \beta\in\outc$,
$\Gamma\modelslex \gralphabeta$ $\iff$
$\alpha \succceq_\pi \beta$ holds for all $\pi \in \Glex$ such that $\pi\models\Gamma$.
Similarly,
$\Gamma\modelslex \alpha>\beta$ $\iff$
$\alpha \succ_\pi \beta$ holds for all $\pi \in \Glex$ such that $\pi\models\Gamma$.

Lexicographic inference can be reduced to checking consistency,
since we have
$\Gamma \modelslex \phi$ $\iff$
$\Gamma\cup\set{\neg\phi}$ is inconsistent;
and
$\Gamma \modelslex \neg\phi$ $\iff$
$\Gamma\cup\set{\phi}$ is inconsistent.
Thus,
$\Gamma\modelslex \gralphabeta$ $\iff$
$\Gamma\cup\set{\beta>\alpha}$ is inconsistent;
and $\Gamma\modelslex \alpha>\beta$ $\iff$
$\Gamma\cup\set{\beta\ge\alpha}$ is inconsistent.

Consider flight connections as before.
Let $\Gamma$ be the set of statements $\{\alpha > \beta, \beta \geq \gamma\}$ with $\alpha, \beta, \gamma$ defined as in the previous section.
Then the lex model $\pi$ satisfies $\Gamma$, $\pi \models \Gamma$, and thus $\Gamma$ is consistent.
In fact, the only preference models that satisfy $\Gamma$ are:
$\pi = (\mathit{airline}, \mathit{KLM} > \mathit{LAN}), (\mathit{time}, \mathit{day} > \mathit{night})$, $\pi'' = (\mathit{airline}, \mathit{KLM} > \mathit{LAN}), (\mathit{class}, \mathit{economy} > \mathit{business})$,
and any extension of either of these two models.
Thus for $\delta = (\mathit{LAN}, \mathit{night}, \mathit{business})$,
$\Gamma \models \gamma > \delta$.

\subsubsection*{The Induced Relation $\modelsstar$
and Maximal Models}
\label{subsec:modelsstar}

From the relation $\models$ we also define the derived relation $\modelsstar$
as follows, where $\pi\in\Glex$ and $\phi\in\calL$:
$\pi \modelsstar \phi$ if and only if
there exists $\pi' \in\Glex$ either extending or equalling $\pi$
such that $\pi'\models\phi$.
Thus, $\pi \modelsstar \phi$ holds either if
$\pi$ satisfies $\phi$ or some extension of $\pi$ satisfies $\phi$.
We extend the relation to sets of statements in the usual way:
for $\Gamma\subseteq\calL$,
define $\pi \modelsstar \Gamma$ if and only if
$\pi \modelsstar \phi$ holds for every $\phi\in\Gamma$.
The following lemma follows easily from the definitions.

\begin{lemma}
\label{le:modelsstar-basic}
Let $\pi, \pi'\in\Glex$
and $\Gamma\subseteq\calL$.
\begin{itemize}
  \item[(i)]
  $\pi\models\Gamma$ $\Rightarrow$ $\pi\modelsstar\Gamma$.
  \item[(ii)] Suppose that $\pi'$ extends $\pi$.
  Then
  $\pi'\modelsstar\Gamma$ $\Rightarrow$ $\pi\modelsstar\Gamma$.
\end{itemize}
\end{lemma}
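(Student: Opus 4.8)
The plan is to prove both parts directly from the definition of $\modelsstar$, which is essentially a one-step unfolding of quantifiers, combined with the transitivity of the extension relation $\sqsupseteq$ and the monotonicity established in Lemma~\ref{le:extends-pi-basic}.

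\textbf{Part (i).} First I would observe that this is immediate from the reflexive part of the definition of $\modelsstar$. Suppose $\pi \models \Gamma$. To show $\pi \modelsstar \Gamma$, I must show $\pi \modelsstar \phi$ for every $\phi \in \Gamma$. Fix any such $\phi$; since $\pi \models \Gamma$ we have $\pi \models \phi$. Now the witness in the definition of $\pi \modelsstar \phi$ can be taken to be $\pi$ itself, because $\pi$ equals $\pi$ (the definition allows $\pi'$ to equal $\pi$), and $\pi \models \phi$. Hence $\pi \modelsstar \phi$, and since $\phi$ was arbitrary, $\pi \modelsstar \Gamma$.

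\textbf{Part (ii).} Here I would argue statement-by-statement using transitivity of the ``extends or equals'' relation. Assume $\pi'$ extends $\pi$ and that $\pi' \modelsstar \Gamma$. Fix $\phi \in \Gamma$; I must produce a lex model extending or equalling $\pi$ that satisfies $\phi$. Since $\pi' \modelsstar \phi$, there is some $\pi''$ with $\pi'' \sqsupseteq \pi'$ and $\pi'' \models \phi$. The key step is that $\pi'' \sqsupseteq \pi' \sqsupset \pi$ implies $\pi'' \sqsupseteq \pi$: extending or equalling is transitive, which follows from the fact that the ``begins with'' relation on sequences is transitive (equivalently, from the characterisation $\pi'' = \pi' \circ \rho$ and $\pi' = \pi \circ \sigma$ giving $\pi'' = \pi \circ \sigma \circ \rho$ via associativity, Lemma~\ref{le:circ-associative}). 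Thus $\pi''$ extends or equals $\pi$ and satisfies $\phi$, so $\pi \modelsstar \phi$. As $\phi \in \Gamma$ was arbitrary, $\pi \modelsstar \Gamma$.

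I do not expect any genuine obstacle here, since the lemma is flagged in the text as following easily from the definitions; the only point requiring a moment's care is the transitivity of $\sqsupseteq$ used in Part~(ii), which should be stated explicitly rather than taken silently for granted. Notably, neither part needs the monotonicity content of Lemma~\ref{le:extends-pi-basic} about the induced orders $\succceq_\pi$ — that lemma concerns how the \emph{derived relations} behave under extension, whereas $\modelsstar$ is defined purely through existence of an extending model, so the argument stays at the level of the satisfaction relation $\models$ and the structure of lex-model sequences.
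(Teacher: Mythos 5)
Your proof is correct and follows the same route as the paper: part (i) is immediate from the reflexive clause of the definition of $\modelsstar$, and part (ii) takes the witness $\pi''\sqsupseteq\pi'$ with $\pi''\models\phi$ and passes to $\pi''\sqsupseteq\pi$ by transitivity of the extension relation. Your explicit justification of that transitivity (via the sequence-prefix characterisation or associativity of $\circ$) is a small addition the paper leaves implicit, but the argument is otherwise the same.
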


\begin{proof}
%\commentphstartend
(i) follows immediately from the definition of $\modelsstar$.
Regarding (ii), $\pi'\modelsstar\phi$ implies
that there exists $\pi''$ with $\pi''\sqsupseteq\pi'$
and $\pi''\models\phi$. Then, $\pi''\sqsupseteq\pi$,
and thus, $\pi\modelsstar\phi$.
The result then follows.
\end{proof}

\begin{definition}
We say that $\pi\in\Glex$ is a \emph{maximal model of $\Gamma$}
if $\pi \models \Gamma$ and for all lex models $\pi'$
extending $\pi$ we have $\pi' \not\models\Gamma$.
\end{definition}

\noindent
Clearly, for $\pi\in\Glex$ such that $\pi\models\Gamma$,
either $\pi$ is a maximal model of $\Gamma$
or there exists a maximal model of $\Gamma$ that extends $\pi$.
Thus, if $\Gamma$ is consistent there exists a maximal model of $\Gamma$.
Analogously, we define a maximal $\modelsstar$-model of $\Gamma$ to
be an element $\pi\in\Glex$
such that (i) $\pi\modelsstar\Gamma$
and (ii) there does not exist $\pi'$ extending $\pi$
with $\pi'\modelsstar\Gamma$.

Consider preference statements $\Gamma$ relating to
flight connections $\alpha, \beta, \gamma$ as before.
The lex model $(\mathit{airline}, \mathit{KLM} > \mathit{LAN})$ satisfies
$(\mathit{airline}, \mathit{KLM} > \mathit{LAN}) \modelsstar \Gamma$.
The model $(\mathit{airline}, \mathit{KLM} > \mathit{LAN}),(\mathit{time}, \mathit{day} > \mathit{night}),(\mathit{class}, \mathit{business} > \mathit{economy})$ is a maximal model of $\Gamma$ (and a maximal $\modelsstar$-model of $\Gamma$).
%\commentph
%\begin{example} %[continued]
%\label{ex:continued-basic-statements-further}
%Consider flight connections as in Example~\ref{ex: composition-and-extension}.
%Assume a user states the following preferences:
%\begin{tabbing}
%	$\Gamma = \{$ 	\= $(\mathit{day}, \mathit{KLM}, \mathit{economy})\;\;$ \= $\geq (\mathit{night}, \mathit{LAN}, \mathit{business})$, \\
%				\> $(\mathit{night}, \mathit{KLM}, \mathit{business})$ \> $> (\mathit{day}, \mathit{KLM}, \mathit{economy})$$\}$.
%\end{tabbing}
%Then, $(\mathit{airline}, \mathit{KLM} > \mathit{LAN}) \modelsstar\Gamma$.
%In fact, only extensions $\pi \sqsupset (\mathit{airline}, \mathit{KLM} > \mathit{LAN})$ satisfy $\pi\models\Gamma$.
%For example, the model $(\mathit{airline}, \mathit{KLM} > \mathit{LAN}),(\mathit{class}, \mathit{business} > \mathit{economy}),(\mathit{time}, \mathit{day} > \mathit{night})$ is a maximal model of $\Gamma$.
%Also, $\Gamma \models (\mathit{night}, \mathit{LAN}, \mathit{business}) > (\mathit{day}, \mathit{LAN}, \mathit{economy})$.
%\end{example}

\subsection{Compositional and Strongly Compositional Preference Statements}
\label{subsec:compos}

We formulate a pair of properties of preference statements that
have strong implications regarding lexicographic inference and optimality.

\begin{definition}[(Strongly) Compositional]
Let $\phi\in\calL$.
We say that $\phi$ is \emph{compositional} if
for all $\pi,\pi'\in\Glex$,

\ssk
$\pi\models\phi$ and $\pi'\models\phi$ implies $\pi\circ\pi'\models\phi$.

\ssk
\noindent
We say that
$\phi$ is \emph{strongly compositional} if
for all $\pi,\pi'\in\Glex$,

\ssk
$\pi\modelsstar\phi$ and $\pi'\models\phi$ implies $\pi\circ\pi'\models\phi$.

\ssk
\noindent
For $\Gamma\subseteq\calL$,
we define $\Gamma$ to be \emph{compositional} if
every element of $\Gamma$ is compositional.
Similarly, we say that $\Gamma$ is \emph{strongly compositional} if
every element of $\Gamma$ is strongly compositional.
\end{definition}

%\commentph
\noindent
Note that if $\phi\in\calL$ is inconsistent
(i.e., there exists no $\pi\in\Glex$ with $\pi\models\phi$)
then $\phi$ is trivially (strongly) compositional.

\subsubsection*{Instances of Strongly Compositional Statements}

Although being strongly compositional might appear to be quite a restrictive assumption,
it turns out that it is satisfied by many natural preference statements, as illustrated by the next proposition.
First, we give a lemma which, roughly speaking, states that the property of being [strongly] compositional is closed under conjunction.

\begin{lemma}
\label{le:compos-conjunction}
Let $\Gamma\subseteq\calL$   and let $\psi\in\calL$.
Suppose that $\psi$ is such that for all $\pi\in\Glex$,
$\pi\models\psi$ $\iff$ $\pi\models\Gamma$.
If $\Gamma$ is compositional then $\psi$ is compositional.
If $\Gamma$ is strongly compositional then $\psi$ is strongly compositional
and, for all $\pi\in\Glex$,
[$\pi\modelsstar\psi$ $\iff$ $\pi\modelsstar\Gamma$ and $\Gamma$ is consistent].
\end{lemma}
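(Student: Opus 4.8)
The plan is to prove the three claims of Lemma~\ref{le:compos-conjunction} in turn, working directly from the definitions of compositional, strongly compositional, and $\modelsstar$, and exploiting the hypothesis that $\psi$ is equivalent to the conjunction $\Gamma$ in the sense that $\pi\models\psi \iff \pi\models\Gamma$ for all $\pi$. The first claim (ordinary compositionality) is the warm-up: suppose $\Gamma$ is compositional and take $\pi,\pi'\in\Glex$ with $\pi\models\psi$ and $\pi'\models\psi$. By the equivalence, $\pi\models\Gamma$ and $\pi'\models\Gamma$, so $\pi\models\phi$ and $\pi'\models\phi$ for every $\phi\in\Gamma$; compositionality of each such $\phi$ yields $\pi\circ\pi'\models\phi$ for every $\phi\in\Gamma$, i.e.\ $\pi\circ\pi'\models\Gamma$, and translating back gives $\pi\circ\pi'\models\psi$.

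For the strong version I would argue similarly but carefully track the $\modelsstar$ hypothesis. Assume $\Gamma$ is strongly compositional and take $\pi,\pi'$ with $\pi\modelsstar\psi$ and $\pi'\models\psi$. The key observation I need is that $\pi\modelsstar\psi$ forces $\pi\modelsstar\phi$ for each $\phi\in\Gamma$: unwinding the definition, $\pi\modelsstar\psi$ means some $\pi''\sqsupseteq\pi$ has $\pi''\models\psi$, hence $\pi''\models\Gamma$ by the equivalence, hence $\pi''\models\phi$ for each $\phi\in\Gamma$, which witnesses $\pi\modelsstar\phi$. Meanwhile $\pi'\models\psi$ gives $\pi'\models\phi$ for each $\phi\in\Gamma$. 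Now strong compositionality of each $\phi$ applies and delivers $\pi\circ\pi'\models\phi$ for every $\phi\in\Gamma$, i.e.\ $\pi\circ\pi'\models\Gamma$, so $\pi\circ\pi'\models\psi$. This establishes that $\psi$ is strongly compositional.

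The last conjunct, the characterisation $[\pi\modelsstar\psi \iff \pi\modelsstar\Gamma$ and $\Gamma$ is consistent$]$, is where the real content lies and I expect it to be the main obstacle, since it is not a formal triviality: the forward direction needs $\Gamma$'s consistency to be \emph{derived}, not assumed. For the ($\Leftarrow$) direction, $\pi\modelsstar\Gamma$ already gives, for each $\phi\in\Gamma$, an extension of $\pi$ satisfying $\phi$; the trick is to combine these local witnesses into a single extension of $\pi$ satisfying all of $\Gamma$ at once, so that $\pi\modelsstar\psi$ holds. I would do this by composing the witnesses using strong compositionality: if $\pi\modelsstar\phi$ for each $\phi$ in a finite $\Gamma=\{\phi_1,\dots,\phi_m\}$, then picking an extension $\pi_1\sqsupseteq\pi$ with $\pi_1\models\phi_1$ and composing successively, one shows inductively (using that composing on the right by something containing a $\phi_j$-witness preserves the earlier satisfied statements, via strong compositionality and Lemma~\ref{le:modelsstar-basic}) that some single $\sigma\sqsupseteq\pi$ satisfies all of $\Gamma$; then $\sigma\models\psi$ gives $\pi\modelsstar\psi$, and the existence of such a $\sigma$ simultaneously certifies that $\Gamma$ (hence $\psi$) is consistent.

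For the ($\Rightarrow$) direction of the characterisation, $\pi\modelsstar\psi$ yields $\pi''\sqsupseteq\pi$ with $\pi''\models\psi$, so $\pi''\models\Gamma$; this single $\pi''$ witnesses $\pi\modelsstar\phi$ for every $\phi\in\Gamma$ (giving $\pi\modelsstar\Gamma$) and simultaneously shows $\Gamma$ is consistent. The delicate point to get right throughout is the inductive composition step in ($\Leftarrow$): I must verify that after composing in a witness for $\phi_{j}$, the statements $\phi_1,\dots,\phi_{j-1}$ already secured remain satisfied. This is exactly the role of strong compositionality ($\pi\modelsstar\phi$ and $\pi'\models\phi$ implies $\pi\circ\pi'\models\phi$) together with the monotonicity of $\modelsstar$ under extension from Lemma~\ref{le:modelsstar-basic}(ii), and associativity of $\circ$ from Lemma~\ref{le:circ-associative}; assembling these correctly, rather than any single hard estimate, is the crux of the argument.
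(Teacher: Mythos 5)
Most of your proposal is sound and matches the paper's argument: the plain compositionality claim, the deduction that $\psi$ is strongly compositional once the $\modelsstar$-characterisation is available, and the ($\Rightarrow$) direction of that characterisation are all essentially the paper's proof. The problem is the ($\Leftarrow$) direction, which you rightly identify as the crux but then attack with a strategy that cannot succeed. You propose to combine the per-statement witnesses (an extension $\pi_j\sqsupseteq\pi$ with $\pi_j\models\phi_j$ for each $\phi_j\in\Gamma$) into a single extension $\sigma\sqsupseteq\pi$ satisfying all of $\Gamma$ by composing successively, and you remark that this ``simultaneously certifies that $\Gamma$ is consistent.'' That would prove that $\pi\modelsstar\Gamma$ alone implies consistency of $\Gamma$, which is false: take $\Gamma=\set{\alpha>\beta,\ \beta>\alpha}$ for outcomes $\alpha\not=\beta$. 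Each statement is individually consistent and strongly compositional, so $\set{}\modelsstar\Gamma$ by Lemma~\ref{le:modelsstar-consistency}, yet $\Gamma$ is inconsistent and no common extension exists. Concretely, your induction breaks in two places: (a) after passing from $\pi$ to $\pi_1$ you would need $\pi_1\modelsstar\phi_2$, but Lemma~\ref{le:modelsstar-basic}(ii) only transfers $\modelsstar$ \emph{downwards}, from an extension to the model it extends, so $\pi\modelsstar\phi_2$ tells you nothing about $\pi_1$; and (b) to preserve $\phi_1$ when composing in a witness $\pi'_2$ for $\phi_2$, strong compositionality of $\phi_1$ would require $\pi'_2\models\phi_1$, which there is no reason to have.

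The repair --- and the paper's actual route --- is to use the consistency hypothesis, which your argument never invokes. Consistency of $\Gamma$ supplies a \emph{single} $\pi'$ with $\pi'\models\Gamma$, i.e.\ $\pi'\models\phi$ for every $\phi\in\Gamma$ simultaneously. Since $\pi\modelsstar\Gamma$ gives $\pi\modelsstar\phi$ for each $\phi\in\Gamma$, strong compositionality of each $\phi$ yields $\pi\circ\pi'\models\phi$ for all $\phi\in\Gamma$ in one step (the third bullet of Lemma~\ref{le:strong-compos-basic-Gamma}); hence $\pi\circ\pi'\models\Gamma$, so $\pi\circ\pi'\models\psi$, and $\pi\circ\pi'\sqsupseteq\pi$ gives $\pi\modelsstar\psi$. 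No induction and no combination of separate witnesses is needed.
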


\begin{proof}
%\commentphstart
Suppose that $\Gamma$ is compositional.
Consider any $\pi,\pi'\in\Glex$ with $\pi,\pi'\models\psi$.
Then, $\pi,\pi'\models\Gamma$, which, since $\Gamma$ is compositional,
implies $\pi\circ\pi'\models\Gamma$, and thus,
$\pi\circ\pi'\models\psi$, showing that $\psi$ is compositional.

%\commentph
Now assume, for the remainder of the proof of this result, that $\Gamma$ is strongly compositional.
First suppose also that $\pi\modelsstar\psi$.
Then there exists $\pi''$ such that $\pi''\sqsupseteq\pi$
and $\pi''\models\psi$,
which implies that $\psi$ is consistent, and so $\Gamma$ is consistent.
Also, for all $\phi\in\Gamma$ we have $\pi''\models\phi$ and hence, $\pi\modelsstar\phi$. We have shown that $\pi\modelsstar\Gamma$.

%\commentph
For the converse, we assume that $\pi\modelsstar\Gamma$ and $\Gamma$ is consistent.
Consistency of $\Gamma$ implies that there exists $\pi'\in\Glex$
such that $\pi'\models\Gamma$.
Since $\Gamma$ is strongly compositional,
$\pi\circ\pi'\models\Gamma$,
and so, $\pi\circ\pi'\models\psi$
 which implies that $\pi\modelsstar\psi$,
 because $\pi\circ\pi'\sqsupseteq\pi$.

%\commentphend
We will finally show that $\psi$ is strongly compositional.
Consider any $\pi,\pi'\in\Glex$ with $\pi\modelsstar\psi$ and $\pi'\models\psi$.
We have, by the earlier part, that $\pi\modelsstar\Gamma$,
and also $\pi'\models\Gamma$.
$\Gamma$ being strongly compositional implies $\pi\circ\pi'\models\Gamma$
and thus, $\pi\circ\pi'\models\psi$, proving that $\psi$ is strongly compositional.
\end{proof}

\begin{proposition}
\label{pr:strong-compos-alpha-beta-calR}
For any outcomes, $\alpha,\beta\in\und{V}$,
statements $\alpha\ge\beta$ and $\alpha>\beta$
are strongly compositional,
and,
for $\pi\in\Glex$,
$\pi\modelsstar\alpha\ge\beta$
$\iff$ $\pi\models\alpha\ge\beta$.
Also, if $\alpha\not=\beta$,
$\pi\modelsstar\alpha>\beta$ $\iff$ $\pi\models\alpha\ge\beta$.

In addition, given $\RR\subseteq\und{V}\times\und{V}$,
let $\phi^\RR$ be some statement satisfying:
$\pi\models\phi^\RR$ if and only if
$\succceq_\pi \supseteq\RR$.
Then
$\phi^\RR$ is strongly compositional,
and, for any $\pi\in\Glex$, we have $\pi\modelsstar\phi^\RR$ $\iff$
$\pi\models\phi^\RR$
\end{proposition}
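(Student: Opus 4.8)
The plan is to establish first the two $\modelsstar$ characterisations, since the strong compositionality claims then follow quickly from Lemma~\ref{le:composition-basic}. Both characterisations rest on Lemma~\ref{le:extends-pi-basic}. For $\alpha\ge\beta$, the direction $\pi\models\alpha\ge\beta\Rightarrow\pi\modelsstar\alpha\ge\beta$ is immediate from Lemma~\ref{le:modelsstar-basic}(i). Conversely, if $\pi\modelsstar\alpha\ge\beta$ then some $\pi'\sqsupseteq\pi$ satisfies $\alpha\succceq_{\pi'}\beta$, and Lemma~\ref{le:extends-pi-basic} gives $\alpha\succceq_\pi\beta$, i.e.\ $\pi\models\alpha\ge\beta$. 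For $\alpha>\beta$ with $\alpha\ne\beta$, the forward direction is analogous: a witness $\pi'\sqsupseteq\pi$ with $\alpha\succ_{\pi'}\beta$ yields $\alpha\succceq_{\pi'}\beta$ and hence $\alpha\succceq_\pi\beta$ by Lemma~\ref{le:extends-pi-basic}.

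The reverse direction for $\alpha>\beta$ is the one place needing an explicit construction. Assuming $\pi\models\alpha\ge\beta$, i.e.\ $\alpha\succceq_\pi\beta$, I would split on the two cases from the definition of $\succceq_\pi$. If $\alpha\succ_\pi\beta$, then $\pi$ itself witnesses $\pi\modelsstar\alpha>\beta$. If instead $\alpha\equiv_\pi\beta$, then $\alpha$ and $\beta$ agree on all of $V_\pi$; since $\alpha\ne\beta$ they must differ on some variable $X\notin V_\pi$, so I extend $\pi$ by appending the pair $(X,\ge_X)$ with the value order chosen so that $\alpha(X)>_X\beta(X)$. The resulting $\pi'$ extends $\pi$ and satisfies $\alpha\succ_{\pi'}\beta$, giving $\pi\modelsstar\alpha>\beta$.

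With these in hand, strong compositionality is direct. For $\alpha\ge\beta$: given $\pi\modelsstar\alpha\ge\beta$ and $\pi'\models\alpha\ge\beta$, the first characterisation gives $\alpha\succceq_\pi\beta$, and together with $\alpha\succceq_{\pi'}\beta$, the first clause of Lemma~\ref{le:composition-basic} yields $\alpha\succceq_{\pi\circ\pi'}\beta$. For $\alpha>\beta$: if $\alpha=\beta$ the statement is inconsistent and hence trivially strongly compositional, so assume $\alpha\ne\beta$; then $\pi\modelsstar\alpha>\beta$ forces $\alpha\succceq_\pi\beta$ (apply Lemma~\ref{le:extends-pi-basic} to the strict witness, as above), while $\pi'\models\alpha>\beta$ gives $\alpha\succ_{\pi'}\beta$, so the second clause of Lemma~\ref{le:composition-basic} delivers $\alpha\succ_{\pi\circ\pi'}\beta$.

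Finally, for $\phi^\RR$ I would observe that $\phi^\RR$ is equivalent to the conjunction $\Gamma=\set{\alpha\ge\beta\st (\alpha,\beta)\in\RR}$, since $\succceq_\pi\supseteq\RR$ holds exactly when $\alpha\succceq_\pi\beta$ for every $(\alpha,\beta)\in\RR$. Each conjunct is strongly compositional by the above, so $\Gamma$ is strongly compositional and Lemma~\ref{le:compos-conjunction} makes $\phi^\RR$ strongly compositional. The equivalence $\pi\modelsstar\phi^\RR\iff\pi\models\phi^\RR$ follows most cleanly by arguing directly: the $\Leftarrow$ direction is Lemma~\ref{le:modelsstar-basic}(i), and for $\Rightarrow$ a witness $\pi'\sqsupseteq\pi$ with $\succceq_{\pi'}\supseteq\RR$ gives, applying Lemma~\ref{le:extends-pi-basic} to each pair $(\alpha,\beta)\in\RR$, that $\succceq_\pi\supseteq\RR$, i.e.\ $\pi\models\phi^\RR$. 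The only real obstacle is the extension construction in the $\alpha>\beta$ equivalence case; everything else is a bookkeeping application of the composition and monotonicity lemmas.
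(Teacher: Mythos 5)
Your proof is correct, and at its core it uses the same ingredients as the paper: Lemma~\ref{le:extends-pi-basic} for the $\modelsstar$ characterisations, Lemma~\ref{le:composition-basic} for the composition steps, and Lemma~\ref{le:compos-conjunction} to pass from the pairwise statements to $\phi^\RR$ (the paper itself remarks that the second half follows from the first via that lemma). The difference is one of packaging. The paper proves the first half (as Lemma~\ref{le:strong-compos-alpha-beta}) by routing through its general machinery for decreasing statements: $\alpha\ge\beta$ is decreasing, hence compositional implies strongly compositional by Lemma~\ref{le:compos-strong-iff-compos}, and $\alpha>\beta$ is handled by Proposition~\ref{pr:strong-compos-equiv-form-phibar} with $\phibar=\alpha\ge\beta$ as a decreasing relaxation; in particular, the witness for $\pi\modelsstar\alpha>\beta$ is obtained there by composing $\pi$ with an arbitrary model of $\alpha>\beta$. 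You instead produce the witness explicitly, by case-splitting on $\alpha\succ_\pi\beta$ versus $\alpha\equiv_\pi\beta$ and, in the latter case, appending a single pair $(X,\ge_X)$ with $\alpha(X)>_X\beta(X)$ for some $X$ on which $\alpha$ and $\beta$ differ. Your construction is valid and more self-contained; the paper's route buys reusability, since Proposition~\ref{pr:strong-compos-equiv-form-phibar} is applied again later (e.g.\ to the strict statements of $\calLpq$ in Proposition~\ref{pr:phi-calR-strict-strongly-compos}). You also correctly dispose of the degenerate case $\alpha=\beta$ exactly as the paper does, by noting that an inconsistent statement is vacuously strongly compositional.
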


%\commentphnot
\noindent
Here, the second half of Proposition~\ref{pr:strong-compos-alpha-beta-calR} follows from the first using Lemma~\ref{le:compos-conjunction}.
\noindent
Further important examples of strongly compositional statements will be discussed in Section~\ref{sec:pref-languages-calLpq}.

\subsubsection*{A Characterisation of Consistency
for Strongly Compositional $\Gamma$}

The definitions immediately imply the following.

\begin{lemma}
\label{le:strong-compos-basic-Gamma}
Let $\Gamma\subseteq\calL$, and let $\pi,\pi'\in\Glex$.
\begin{itemize}
  \item If $\Gamma$ is strongly compositional then it is compositional.
  \item If $\Gamma$ is compositional then $\pi\models\Gamma$ and $\pi'\models\Gamma$ imply $\pi\circ\pi'\models\Gamma$.
  \item If $\Gamma$ is strongly compositional then
   $\pi\modelsstar\Gamma$ and $\pi'\models\Gamma$ imply $\pi\circ\pi'\models\Gamma$.
\end{itemize}
\end{lemma}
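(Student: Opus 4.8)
The plan is to prove the three bullets of Lemma~\ref{le:strong-compos-basic-Gamma} in order, reducing each claim about the set $\Gamma$ to the corresponding claim about individual statements in $\Gamma$, so that the definitions of (strong) compositionality do essentially all the work.

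For the first bullet I would argue at the level of a single statement $\phi\in\Gamma$. Suppose $\phi$ is strongly compositional; I want to show $\phi$ is compositional. Take any $\pi,\pi'\in\Glex$ with $\pi\models\phi$ and $\pi'\models\phi$. By Lemma~\ref{le:modelsstar-basic}(i), $\pi\models\phi$ implies $\pi\modelsstar\phi$. Now apply strong compositionality to the pair $(\pi,\pi')$: from $\pi\modelsstar\phi$ and $\pi'\models\phi$ we get $\pi\circ\pi'\models\phi$, which is exactly the compositionality condition. Since this holds for every $\phi\in\Gamma$, the set $\Gamma$ is compositional whenever it is strongly compositional. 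The only ingredient beyond the definitions is the trivial implication $\models\,\Rightarrow\,\modelsstar$.

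For the second and third bullets the pattern is identical and purely bookkeeping: I unfold the definition $\pi\models\Gamma \iff (\forall\phi\in\Gamma)\,\pi\models\phi$ (and similarly for $\modelsstar$). For the second bullet, assume $\Gamma$ compositional and $\pi\models\Gamma$, $\pi'\models\Gamma$. Fix an arbitrary $\phi\in\Gamma$; then $\pi\models\phi$ and $\pi'\models\phi$, and compositionality of $\phi$ yields $\pi\circ\pi'\models\phi$. As $\phi$ was arbitrary, $\pi\circ\pi'\models\Gamma$. For the third bullet, assume $\Gamma$ strongly compositional with $\pi\modelsstar\Gamma$ and $\pi'\models\Gamma$; fix $\phi\in\Gamma$, so $\pi\modelsstar\phi$ and $\pi'\models\phi$, and strong compositionality of $\phi$ gives $\pi\circ\pi'\models\phi$, whence $\pi\circ\pi'\models\Gamma$.

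I do not anticipate a genuine obstacle here: the statement is flagged in the text as following immediately from the definitions, and indeed each part is a one-line quantifier manipulation once the set-level relations are expanded pointwise over $\phi\in\Gamma$. The only thing to be careful about is invoking Lemma~\ref{le:modelsstar-basic}(i) in the first bullet rather than re-deriving $\models\,\Rightarrow\,\modelsstar$ from scratch, and keeping the direction of the strong-compositionality hypothesis straight (it is the \emph{first} argument that need only satisfy $\modelsstar$, while the second must satisfy $\models$). Everything else is a direct appeal to the relevant definition.
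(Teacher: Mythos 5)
Your proposal is correct and matches the paper, which gives no explicit proof and simply notes that the definitions immediately imply the result; your pointwise unfolding over $\phi\in\Gamma$, together with the observation that $\pi\models\phi$ implies $\pi\modelsstar\phi$ (Lemma~\ref{le:modelsstar-basic}(i)) for the first bullet, is exactly the intended argument.
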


\noindent
The last point implies that, for strongly compositional and consistent $\Gamma$,
if $\pi\modelsstar\Gamma$ then there exists a model of
$\Gamma$ either equalling or extending $\pi$.
In fact we have:

\begin{lemma}
\label{le:Gamma-strong-compos-equivalence}
Suppose that $\Gamma$ is strongly compositional,
and let $\pi$ be an element of $\Glex$.
Then [there exists $\pi'\in\Glex$ with $\pi'\sqsupseteq\pi$ and $\pi'\models\Gamma$]
if and only if [$\Gamma$ is consistent and $\pi\modelsstar\Gamma$].
\end{lemma}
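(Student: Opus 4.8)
The statement is a biconditional, so I would prove the two directions separately. The forward direction is the easy one; the reverse direction is where the real content lies, and it is essentially an immediate application of the last bullet of Lemma~\ref{le:strong-compos-basic-Gamma} together with a consistency witness.

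\textbf{Forward direction.} Suppose there exists $\pi'\in\Glex$ with $\pi'\sqsupseteq\pi$ and $\pi'\models\Gamma$. Consistency of $\Gamma$ is then immediate, since $\pi'$ is a model of $\Gamma$. For $\pi\modelsstar\Gamma$, observe that for each $\phi\in\Gamma$ we have $\pi'\models\phi$ with $\pi'\sqsupseteq\pi$, which is exactly the definition of $\pi\modelsstar\phi$; quantifying over all $\phi\in\Gamma$ gives $\pi\modelsstar\Gamma$. (This half does not even use strong compositionality.)

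\textbf{Reverse direction.} Assume $\Gamma$ is consistent and $\pi\modelsstar\Gamma$. Consistency yields some $\pi'\in\Glex$ with $\pi'\models\Gamma$. Now I would invoke the third bullet of Lemma~\ref{le:strong-compos-basic-Gamma}: since $\Gamma$ is strongly compositional, $\pi\modelsstar\Gamma$ and $\pi'\models\Gamma$ together imply $\pi\circ\pi'\models\Gamma$. Finally, since $V_{\pi\circ\pi'}=V_\pi\cup V_{\pi'}$ and $\pi\circ\pi'$ is by construction $\pi$ followed by the remaining pairs of $\pi'$, we have $\pi\circ\pi'\sqsupseteq\pi$ (by the Definition of $\sqsupseteq$, taking $\pi''$ to be the appended tail). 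Thus $\pi'':=\pi\circ\pi'$ is the desired extension-or-equal of $\pi$ satisfying $\Gamma$, completing this direction.

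\textbf{Main obstacle.} There is no genuine obstacle; the proof is a short assembly of already-established facts. The only point requiring a moment of care is recognising that the witness for the reverse direction should be $\pi\circ\pi'$ rather than $\pi'$ itself, and confirming $\pi\circ\pi'\sqsupseteq\pi$ from the composition definition — this is exactly the ``if'' clause already noted after Lemma~\ref{le:strong-compos-basic-Gamma}, which I would simply make explicit. I would therefore keep the write-up to a few lines, citing Lemma~\ref{le:strong-compos-basic-Gamma} for the strongly compositional step and the Definition of extension for $\pi\circ\pi'\sqsupseteq\pi$.
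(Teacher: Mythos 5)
Your proof is correct and follows essentially the same route as the paper's: the forward direction unwinds the definition of $\modelsstar$, and the reverse direction takes a model $\pi'$ of $\Gamma$, applies the third bullet of Lemma~\ref{le:strong-compos-basic-Gamma} to get $\pi\circ\pi'\models\Gamma$, and notes $\pi\circ\pi'\sqsupseteq\pi$. No differences of substance.
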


\begin{proof}
%\commentphstart
$\Rightarrow$: First assume that there exists $\pi'\in\Glex$ with $\pi'\sqsupseteq\pi$ and $\pi'\models\Gamma$. Clearly, $\Gamma$ is consistent.
Consider any $\phi\in\Gamma$. We have $\pi'\models\phi$,
which implies $\pi\modelsstar\phi$. Therefore, $\pi\modelsstar\Gamma$.

%\commentphend
$\Leftarrow$:
Assume that $\Gamma$ is consistent and $\pi\modelsstar\Gamma$.
Then there exists $\pi'\in\Glex$ with $\pi'\models\Gamma$.
Since $\Gamma$ is strongly compositional,
$\pi\circ\pi'\models\Gamma$, by
Lemma~\ref{le:strong-compos-basic-Gamma},
%which implies that $\Gamma$ is consistent and $\pi\modelsstar\Gamma$,
and we have $\pi\circ\pi'\sqsupseteq\pi$.
\end{proof}

\noindent
Suppose that
$\pi'$ satisfies strongly compositional $\Gamma$
 and that
$\pi$ is a maximal $\modelsstar$-model of $\Gamma$.
 Since we have $\pi\circ\pi'\modelsstar\Gamma$
(because we have $\pi\circ\pi'\models\Gamma$)
and $\pi\circ\pi'\sqsupseteq\pi$,
then the maximality of $\pi$ implies that
$\pi = \pi\circ\pi'$, and thus, $\pi\models\Gamma$.
We also have that $V_\pi \supseteq V_{\pi'}$.
This implies the following theorem and corollary:

\begin{theorem}
\label{th:max-modelsstar-consistent}
If $\Gamma$ is consistent and strongly compositional
then every maximal $\modelsstar$-model of $\Gamma$
satisfies $\Gamma$.
Also, if $\pi$ and $\pi'$ are two maximal $\modelsstar$-models of $\Gamma$
then $V_\pi = V_{\pi'}$.
\end{theorem}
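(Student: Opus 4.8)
The plan is to prove both claims of Theorem~\ref{th:max-modelsstar-consistent} by exploiting the remark immediately preceding it, which already does the essential work; my task is to organise that remark into a clean argument and then derive the equal-variable-set conclusion.

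For the first claim, let $\pi$ be any maximal $\modelsstar$-model of $\Gamma$. Since $\Gamma$ is consistent, there exists $\pi'\in\Glex$ with $\pi'\models\Gamma$. By the third bullet of Lemma~\ref{le:strong-compos-basic-Gamma} (strong compositionality), from $\pi\modelsstar\Gamma$ and $\pi'\models\Gamma$ we obtain $\pi\circ\pi'\models\Gamma$. By Lemma~\ref{le:modelsstar-basic}(i), this gives $\pi\circ\pi'\modelsstar\Gamma$; and by the definition of composition we have $\pi\circ\pi'\sqsupseteq\pi$. Now the maximality of $\pi$ as a $\modelsstar$-model forbids any strict extension of $\pi$ that still $\modelsstar$-satisfies $\Gamma$, so we must have $\pi\circ\pi' = \pi$. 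Combined with $\pi\circ\pi'\models\Gamma$, this yields $\pi\models\Gamma$, proving the first claim.

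For the second claim, I would take two maximal $\modelsstar$-models $\pi$ and $\pi'$ of $\Gamma$. By the first claim, both satisfy $\Gamma$, hence in particular $\pi\models\Gamma$ and $\pi'\models\Gamma$. Applying the equality $\pi\circ\pi' = \pi$ established above (using $\pi\modelsstar\Gamma$ and $\pi'\models\Gamma$), and recalling that $V_{\pi\circ\pi'} = V_\pi\cup V_{\pi'}$, we get $V_\pi = V_\pi\cup V_{\pi'}$, so $V_{\pi'}\subseteq V_\pi$. By the symmetric argument, swapping the roles of $\pi$ and $\pi'$ (using $\pi'\modelsstar\Gamma$ and $\pi\models\Gamma$ to get $\pi'\circ\pi = \pi'$ and hence $V_{\pi'} = V_{\pi'}\cup V_\pi$), we obtain $V_\pi\subseteq V_{\pi'}$. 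Together these give $V_\pi = V_{\pi'}$.

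I expect no serious obstacle here, since the preceding remark supplies the key identity $\pi = \pi\circ\pi'$; the only point requiring care is the justification that $V_{\pi\circ\pi'} = V_\pi\cup V_{\pi'}$, which is stated explicitly when composition is defined, together with the symmetric reuse of the first claim's argument for the second model. The mild subtlety worth stating cleanly is why maximality forces $\pi\circ\pi' = \pi$ rather than merely $\pi\circ\pi'\sqsupseteq\pi$: it is precisely because $\pi\circ\pi'$ is a $\modelsstar$-model extending or equalling the maximal $\modelsstar$-model $\pi$, so it cannot be a proper extension.
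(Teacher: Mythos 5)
Your proof is correct and follows essentially the same route as the paper, which establishes the theorem via the remark immediately preceding it: strong compositionality gives $\pi\circ\pi'\models\Gamma$, maximality of the $\modelsstar$-model forces $\pi\circ\pi'=\pi$, and the identity $V_{\pi\circ\pi'}=V_\pi\cup V_{\pi'}$ applied symmetrically yields the equality of variable sets. Your write-up just makes the paper's implicit steps (the appeal to Lemma~\ref{le:strong-compos-basic-Gamma} and the symmetric application for the second claim) explicit.
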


\begin{corollary}
\label{cor:max-modelsstar-consistent}
Let $\pi$ be any maximal $\modelsstar$-model of strongly compositional $\Gamma$.
Then $\Gamma$ is consistent if and only if $\pi\models\Gamma$.
\end{corollary}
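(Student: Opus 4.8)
The plan is to prove Corollary~\ref{cor:max-modelsstar-consistent} directly from Theorem~\ref{th:max-modelsstar-consistent} together with the basic definitions of maximal $\modelsstar$-model and consistency, handling the two directions of the biconditional separately. Let $\pi$ be any maximal $\modelsstar$-model of strongly compositional $\Gamma$; note that such a $\pi$ exists in particular when $\Gamma$ is consistent, but the statement quantifies over a given $\pi$ so I need not worry about existence here.

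For the forward direction, I would assume $\Gamma$ is consistent. Since $\Gamma$ is both consistent and strongly compositional, Theorem~\ref{th:max-modelsstar-consistent} applies and tells me that \emph{every} maximal $\modelsstar$-model of $\Gamma$ satisfies $\Gamma$. As $\pi$ is by hypothesis a maximal $\modelsstar$-model of $\Gamma$, I conclude immediately that $\pi\models\Gamma$. This direction is essentially a one-line invocation of the theorem.

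For the converse, I would assume $\pi\models\Gamma$. By the very definition of consistency --- namely that $\Gamma$ is consistent if and only if there exists some $\pi\in\Glex$ with $\pi\models\Gamma$ --- the existence of this particular $\pi$ satisfying $\Gamma$ witnesses that $\Gamma$ is consistent. So this direction follows directly from the definition and requires no appeal to strong compositionality at all.

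I do not anticipate a genuine obstacle: the substance of the argument has already been carried out in Theorem~\ref{th:max-modelsstar-consistent}, and the corollary merely repackages it as a consistency test for a single fixed maximal $\modelsstar$-model. The only point to state carefully is that the forward direction is where strong compositionality is used (through the theorem), whereas the converse is a triviality from the definition of consistency; making clear which hypothesis powers which direction is the one place worth a sentence of care.
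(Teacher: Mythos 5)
Your proof is correct and matches the paper's intent exactly: the paper presents the corollary as an immediate consequence of Theorem~\ref{th:max-modelsstar-consistent}, with the forward direction given by the theorem and the converse being trivial from the definition of consistency. Your explicit separation of the two directions, and your observation that strong compositionality is only needed for the forward implication, is precisely the intended argument.
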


\subsection{Checking Consistency of a Set of Strongly Compositional Preferences}
\label{subsec:checking-cons}

Corollary \ref{cor:max-modelsstar-consistent}
shows
that we can test consistency of strongly compositional $\Gamma$,
by finding any  maximal $\modelsstar$-model $\pi$ of it,
and checking if $\pi$ satisfies $\Gamma$.
In this section we show how a  maximal $\modelsstar$-model of $\Gamma$
can be constructed iteratively, starting from the empty model $\set{}$.
We first check if $\set{} \modelsstar \Gamma$.
%**This
The following lemma shows that this
is equivalent to
$\Gamma$ being
$\modelsstar$-consistent,
 which is a very weak property,
i.e., it just requires that each element of $\Gamma$ is (individually) consistent.

\begin{lemma}
\label{le:modelsstar-consistency}
Let $\Gamma\subseteq\calL$.
$\Gamma$ is $\modelsstar$-consistent (i.e., there exists some $\pi\in\Glex$
with $\pi\modelsstar\Gamma$)
if and only if, for each $\phi\in\Gamma$, $\phi$ is consistent.
This also holds if and only if
$\set{} \modelsstar \Gamma$.
\end{lemma}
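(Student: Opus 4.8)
The plan is to prove the chain of equivalences in Lemma~\ref{le:modelsstar-consistency} by establishing a cycle of implications among the three statements: (a) $\Gamma$ is $\modelsstar$-consistent; (b) each $\phi\in\Gamma$ is individually consistent; and (c) $\set{}\modelsstar\Gamma$. The key observation driving everything is that the empty model $\set{}$ is extended (or equalled) by \emph{every} lex model, since any $\pi$ begins with the empty sequence; thus $\set{}\sqsupseteq$ nothing nontrivial, but rather every $\pi$ satisfies $\pi\sqsupseteq\set{}$. This makes $\set{}$ the natural ``bottom'' from which $\modelsstar$-satisfaction is easiest to check.

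First I would show (c) $\Rightarrow$ (a) trivially: $\set{}\in\Glex$, so if $\set{}\modelsstar\Gamma$ then there exists a witness, namely $\pi=\set{}$, establishing $\modelsstar$-consistency. Next, (a) $\Rightarrow$ (b): if some $\pi$ satisfies $\pi\modelsstar\Gamma$, then by definition $\pi\modelsstar\phi$ for each $\phi\in\Gamma$, which means there exists $\pi'\sqsupseteq\pi$ with $\pi'\models\phi$; this $\pi'$ witnesses that $\phi$ is consistent. The substantive direction is (b) $\Rightarrow$ (c). Here I would argue pointwise: fix $\phi\in\Gamma$ and assume $\phi$ is consistent, so there exists $\pi_\phi\in\Glex$ with $\pi_\phi\models\phi$. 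Since every lex model extends or equals $\set{}$, in particular $\pi_\phi\sqsupseteq\set{}$, so by the definition of $\modelsstar$ we get $\set{}\modelsstar\phi$. As this holds for each $\phi\in\Gamma$, we conclude $\set{}\modelsstar\Gamma$.

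The only point requiring care is the claim that $\pi_\phi\sqsupseteq\set{}$ for every $\pi_\phi$, i.e., that the empty sequence is a prefix of every lex model. This follows immediately from the \textbf{Definition} of extension: $\pi'\sqsupseteq\set{}$ holds iff there exists $\pi''$ with $\pi'=\set{}\circ\pi''$, and by the definition of composition $\set{}\circ\pi''=\pi''$, so taking $\pi''=\pi_\phi$ gives $\pi_\phi=\set{}\circ\pi_\phi$, hence $\pi_\phi\sqsupseteq\set{}$. I do not expect any genuine obstacle here; the lemma is deliberately a routine unpacking of the $\modelsstar$ definition, and its role is to justify initialising the greedy maximal-$\modelsstar$-model construction at $\set{}$. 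The main conceptual content is simply recognising that $\modelsstar$-satisfaction at a statement level reduces exactly to individual consistency, because the existential quantifier over extensions in the definition of $\modelsstar$ ranges over \emph{all} models when evaluated at $\set{}$.
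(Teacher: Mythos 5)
Your proposal is correct and follows essentially the same route as the paper's proof: the forward direction unpacks the definition of $\modelsstar$ to get a witness for each $\phi$, and the converse uses the fact that every lex model extends or equals $\set{}$ to conclude $\set{}\modelsstar\Gamma$. Organising the argument as a three-way cycle of implications rather than two biconditional halves is a purely cosmetic difference.
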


\begin{proof}
%\commentphstart
$\Rightarrow$:
Suppose that $\Gamma$ is $\modelsstar$-consistent,
and that $\pi\modelsstar\Gamma$.
Then, for any $\phi\in\Gamma$ there exists some
$\pi'\in\Glex$ with $\pi'\sqsupseteq\pi$
and $\pi'\models\phi$, which
implies that $\phi$ is consistent.

%\commentphend
$\Leftarrow$:
Consider any $\phi\in\Gamma$.
Since $\phi$ is consistent, there exists some $\pi\in\Glex$
with $\pi\models\phi$, which entails
that $\set{} \modelsstar \phi$,
where  $\set{}$ is the empty lexicographic order.
Then $\set{}\modelsstar\Gamma$, so $\Gamma$ is $\modelsstar$-consistent.
\end{proof}

\noindent
We say that $\pi'$ \emph{minimally extends} $\pi$ if $\pi'$ extends $\pi$
and $|V_{\pi'}| = |V_\pi|+1$, i.e., $\pi'$ involves one more variable than $\pi$.
The next lemma implies that in building a maximal $\modelsstar$-model,
we only need consider adding one variable at a time.

\begin{lemma}
\label{le:immediately-extends}
$\pi$ is a maximal $\modelsstar$-model of $\Gamma$ if and only if
$\pi\modelsstar \Gamma$ and
there exists no $\pi'$  minimally extending $\pi$
such that $\pi'\modelsstar \Gamma$.
\end{lemma}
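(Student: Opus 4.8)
The plan is to prove both directions of the biconditional, with the forward direction being essentially trivial and the reverse direction carrying the real content. Recall the statement characterises a maximal $\modelsstar$-model $\pi$ of $\Gamma$ by the condition that $\pi\modelsstar\Gamma$ and no \emph{minimal} extension $\pi'$ of $\pi$ satisfies $\pi'\modelsstar\Gamma$; this is a weakening of the definition, which forbids \emph{all} proper extensions $\pi'$ with $\pi'\modelsstar\Gamma$, not merely the minimal ones.

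For the forward direction ($\Rightarrow$), I would simply observe that if $\pi$ is a maximal $\modelsstar$-model, then by definition $\pi\modelsstar\Gamma$ and there is no $\pi'$ extending $\pi$ at all with $\pi'\modelsstar\Gamma$; in particular there is no \emph{minimally} extending such $\pi'$, since any minimal extension is an extension. This direction needs no compositionality or other hypotheses.

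The reverse direction ($\Leftarrow$) is where the work lies. I would argue by contraposition: suppose $\pi$ satisfies $\pi\modelsstar\Gamma$ but is \emph{not} a maximal $\modelsstar$-model; I must produce a \emph{minimal} extension witnessing the failure. Since $\pi$ is not maximal, there exists some $\pi'$ with $\pi'\sqsupset\pi$ (a proper, but possibly non-minimal, extension) and $\pi'\modelsstar\Gamma$. The goal is to descend from $\pi'$ to a minimal extension of $\pi$ that still $\modelsstar$-satisfies $\Gamma$. Here I would invoke Lemma~\ref{le:modelsstar-basic}(ii): if $\pi_2$ extends $\pi_1$ and $\pi_2\modelsstar\Gamma$, then $\pi_1\modelsstar\Gamma$. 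Because $\pi'$ begins with $\pi$ and involves strictly more variables, I can let $\pi''$ be the initial segment of $\pi'$ consisting of exactly the first $|V_\pi|+1$ pairs — this is well-defined since $|V_{\pi'}|\ge|V_\pi|+1$. Then $\pi''$ minimally extends $\pi$, and $\pi'$ extends (or equals) $\pi''$, so Lemma~\ref{le:modelsstar-basic}(ii) applied to the pair $(\pi'',\pi')$ gives $\pi''\modelsstar\Gamma$. Thus $\pi''$ is the desired minimal extension, contradicting the right-hand condition and completing the contrapositive.

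The main obstacle — really the only non-routine point — is the monotonicity step: ensuring that passing from the longer extension $\pi'$ down to its length-$(|V_\pi|+1)$ initial segment $\pi''$ preserves the $\modelsstar$ relation. This is exactly what Lemma~\ref{le:modelsstar-basic}(ii) delivers, since $\pi'$ extends $\pi''$ and the lemma guarantees $\modelsstar\Gamma$ propagates downward along extensions. I would be careful to note that the truncated prefix $\pi''$ is genuinely a lex model (a valid initial segment of the sequence $\pi'$) and that it properly extends $\pi$ by exactly one variable, so that it qualifies as a minimal extension in the precise sense defined just before the lemma.
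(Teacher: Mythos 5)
Your proof is correct and follows essentially the same route as the paper: both directions hinge on the observation that any proper extension of $\pi$ passes through a minimal extension, together with the downward monotonicity of $\modelsstar$ from Lemma~\ref{le:modelsstar-basic}(ii). The paper phrases the converse directly (no minimal extension works, hence no extension works) where you argue by contraposition, but this is the same argument.
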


\begin{proof}
%\commentphstartend
$\Rightarrow$ is obvious.
For the converse, assume that there exists no $\pi'$  minimally extending $\pi$
such that $\pi'\modelsstar \Gamma$;
 consider any $\pi''$ extending $\pi$.
Then there exists $\pi'$ that minimally extends $\pi$
and such that $\pi''$ extends or equals $\pi'$.
By the hypothesis, $\pi'\not\modelsstar\Gamma$,
which implies, using Lemma~\ref{le:modelsstar-basic}(ii),
that $\pi''\not\modelsstar\Gamma$, proving that
$\pi$ is a maximal $\modelsstar$-model of $\Gamma$.
\end{proof}

Starting with the empty model we  grow a
maximal $\modelsstar$-model of $\Gamma$,
%model $\modelsstar$-satisfying $\Gamma$,
by (iteratively) replacing the model with one minimally extending it and still $\modelsstar$-satisfying $\Gamma$,
if such a model exists. Otherwise, we have a maximal $\modelsstar$-model $\pi$ of $\Gamma$.
By Corollary~\ref{cor:max-modelsstar-consistent}, we can test if $\Gamma$ is consistent by checking $\pi\models\Gamma$.

\newcommand{\dd}{d}

%\commentph
We say that a \emph{satisfaction test} is a test of the form
$\pi \models \phi$ for some $\pi\in\Glex$ and $\phi\in\calL$;
a \emph{$\modelsstar$-satisfaction test} is a test of the form
$\pi\modelsstar \phi$.

%\commentph
Suppose that $\dd$ is an upper bound on the domain size of each variable,
i.e., for each $X\in V$, $|\und{X}| \le \dd$.
We first have to check that the empty model is a $\modelsstar$-model of $\Gamma$;
 this involves a number $|\Gamma|$ of $\modelsstar$-satisfaction tests.
At each iterative step, there are less than $n\dd!$ possible extensions $\pi$ of the current model $\pi'$ to an extra variable,
and there are at most $n$ iterative steps,
so there are less than $|\Gamma|n^2\dd!$ $\modelsstar$-satisfaction tests required in the whole iterative process.
Finally we have to check that the produced maximal $\modelsstar$-model $\pi$ of $\Gamma$ satisfies $\Gamma$,
which involves $|\Gamma|$ satisfaction tests.
In total we have at most $|\Gamma|$ satisfaction tests
and  $|\Gamma|n^2\dd!$ $\modelsstar$-satisfaction tests.

For bounded domain size, testing consistency of
strongly compositional
$\Gamma$
can be performed with polynomial number of
tests of the form
$\pi\models\phi$ or $\pi\modelsstar\phi$.
For this algorithm to be efficient we need that
 each preference statement
allows
efficient tests of $\pi\models\phi$ and $\pi\modelsstar\phi$.
In Section~\ref{sec:pref-languages-calLpq} we define languages that satisfy these properties.

%\commentph
\subsection{Equivalence w.r.t.~Compositional Statements}
\label{subsec:further-max-models}

%\commentph
The following result implies that
if $\Gamma$ is compositional then
two maximal models of $\Gamma$ involve exactly the same set of variables,
and if two outcomes are equivalent in one maximal model of $\Gamma$
then they are equivalent in all models of $\Gamma$.

%\commentph
\begin{proposition}
\label{pr:compos-max-models}
Assume that $\Gamma\subseteq\calL$ is consistent and compositional.
%Let $\pi,\pi'$ be maximal models of $\Gamma$.
%Then $V_\pi = V_{\pi'}$.
Let $\pi,\pi'$ be  models of $\Gamma$
with $\pi$ a maximal model of $\Gamma$.
Then, $V_\pi \supseteq V_{\pi'}$; and
 $V_\pi = V_{\pi'}$ if and only if $\pi'$ is a maximal model of $\Gamma$.
Also, if, for $\alpha,\beta\in\und{V}$,
we have $\alpha \equiv_\pi \beta$
then we have $\alpha \equiv_{\pi'} \beta$,
and in fact $\Gamma \models \alpha\equiv\beta$.
\end{proposition}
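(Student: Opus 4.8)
The plan is to prove the three assertions in order, relying throughout on the composition machinery of Lemma~\ref{le:composition-basic} and the second bullet of Lemma~\ref{le:strong-compos-basic-Gamma} (which gives $\pi\circ\pi'\models\Gamma$ whenever $\pi,\pi'\models\Gamma$ and $\Gamma$ is compositional). First I would establish the inclusion $V_\pi \supseteq V_{\pi'}$. Since $\pi$ and $\pi'$ both satisfy compositional $\Gamma$, we have $\pi\circ\pi'\models\Gamma$. Now $\pi\circ\pi'$ extends or equals $\pi$ (indeed $\pi\circ\pi'\sqsupseteq\pi$), so by maximality of $\pi$ we must have $\pi\circ\pi' = \pi$. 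Recalling from Section~\ref{sec:lex-models} that $V_{\pi\circ\pi'} = V_\pi \cup V_{\pi'}$, this forces $V_\pi \cup V_{\pi'} = V_\pi$, i.e.\ $V_{\pi'}\subseteq V_\pi$, giving the desired inclusion.

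Next I would handle the equivalence $V_\pi = V_{\pi'} \iff \pi'$ is a maximal model of $\Gamma$. For the backward direction, if $\pi'$ is also maximal then by the inclusion just proved (applied with the roles of $\pi$ and $\pi'$ swapped) we get both $V_\pi\supseteq V_{\pi'}$ and $V_{\pi'}\supseteq V_\pi$, hence $V_\pi = V_{\pi'}$. For the forward direction, suppose $V_\pi = V_{\pi'}$ but, for contradiction, $\pi'$ is not maximal; then some $\pi''$ strictly extends $\pi'$ with $\pi''\models\Gamma$, and strict extension gives $|V_{\pi''}| > |V_{\pi'}| = |V_\pi|$. But applying the inclusion result to the maximal model $\pi$ and the model $\pi''$ yields $V_\pi\supseteq V_{\pi''}$, contradicting $|V_{\pi''}| > |V_\pi|$. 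Hence $\pi'$ must be maximal.

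Finally I would prove the equivalence claim: if $\alpha\equiv_\pi\beta$ then $\Gamma\models\alpha\equiv\beta$ (which in particular yields $\alpha\equiv_{\pi'}\beta$ for the given model $\pi'$, and for every model of $\Gamma$). Take an arbitrary model $\rho\models\Gamma$; I must show $\alpha\equiv_\rho\beta$. By compositionality, $\pi\circ\rho\models\Gamma$, and by maximality of $\pi$ together with $\pi\circ\rho\sqsupseteq\pi$ we get $\pi\circ\rho = \pi$. Since $\alpha\equiv_\pi\beta$, we have $\alpha\equiv_{\pi\circ\rho}\beta$. Now I would invoke the last part of Lemma~\ref{le:composition-basic}: because $\alpha\equiv_\pi\beta$, it states that $\alpha\succceq_\rho\beta \iff \alpha\succceq_{\pi\circ\rho}\beta$, and the symmetric statement with $\alpha,\beta$ interchanged. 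Applying both directions (using $\alpha\equiv_{\pi\circ\rho}\beta$, i.e.\ $\alpha\succceq_{\pi\circ\rho}\beta$ and $\beta\succceq_{\pi\circ\rho}\alpha$) yields $\alpha\succceq_\rho\beta$ and $\beta\succceq_\rho\alpha$, that is $\alpha\equiv_\rho\beta$. As $\rho$ was an arbitrary model, $\Gamma\models\alpha\equiv\beta$ follows.

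The only real subtlety, and the step I would be most careful about, is the last one: I must make sure to apply Lemma~\ref{le:composition-basic} with the correct hypothesis $\alpha\equiv_\pi\beta$ (the lemma's equivalences require equivalence in the \emph{first} argument $\pi$ of the composition $\pi\circ\rho$), and to use it in both orderings to recover the full equivalence $\alpha\equiv_\rho\beta$ rather than just one inequality. Everything else is a routine application of maximality plus the identity $V_{\pi\circ\pi'} = V_\pi\cup V_{\pi'}$.
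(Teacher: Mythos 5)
Your proof is correct, and for the set-containment claims it is essentially identical to the paper's: compositionality gives $\pi\circ\pi'\models\Gamma$, maximality forces $\pi\circ\pi'=\pi$, and $V_{\pi\circ\pi'}=V_\pi\cup V_{\pi'}$ yields $V_{\pi'}\subseteq V_\pi$; the maximality equivalence is then handled by the same symmetric/extension argument. The only place you diverge is the final equivalence claim. The paper argues directly: $\alpha\equiv_\pi\beta$ means $\alpha(V_\pi)=\beta(V_\pi)$, and since $V_{\pi'}\subseteq V_\pi$ this immediately gives $\alpha(V_{\pi'})=\beta(V_{\pi'})$, i.e.\ $\alpha\equiv_{\pi'}\beta$, with no appeal to composition at all. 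You instead route through $\pi\circ\rho=\pi$ and the equivalence-preservation clause of Lemma~\ref{le:composition-basic} (applied in both orderings of $\alpha,\beta$). Your version is valid --- the hypothesis $\alpha\equiv_\pi\beta$ is exactly what that clause of the lemma requires, and using it symmetrically does recover $\alpha\equiv_\rho\beta$ --- but it is heavier than necessary: the projection characterisation $\alpha\equiv_\pi\beta\iff\alpha(V_\pi)=\beta(V_\pi)$ together with the already-established containment $V_{\pi'}\subseteq V_\pi$ does the job in one line, whereas your route re-invokes maximality for each model $\rho$.
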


\begin{proof}
%\commentphstart
Let $\pi$ be a  maximal model of $\Gamma$
and let $\pi'$ be any model of $\Gamma$.
Compositionality of $\Gamma$ implies that
$\pi\circ\pi' \models\Gamma$,
and thus, $\pi\circ\pi' = \pi$,
since $\pi$ is a maximal model of $\Gamma$,
and $\pi\circ\pi'\sqsupseteq\pi$.
Therefore, $V_\pi \cup V_{\pi'} = V_\pi$,
and hence, $V_{\pi'} \subseteq V_\pi$.
If $\pi'$ is a maximal model of $\Gamma$ then the same argument implies that
$V_{\pi'} \supseteq V_\pi$ and thus $V_{\pi'} = V_\pi$.
Conversely, if $V_{\pi'} = V_\pi$ then $\pi'$ is a maximal model of $\Gamma$,
since otherwise $\pi'$ could be extended to a maximal model $\pi''$ of $\Gamma$,
which would imply $V_{\pi'}$ being a strict subset of $V_{\pi''} = V_\pi = V_{\pi'}$.

%\commentphend
Assume that $\alpha \equiv_\pi \beta$.
Then $\alpha(V_\pi) = \beta(V_\pi)$,
and so $\alpha(V_{\pi'}) = \beta(V_{\pi'})$
and $\pi' \models \alpha\equiv\beta$.
Since $\pi'$ is arbitrary,
we have $\Gamma\models\alpha\equiv\beta$.
\end{proof}
%
%\commentph
\noindent
Proposition~\ref{pr:compos-max-models} justifies the next definition,
and implies the following lemma.

%\commentph
\begin{definition}
For consistent and compositional $\Gamma\subseteq\calL$,
we write $V^\Gamma$ for the set of variables appearing in any maximal model of $\Gamma$,
i.e., $V^\Gamma = V_\pi$, for any maximal model $\pi$ of $\Gamma$.
\end{definition}

%\commentph
\begin{lemma}
\label{le:equiv-compos-V-Gamma}
Consider any consistent and compositional $\Gamma\subseteq\calL$,
and any $\alpha,\beta\in\und{V}$.
Then $\Gamma\models\alpha\equiv\beta$ $\iff$ $\alpha(V^\Gamma) = \beta(V^\Gamma)$.
\end{lemma}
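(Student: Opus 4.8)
The plan is to prove the biconditional $\Gamma\models\alpha\equiv\beta \iff \alpha(V^\Gamma) = \beta(V^\Gamma)$ by fixing a maximal model $\pi$ of $\Gamma$ (which exists since $\Gamma$ is consistent) and using the fact that, by definition, $V^\Gamma = V_\pi$. The key observation is that the condition $\alpha(V^\Gamma) = \beta(V^\Gamma)$ is exactly the condition $\alpha(V_\pi) = \beta(V_\pi)$, which by the characterisation of $\equiv_\pi$ given in Section~\ref{sec:lex-models} is precisely $\alpha \equiv_\pi \beta$, i.e.\ $\pi\models\alpha\equiv\beta$.

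For the backward direction ($\Leftarrow$), I would assume $\alpha(V^\Gamma) = \beta(V^\Gamma)$, which gives $\alpha \equiv_\pi \beta$ for the chosen maximal model $\pi$. Then Proposition~\ref{pr:compos-max-models}, applied with this $\pi$ as the maximal model, directly yields $\Gamma \models \alpha\equiv\beta$: that proposition states exactly that $\alpha\equiv_\pi\beta$ for a maximal model forces $\Gamma\models\alpha\equiv\beta$. So this direction is essentially immediate from the earlier proposition.

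For the forward direction ($\Rightarrow$), I would assume $\Gamma\models\alpha\equiv\beta$, which by definition of lexicographic inference means $\pi'\models\alpha\equiv\beta$ for \emph{every} model $\pi'$ of $\Gamma$. In particular this applies to the fixed maximal model $\pi$, giving $\pi\models\alpha\equiv\beta$, i.e.\ $\alpha\equiv_\pi\beta$, which is $\alpha(V_\pi)=\beta(V_\pi)$, that is, $\alpha(V^\Gamma)=\beta(V^\Gamma)$. Here I rely on the fact that a consistent $\Gamma$ has at least one (maximal) model, so the universally-quantified inference statement has a witness to instantiate.

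The proof is short and the only subtlety to flag is the well-definedness of $V^\Gamma$: the notation presupposes that every maximal model of $\Gamma$ carries the same variable set, which is guaranteed by Proposition~\ref{pr:compos-max-models} (the $V_\pi = V_{\pi'}$ clause for maximal models) and justifies the preceding definition. Since I am free to pick any single maximal model $\pi$ and read off $V^\Gamma = V_\pi$, no genuine obstacle arises --- both directions collapse to instantiating $\models$ and $\equiv_\pi$ at this one model and invoking Proposition~\ref{pr:compos-max-models} for the harder ($\Leftarrow$) implication. The main thing to get right is bookkeeping: consistently translating between the three equivalent phrasings $\alpha\equiv_\pi\beta$, $\pi\models\alpha\equiv\beta$, and $\alpha(V_\pi)=\beta(V_\pi)$.
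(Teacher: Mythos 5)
Your proof is correct and follows essentially the same route as the paper: both directions reduce to instantiating at a maximal model $\pi$ with $V_\pi = V^\Gamma$ and appealing to Proposition~\ref{pr:compos-max-models}. The only cosmetic differences are that you invoke the final clause of that proposition directly for the $\Leftarrow$ direction (where the paper re-derives it via $V_{\pi'}\subseteq V^\Gamma$ for an arbitrary model $\pi'$) and prove the $\Rightarrow$ direction directly rather than by contrapositive.
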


\begin{proof}
%\commentphstart
Assume that $\alpha(V^\Gamma) \not= \beta(V^\Gamma)$.
Consider any maximal model $\pi$ of $\Gamma$.
Then,  $V_\pi = V^\Gamma$ so $\alpha$ and $\beta$ differ on some variable involved
in $\pi$, which implies that $\pi\not\models\alpha\equiv\beta$,
and thus, $\Gamma\not\models\alpha\equiv\beta$.

%\commentphend
To prove the converse, assume that $\alpha(V^\Gamma) = \beta(V^\Gamma)$.
Then for any $\pi\in\Glex$ with  $\pi\models\Gamma$,
we have $V_\pi \subseteq V^\Gamma$, by Proposition~\ref{pr:compos-max-models},
and thus, $\pi\models\alpha\equiv\beta$.
Therefore, $\Gamma\models\alpha\equiv\beta$.
\end{proof}

%\commentph
\subsection{Further Properties for Inference with
(Strongly) Compositional Statements}

%\commentph
When $\phi$ is strongly compositional
then there is a further simple composition property
just involving $\modelsstar$, as expressed by the following lemma.

%\commentph
\begin{lemma}
\label{le:modelsstar-basic-composition}
Let $\phi\in\calL$, and let $\pi,\pi'\in\Glex$.
If $\phi$ is strongly compositional then
$\pi\modelsstar\phi$ and $\pi'\modelsstar\phi$ implies $\pi\circ\pi'\modelsstar\phi$.
Also, if $\Gamma\subseteq\calL$ is strongly compositional then
$\pi\modelsstar\Gamma$ and $\pi'\modelsstar\Gamma$ implies $\pi\circ\pi'\modelsstar\Gamma$.
\end{lemma}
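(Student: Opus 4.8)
**The plan is to prove the single-statement case first, then lift it to sets of statements by a routine quantification over $\Gamma$.**

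For the single-statement claim, suppose $\phi$ is strongly compositional, and assume $\pi \modelsstar \phi$ and $\pi' \modelsstar \phi$. The goal is $\pi\circ\pi' \modelsstar \phi$, i.e., to exhibit some lex model extending or equalling $\pi\circ\pi'$ that satisfies $\phi$. First I would unpack $\pi' \modelsstar \phi$ using the definition of $\modelsstar$: there exists $\pi''\in\Glex$ with $\pi'' \sqsupseteq \pi'$ and $\pi'' \models \phi$. Now the idea is to feed $\pi''$, which genuinely satisfies $\phi$, into the strong compositionality of $\phi$, together with $\pi \modelsstar \phi$. Strong compositionality states that $\pi\modelsstar\phi$ and $\pi''\models\phi$ imply $\pi\circ\pi'' \models\phi$. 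So $\pi\circ\pi''$ is an honest model of $\phi$.

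It remains to check that $\pi\circ\pi''$ extends or equals $\pi\circ\pi'$, which would immediately give $\pi\circ\pi' \modelsstar \phi$ by definition. Here I would use that $\pi'' \sqsupseteq \pi'$, so by the extension characterisation we may write $\pi'' = \pi'\circ\rho$ for some $\rho\in\Glex$. Then, using associativity of $\circ$ (Lemma~\ref{le:circ-associative}), we have $\pi\circ\pi'' = \pi\circ(\pi'\circ\rho) = (\pi\circ\pi')\circ\rho$, and since $(\pi\circ\pi')\circ\rho \sqsupseteq \pi\circ\pi'$ (a composition always extends or equals its left argument), we conclude $\pi\circ\pi'' \sqsupseteq \pi\circ\pi'$. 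Combined with $\pi\circ\pi''\models\phi$, this witnesses $\pi\circ\pi' \modelsstar \phi$.

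The main obstacle, such as it is, is keeping straight the asymmetry in the definition of strong compositionality: it requires the \emph{second} argument to be a genuine model ($\pi'\models\phi$) while the first need only satisfy $\modelsstar$. This is exactly why I replace $\pi'$ by its genuine-model extension $\pi''$ before invoking the property, rather than trying to apply it to $\pi$ and $\pi'$ directly. For the set version, I would simply note that $\pi\modelsstar\Gamma$ and $\pi'\modelsstar\Gamma$ mean $\pi\modelsstar\phi$ and $\pi'\modelsstar\phi$ for \emph{every} $\phi\in\Gamma$; since each such $\phi$ is strongly compositional, the single-statement result gives $\pi\circ\pi'\modelsstar\phi$ for every $\phi\in\Gamma$, which is precisely $\pi\circ\pi'\modelsstar\Gamma$.
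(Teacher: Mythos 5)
Your proof is correct and follows essentially the same route as the paper's: unpack $\pi'\modelsstar\phi$ to obtain a genuine model $\pi''\sqsupseteq\pi'$ of $\phi$, apply strong compositionality to $\pi$ and $\pi''$, and observe that $\pi\circ\pi''\sqsupseteq\pi\circ\pi'$, with the set version following pointwise. The only difference is that you spell out, via associativity, why $\pi\circ\pi''$ extends $\pi\circ\pi'$, a step the paper merely asserts.
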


\begin{proof}
%\commentphstartend
Assume that $\pi\modelsstar\phi$ and $\pi'\modelsstar\phi$,
so there exists $\pi''$ with
$\pi''\sqsupseteq\pi'$ and $\pi''\models\phi$.
If $\phi$ is strongly compositional then
$\pi\circ\pi''\models\phi$, and thus,
$\pi\circ\pi'\modelsstar\phi$, since $\pi\circ\pi''\sqsupseteq\pi\circ\pi'$.
The second part follows immediately from the first.
\end{proof}

%\commentph
The following result
states that
if $\Gamma$ is strongly compositional
and if two outcomes are equivalent in one maximal $\modelsstar$-model of $\Gamma$
then they are equivalent in all $\modelsstar$-models of $\Gamma$.
Also, the maximal $\modelsstar$-models satisfy exactly the same elements of $\Gamma$.

%\commentph
\begin{proposition}
\label{pr:strong-compos-max-modelsstar}
Assume that $\Gamma\subseteq\calL$ is strongly compositional.
Let $\pi,\pi'$ be maximal $\modelsstar$-models of $\Gamma$.
Then, the following hold.
\begin{itemize}
 \item[(i)]  For every $\phi\in\Gamma$, $\pi\models\phi$ $\iff$ $\pi'\models\phi$.
 \item[(ii)] If $\Gamma$ is consistent then the set of maximal models of $\Gamma$
  is equal to the set of maximal $\modelsstar$-models of $\Gamma$.
\end{itemize}
\end{proposition}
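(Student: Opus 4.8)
The plan is to prove (i) first, with no appeal to consistency, and then deduce (ii). Fix two maximal $\modelsstar$-models $\pi,\pi'$ of $\Gamma$. The engine for (i) is the observation that composing two $\modelsstar$-models cannot escape a maximal one. Concretely, since $\pi\modelsstar\Gamma$ and $\pi'\modelsstar\Gamma$, Lemma~\ref{le:modelsstar-basic-composition} gives $\pi'\circ\pi\modelsstar\Gamma$; as $\pi'\circ\pi$ begins with $\pi'$ we have $\pi'\circ\pi\sqsupseteq\pi'$, so maximality of $\pi'$ forces $\pi'\circ\pi=\pi'$ (a strict extension would contradict maximality). Symmetrically $\pi\circ\pi'=\pi$, and in particular $V_\pi=V_{\pi'}$.

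With this reduction in hand I would establish (i) directly. Suppose $\phi\in\Gamma$ with $\pi\models\phi$. Since $\pi'$ is a $\modelsstar$-model we have $\pi'\modelsstar\phi$, and $\phi$ is strongly compositional, so applying strong compositionality with $\pi'$ in the $\modelsstar$-slot and $\pi$ in the $\models$-slot yields $\pi'\circ\pi\models\phi$. But $\pi'\circ\pi=\pi'$ by the previous paragraph, so $\pi'\models\phi$. Exchanging the roles of $\pi$ and $\pi'$ gives the converse implication, proving the biconditional.

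Next I would turn to (ii), assuming $\Gamma$ consistent, and prove the two set inclusions. For ``maximal $\modelsstar$-model $\Rightarrow$ maximal model'': if $\pi$ is a maximal $\modelsstar$-model then Theorem~\ref{th:max-modelsstar-consistent} gives $\pi\models\Gamma$; moreover $\pi$ must be a maximal model, since any strict extension $\pi''$ with $\pi''\models\Gamma$ would satisfy $\pi''\modelsstar\Gamma$ by Lemma~\ref{le:modelsstar-basic}(i), contradicting maximality of $\pi$ among $\modelsstar$-models. For the reverse inclusion, let $\pi$ be a maximal model; then $\pi\models\Gamma$ implies $\pi\modelsstar\Gamma$, so $\pi$ is a $\modelsstar$-model, and I claim it is maximal as such: if some $\pi''\sqsupset\pi$ had $\pi''\modelsstar\Gamma$, then consistency together with Lemma~\ref{le:Gamma-strong-compos-equivalence} would produce $\sigma\sqsupseteq\pi''$ with $\sigma\models\Gamma$; since $\sigma\sqsupseteq\pi''\sqsupset\pi$, this $\sigma$ is a strict extension of $\pi$ satisfying $\Gamma$, contradicting maximality of $\pi$ as a model.

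The routine verifications are the bookkeeping with the extension relation (transitivity of $\sqsupseteq$, and that a strict-then-nonstrict composition stays strict) and the matching of the two argument slots of strong compositionality. The one step that genuinely carries the argument is the reduction $\pi'\circ\pi=\pi'$ in the first paragraph: it is exactly here that maximality of a $\modelsstar$-model is converted into an identity of lex models, after which strong compositionality transports satisfaction of each $\phi$ from $\pi$ to $\pi'$. I would expect no real obstacle beyond stating this reduction cleanly and invoking (i) in the right direction, so that it applies even when $\Gamma$ is inconsistent, where the maximal $\modelsstar$-models still exist but need not be models.
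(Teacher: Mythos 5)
Your proposal is correct. Part (i) is essentially the paper's own argument: the paper likewise derives $\pi\circ\pi'\models\phi$ from strong compositionality and then concludes via the identity $\pi\circ\pi'=\pi$; you merely make explicit (via Lemma~\ref{le:modelsstar-basic-composition} and maximality) the step that justifies this identity, which the paper leaves implicit, and you correctly note that no consistency assumption is needed. Part (ii) is where you genuinely diverge. The paper proves the inclusion ``maximal $\modelsstar$-model $\Rightarrow$ maximal model'' by composing with an arbitrary model $\pi''$ of $\Gamma$, extracting $V_{\pi''}\subseteq V_\pi$, and then invoking Proposition~\ref{pr:compos-max-models} (the $V^\Gamma$ machinery) to upgrade ``model'' to ``maximal model''; for the reverse inclusion it extends a maximal model $\tau$ to a maximal $\modelsstar$-model $\tau'$ and uses the variable-set equalities again to force $\tau'=\tau$. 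You instead argue both inclusions by direct contradiction with the relevant maximality: a strict extension satisfying $\Gamma$ would $\modelsstar$-satisfy $\Gamma$ by Lemma~\ref{le:modelsstar-basic}(i), and a strict extension $\modelsstar$-satisfying $\Gamma$ would, by Lemma~\ref{le:Gamma-strong-compos-equivalence} and consistency, yield a strict extension satisfying $\Gamma$. This buys a shorter, more self-contained proof that bypasses Proposition~\ref{pr:compos-max-models} entirely; what it gives up is the extra information the paper's route carries along for free, namely that all these models share the same variable set $V^\Gamma$, which the paper reuses elsewhere. Both arguments are sound.
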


\begin{proof}
%\commentphstart
(i):
Assume that $\pi'\models\phi$.
Since $\pi\modelsstar\Gamma$ we have $\pi\modelsstar\phi$,
and so $\pi\circ\pi'\models\phi$, since $\phi$ is strongly compositional.
Thus, $\pi\models\phi$, since $\pi\circ\pi' = \pi$.
%, e.g., using part (i).
Reversing the roles of $\pi$ and $\pi'$ in the argument, we have $\pi\models\phi$ $\iff$ $\pi'\models\phi$.

%\commentphend
(ii):
%If $\pi\models\Gamma$ then obviously $\Gamma$ is consistent.
%Conversely, suppose that $\Gamma$ is consistent,
Suppose that $\Gamma$ is consistent,
and so there exists some $\pi''$ with $\pi''\models\Gamma$.
Now, $\pi\modelsstar\Gamma$, so $\pi\circ\pi''\models\Gamma$,
since $\Gamma$ is strongly compositional.
Then $\pi\circ\pi''\modelsstar\Gamma$, and thus, $\pi\circ\pi''=\pi$,
since $\pi$ is a maximal $\modelsstar$-models of $\Gamma$.
So we have $\pi\models\Gamma$.
This argument also shows that $V_{\pi''} \subseteq V_\pi$,
which implies that $\pi$ is a maximal model of $\Gamma$,
using Proposition~\ref{pr:compos-max-models}.
We have just shown that, assuming $\Gamma$ is consistent,
every maximal $\modelsstar$-model of $\Gamma$
is a maximal model of $\Gamma$.
Also, by Proposition~\ref{pr:compos-max-models},
 every maximal model $\tau$ of $\Gamma$ has $V_\tau = V_\pi$.
We have, by Lemma~\ref{le:modelsstar-basic}(i), that $\tau\modelsstar\Gamma$.
Let $\tau'$ be a maximal $\modelsstar$-model of $\Gamma$
with $\tau'\sqsupseteq\tau$.
By the previous argument $\tau'$ is also a maximal model of $\Gamma$,
which implies, using Proposition~\ref{pr:compos-max-models},
that $V_{\tau'}= V_\pi = V_\tau$, and so $\tau' = \tau$,
and thus, $\tau$ is a maximal $\modelsstar$-model of $\Gamma$, as required.
\end{proof}

\newcommand{\modelsmax}{\models^{max}}

%\commentph
One can define a notion of max-model inference,
which is characterised by the  lemma below.
We define $\pi \modelsmax \Gamma$ if
$\pi$ is a maximal model of $\Gamma$.
We also define $\Gamma \modelsmax \phi$ if
$\pi\models\phi$ for every maximal model $\pi$ of $\Gamma$.

\newcommand{\Gammamax}{\Gamma_{max}}

%\commentph
\begin{lemma}
\label{le:modelsmax}
Let $\Gamma\cup\set{\phi,\neg\phi} \subseteq\calL$
and suppose that $\Gamma\cup\set{\neg\phi}$  is compositional.
If $\Gamma \models\phi$ then $\Gamma\modelsmax\phi$.
Now suppose that $\Gamma \not\models\phi$ and let
$\pi$ be any maximal model of $\Gamma\cup\set{\neg\phi}$.
Then, $\Gamma\modelsmax\phi$ $\iff$ $\pi\not\modelsmax\Gamma$,
which holds if and only if $V^{\Gamma\cup\set{\neg\phi}}\not=V^\Gamma$.
\end{lemma}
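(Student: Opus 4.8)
The lemma has three parts, and the plan is to attack them in the order they are stated, using the characterisation of consistency via maximal models together with the structural facts about $V^\Gamma$ established in Proposition~\ref{pr:compos-max-models} and Lemma~\ref{le:equiv-compos-V-Gamma}. First I would dispose of the easy implication: if $\Gamma\models\phi$, then every model of $\Gamma$ satisfies $\phi$, and in particular every maximal model of $\Gamma$ does, so $\Gamma\modelsmax\phi$ holds trivially. This uses nothing beyond the definitions of $\models$ and $\modelsmax$.

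The substance is the second half, where $\Gamma\not\models\phi$. Here I would fix a maximal model $\pi$ of the compositional set $\Delta:=\Gamma\cup\set{\neg\phi}$; note that since $\Gamma\not\models\phi$, the set $\Delta$ is consistent, so such a $\pi$ exists and $V^\Delta$ is well defined. The key observation is that $\pi\models\Gamma$ (as $\pi\models\Delta\supseteq\Gamma$) while $\pi\models\neg\phi$, i.e.\ $\pi\not\models\phi$. The claim $\Gamma\modelsmax\phi\iff\pi\not\modelsmax\Gamma$ should be proved by contraposition on the forward direction and directly on the backward direction. For the forward direction: suppose $\pi\modelsmax\Gamma$, i.e.\ $\pi$ is a maximal model of $\Gamma$; then $\pi$ is a maximal model witnessing $\pi\not\models\phi$, so $\Gamma\not\modelsmax\phi$. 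For the converse, suppose $\pi\not\modelsmax\Gamma$, so $\pi$ is a non-maximal model of $\Gamma$; I would argue that every maximal model of $\Gamma$ must satisfy $\phi$, giving $\Gamma\modelsmax\phi$.

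The final equivalence with $V^\Delta\not=V^\Gamma$ is where the real work sits, and I expect it to be the main obstacle. The plan is to relate the maximality of $\pi$ as a model of $\Gamma$ to a comparison of variable sets. Since $\pi$ is a maximal model of $\Delta$, we have $V_\pi=V^\Delta$ by Proposition~\ref{pr:compos-max-models} (applied to the compositional consistent set $\Delta$). Because $\pi\models\Gamma$, Proposition~\ref{pr:compos-max-models} applied to $\Gamma$ gives $V_\pi\subseteq V^\Gamma$, hence $V^\Delta\subseteq V^\Gamma$. The content is then: $\pi$ is a maximal model of $\Gamma$ if and only if $V_\pi=V^\Gamma$, again by Proposition~\ref{pr:compos-max-models} (the clause characterising maximal models of $\Gamma$ by their variable set equalling $V^\Gamma$). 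Combining, $\pi\modelsmax\Gamma\iff V_\pi=V^\Gamma\iff V^\Delta=V^\Gamma$, so $\pi\not\modelsmax\Gamma\iff V^\Delta\not=V^\Gamma$, which chains back to $\Gamma\modelsmax\phi$ via the previous paragraph.

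The delicate point I would need to handle carefully is the converse direction $\pi\not\modelsmax\Gamma\Rightarrow\Gamma\modelsmax\phi$: I must show that if $V^\Delta\subsetneq V^\Gamma$ then \emph{no} maximal model of $\Gamma$ can falsify $\phi$. The idea is that any model $\sigma$ of $\Gamma$ with $\sigma\not\models\phi$ would satisfy $\neg\phi$ and hence $\Delta$, forcing $V_\sigma\subseteq V^\Delta\subsetneq V^\Gamma$ by the preceding variable-set inclusion; but a maximal model of $\Gamma$ has variable set exactly $V^\Gamma$, a contradiction. Thus every maximal model of $\Gamma$ satisfies $\phi$, giving $\Gamma\modelsmax\phi$. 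This argument hinges on $\neg\phi$ being compositional (so that $V_\sigma\subseteq V^\Delta$ is legitimate), which is exactly the hypothesis that $\Gamma\cup\set{\neg\phi}$ is compositional, and on there being no equivalence subtlety — a step I would verify rather than assume.
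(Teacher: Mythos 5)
Your proposal is correct and follows essentially the same route as the paper's proof: both rest on Proposition~\ref{pr:compos-max-models}'s characterisation of maximal models by their variable sets, applied to the consistent compositional sets $\Gamma$ and $\Gamma\cup\set{\neg\phi}$, together with the observation that a model of $\Gamma$ falsifying $\phi$ is a model of $\Gamma\cup\set{\neg\phi}$. The only difference is organisational — you establish the variable-set equivalence first and derive the $\modelsmax$ equivalence from it, while the paper argues the contrapositive directly on a maximal model $\pi'$ of $\Gamma$ satisfying $\neg\phi$ and appends the variable-set statement afterwards — which does not change the substance.
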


\begin{proof}
%\commentphstart
The definitions immediately imply that
if $\Gamma \models\phi$ then $\Gamma\modelsmax\phi$.
Now assume that $\Gamma \not\models\phi$,
and so $\Gamma\cup\set{\neg\phi}$ is consistent,
and let $\pi$ be any maximal model of $\Gamma\cup\set{\neg\phi}$.
Suppose that $\Gamma\modelsmax\phi$.
The fact that $\pi\not\models\phi$ implies that $\pi\not\modelsmax\Gamma$.
Conversely, assume that $\Gamma\not\modelsmax\phi$
so there exists $\pi'\in\Glex$ such that
$\pi'\modelsmax\Gamma$ and $\pi'\models\neg\phi$,
and thus $\pi'\models\Gamma\cup\set{\neg\phi}$.
 Proposition~\ref{pr:compos-max-models} implies
that $V_{\pi'}\subseteq V_\pi$.
But since $\pi\models\Gamma$ and $\pi'\modelsmax\Gamma$
we have also $V_{\pi}\subseteq V_{\pi'}$ and thus $V_{\pi'}= V_\pi$.
Proposition~\ref{pr:compos-max-models} then implies that
$\pi\modelsmax\Gamma$.

%\commentphend
We have $V^{\Gamma\cup\set{\neg\phi}} = V_\pi \subseteq V^\Gamma$.
Proposition~\ref{pr:compos-max-models} implies that
  $\pi\modelsmax\Gamma$ if and only if $V_\pi = V^\Gamma$,
  which is if and only if  $V^{\Gamma\cup\set{\neg\phi}}=V^\Gamma$.
\end{proof}

%\commentph
\subsection{Decreasing Preference Statements and Sets of Models}
\label{subsec:decreasing}

%\commentph
In this section we show some results that are useful in proving
that certain preference statements are strongly compositional.

%\commentph
Let us say that $\phi\in\calL$ is \emph{decreasing}  if
for all $\pi,\pi'\in\Glex$ with $\pi'$ extending $\pi$,
we have $\pi'\models\phi$ $\Rightarrow$ $\pi\models\phi$.
It follows easily from the definitions that
if $\phi$ is decreasing, then, for all $\pi\in\Glex$,
$\pi \modelsstar \phi$ $\iff$ $\pi\models\phi$.
This leads to the following result.

%\commentph
\begin{lemma}
\label{le:compos-strong-iff-compos}
Let $\phi\in\calL$
be decreasing.
%Suppose that $\phi$ is such that for all $\pi,\pi'\in\Glex$ with $\pi'$ extending $\pi$,
%we have $\pi'\models\phi$ $\Rightarrow$ $\pi\models\phi$.
Then $\phi$ is strongly compositional if and only if $\phi$ is compositional.
\end{lemma}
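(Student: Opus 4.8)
The plan is to prove the two directions of the biconditional, noting that the forward direction is immediate from Lemma~\ref{le:strong-compos-basic-Gamma} (or directly from the definitions), and that the real content lies in the reverse direction, where the assumption that $\phi$ is decreasing must be used to upgrade ordinary compositionality to strong compositionality.

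First I would dispense with the easy direction: if $\phi$ is strongly compositional, then it is compositional. This requires no use of the decreasing hypothesis at all. Indeed, the first bullet of Lemma~\ref{le:strong-compos-basic-Gamma} already gives exactly this implication (for singleton $\Gamma = \set{\phi}$), but it also follows in one line from the definitions: if $\pi\models\phi$ then $\pi\modelsstar\phi$ by Lemma~\ref{le:modelsstar-basic}(i), so the strong compositionality premise $\pi\modelsstar\phi$ and $\pi'\models\phi$ is met whenever the ordinary compositionality premise $\pi\models\phi$ and $\pi'\models\phi$ holds, yielding $\pi\circ\pi'\models\phi$.

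For the converse, assume $\phi$ is compositional and decreasing, and let $\pi,\pi'\in\Glex$ satisfy the strong compositionality hypothesis, namely $\pi\modelsstar\phi$ and $\pi'\models\phi$. The key observation, stated just before the lemma, is that because $\phi$ is decreasing we have $\pi\modelsstar\phi \iff \pi\models\phi$ for every $\pi\in\Glex$. Hence $\pi\modelsstar\phi$ already gives $\pi\models\phi$. Now both $\pi\models\phi$ and $\pi'\models\phi$ hold, so ordinary compositionality of $\phi$ yields $\pi\circ\pi'\models\phi$, which is exactly the conclusion needed for strong compositionality. This establishes that $\phi$ is strongly compositional, completing the proof.

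I do not anticipate a genuine obstacle here, since the entire argument turns on the single equivalence $\pi\modelsstar\phi \iff \pi\models\phi$ that the decreasing property supplies and that the text has already flagged. The only point requiring minor care is verifying that equivalence in the forward-looking sense it is used: $\pi\modelsstar\phi$ means some $\pi''\sqsupseteq\pi$ satisfies $\phi$, and applying the decreasing property (possibly iterated along the chain of minimal extensions from $\pi$ up to $\pi''$, or directly since $\pi''$ extends $\pi$) pulls satisfaction back down to $\pi$ itself. With that equivalence in hand, strong compositionality and compositionality impose literally the same condition, so the two notions coincide for decreasing statements.
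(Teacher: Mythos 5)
Your proof is correct and follows exactly the route the paper intends: the paper states the equivalence $\pi\modelsstar\phi \iff \pi\models\phi$ for decreasing $\phi$ immediately before the lemma and leaves the rest implicit, which is precisely the substitution argument you carry out. No gaps.
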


%\commentph
For example, the non-strict statements of the flight connections
examples from the earlier sections are decreasing, while
the strict statements in these examples are not.

\newcommand{\phibar}{\overline{\phi}}

%\commentph
For $\calM\subseteq\Glex$, we say that
$\calM$ is \emph{decreasing} if, for any $\pi,\pi'\in\Glex$
such that $\pi'$ extends $\pi$, we have
$\pi' \in\calM$ $\Rightarrow$ $\pi\in\calM$.

%\commentph
We say that $\calM$ \emph{contains all models of} $\phi$
if, for all $\pi\in\Glex$,
$\pi\models\phi$ $\Rightarrow$ $\calM\ni\pi$.

%\commentph
The following result is helpful for proving that
a preference statement $\phi$ is strongly compositional.

%\commentph
\begin{proposition}
\label{pr:strong-compos-equiv-form}
Let $\phi\in\calL$, and assume that $\phi$ is consistent.
Let $\calM_\phi$ be a subset of $\Glex$.
The following two conditions are equivalent:
\begin{itemize}
  \item[(I)] $\calM_\phi$ is decreasing and contains all models of $\phi$,
  and
  for all $\pi,\pi'\in\Glex$, if $\pi\in\calM_\phi$ and $\pi'\models\phi$ then $\pi\circ\pi'\models\phi$.
  \item[(II)] $\phi$ is strongly compositional, and
  for all $\pi\in\Glex$, $\pi\modelsstar\phi$ $\iff$ $\pi\in\calM_\phi$.
\end{itemize}
\end{proposition}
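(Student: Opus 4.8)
The plan is to prove the equivalence $(I)\iff(II)$ by establishing each direction separately, with the bulk of the work going into showing that condition $(I)$ forces $\calM_\phi$ to coincide exactly with the set $\{\pi : \pi\modelsstar\phi\}$. The key observation driving both directions is that the hypotheses about $\calM_\phi$ in $(I)$ are precisely engineered to pin down this set.

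First I would prove $(II)\Rightarrow(I)$, which is the easier direction. Assuming $\phi$ is strongly compositional and $\calM_\phi = \{\pi : \pi\modelsstar\phi\}$, I need to verify the three clauses of $(I)$. That $\calM_\phi$ is decreasing is immediate from Lemma~\ref{le:modelsstar-basic}(ii): if $\pi'$ extends $\pi$ and $\pi'\modelsstar\phi$ then $\pi\modelsstar\phi$. That $\calM_\phi$ contains all models of $\phi$ follows from Lemma~\ref{le:modelsstar-basic}(i), since $\pi\models\phi$ implies $\pi\modelsstar\phi$. Finally, the composition clause: if $\pi\in\calM_\phi$, i.e., $\pi\modelsstar\phi$, and $\pi'\models\phi$, then strong compositionality directly gives $\pi\circ\pi'\models\phi$, which is exactly what is required.

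Next I would prove $(I)\Rightarrow(II)$. The strategy here is to first show the characterisation $\pi\modelsstar\phi\iff\pi\in\calM_\phi$, and then derive strong compositionality as a consequence. For the forward inclusion, suppose $\pi\modelsstar\phi$, so there exists $\pi''\sqsupseteq\pi$ with $\pi''\models\phi$; since $\calM_\phi$ contains all models of $\phi$, we have $\pi''\in\calM_\phi$, and since $\calM_\phi$ is decreasing and $\pi''$ extends or equals $\pi$, we conclude $\pi\in\calM_\phi$. For the reverse inclusion, suppose $\pi\in\calM_\phi$. Here I would invoke consistency of $\phi$: there exists some $\pi'\models\phi$, and then the composition clause of $(I)$ yields $\pi\circ\pi'\models\phi$; since $\pi\circ\pi'\sqsupseteq\pi$, this witnesses $\pi\modelsstar\phi$. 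This use of consistency to manufacture a witness $\pi'$ is the one genuinely non-routine step, and it is exactly why the proposition hypothesises $\phi$ consistent.

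Having established the characterisation, strong compositionality of $\phi$ follows quickly: take any $\pi,\pi'$ with $\pi\modelsstar\phi$ and $\pi'\models\phi$; by the characterisation $\pi\in\calM_\phi$, so the composition clause of $(I)$ gives $\pi\circ\pi'\models\phi$, which is the definition of strong compositionality. I expect the main obstacle to be nothing deep but rather bookkeeping: keeping straight the distinction between $\models$ and $\modelsstar$, and being careful that the reverse inclusion $\pi\in\calM_\phi\Rightarrow\pi\modelsstar\phi$ genuinely needs the consistency assumption and cannot be obtained from the decreasing/containment properties alone. Everything else is a direct unwinding of the definitions of $\modelsstar$, decreasing, and strong compositionality.
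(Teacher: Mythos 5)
Your proposal is correct and follows essentially the same route as the paper's proof: the same two lemma invocations for $(II)\Rightarrow(I)$, the same use of consistency of $\phi$ to produce a witness for $\pi\in\calM_\phi\Rightarrow\pi\modelsstar\phi$, and the same order of establishing the characterisation before deducing strong compositionality. No gaps.
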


\begin{proof}
%\commentphstart
(I)$\Rightarrow$(II):
Assume (I).
First, let us assume that $\pi\in\calM_\phi$.
Since $\phi$ is consistent, there exists $\pi'$ with $\pi'\models\phi$,
and so (I) implies that $\pi\circ\pi'\models\phi$,
and thus, $\pi\modelsstar\phi$, since $\pi\circ\pi'\sqsupseteq\pi$.
For proving the converse let us now assume that $\pi\modelsstar\phi$,
so there exists $\pi'\in\Glex$ with $\pi'\sqsupseteq\pi$ and $\pi'\models\phi$.
Thus, $\pi'\in\calM_\phi$, and because $\calM_\phi$ is decreasing, we then have
 $\pi\in\calM_\phi$.
We have shown that for all $\pi\in\Glex$, $\pi\modelsstar\phi$ $\iff$ $\pi\in\calM_\phi$.
(I) then also implies that $\phi$ is strongly compositional.

%\commentphend
(II)$\Rightarrow$(I):
Lemma~\ref{le:modelsstar-basic}(i) implies that
$\calM_\phi$ contains all models of  $\phi$,
and Lemma~\ref{le:modelsstar-basic}(ii) implies that
$\calM_\phi$ is decreasing.
The fact that $\phi$ is strongly compositional then implies (I).
\end{proof}

%\commentph
For $\phi,\phi'\in\calL$,
we say that $\phi'$ \emph{is a relaxation of $\phi$} if
$\phi\models\phi'$, i.e.,
for all $\pi\in\calG$,
$\pi\models\phi$ $\Rightarrow$ $\pi\models\phi'$.

%\commentph
We have the following special case of Proposition~\ref{pr:strong-compos-equiv-form}.

%\commentph
\begin{proposition}
\label{pr:strong-compos-equiv-form-phibar}
Let $\phi, \phibar\in\calL$, and assume that $\phi$ is consistent.
The following two conditions are equivalent:
\begin{itemize}
  \item[(I)] $\phibar$ is a decreasing relaxation of $\phi$ such that
  for all $\pi,\pi'\in\Glex$, if $\pi\models\phibar$ and $\pi'\models\phi$ then $\pi\circ\pi'\models\phi$.
  \item[(II)] $\phi$ is strongly compositional, and
  for all $\pi\in\Glex$, $\pi\modelsstar\phi$ $\iff$ $\pi\models\phibar$.
\end{itemize}
\end{proposition}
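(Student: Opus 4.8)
The plan is to derive Proposition~\ref{pr:strong-compos-equiv-form-phibar} as a special case of Proposition~\ref{pr:strong-compos-equiv-form} by taking $\calM_\phi$ to be exactly the set of models of $\phibar$, i.e., $\calM_\phi = \set{\pi\in\Glex \st \pi\models\phibar}$. With this choice, the membership condition $\pi\in\calM_\phi$ becomes literally $\pi\models\phibar$, so the two statements of condition (II) coincide term-for-term, and likewise the composition clause and the phrase ``$\pi\modelsstar\phi \iff \pi\in\calM_\phi$'' translate directly into their $\phibar$-versions. The whole task therefore reduces to checking that, under this identification, condition (I) of the present proposition matches condition (I) of Proposition~\ref{pr:strong-compos-equiv-form}.

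First I would unpack the three requirements on $\calM_\phi$ in Proposition~\ref{pr:strong-compos-equiv-form}(I) and verify each corresponds to a hypothesis here. The requirement that $\calM_\phi$ be \emph{decreasing} corresponds exactly to $\phibar$ being a decreasing statement, since $\calM_\phi$ decreasing means that whenever $\pi'$ extends $\pi$, $\pi'\models\phibar$ implies $\pi\models\phibar$, which is the definition of $\phibar$ being decreasing. The requirement that $\calM_\phi$ \emph{contains all models of $\phi$} means $\pi\models\phi \Rightarrow \pi\models\phibar$, which is precisely the statement that $\phibar$ is a relaxation of $\phi$ (that $\phi\models\phibar$). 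The third requirement, that $\pi\in\calM_\phi$ and $\pi'\models\phi$ imply $\pi\circ\pi'\models\phi$, becomes $\pi\models\phibar$ and $\pi'\models\phi$ imply $\pi\circ\pi'\models\phi$, matching verbatim the composition clause in (I) here.

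Having established this dictionary, the proof is then immediate: condition (I) of this proposition holds if and only if condition (I) of Proposition~\ref{pr:strong-compos-equiv-form} holds for the chosen $\calM_\phi$, and the same for condition (II). Since Proposition~\ref{pr:strong-compos-equiv-form} asserts the equivalence of its (I) and (II) whenever $\phi$ is consistent (which we are assuming here), the desired equivalence of (I) and (II) follows. I would simply note that applying Proposition~\ref{pr:strong-compos-equiv-form} with $\calM_\phi$ equal to the model set of $\phibar$ yields the result.

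I do not anticipate a genuine obstacle, since this is a specialisation rather than a fresh argument; the only thing to be careful about is confirming that every clause of the general proposition's condition (I) really does translate cleanly, in particular that ``contains all models of $\phi$'' becomes exactly the relaxation condition $\phi\models\phibar$ and not something weaker or stronger. One subtlety worth a sentence is that $\calM_\phi$ as defined is an arbitrary subset of $\Glex$ in the general statement, so nothing prevents us from choosing it to be a model set of a statement in $\calL$; the equivalence in Proposition~\ref{pr:strong-compos-equiv-form} holds for \emph{any} such subset, and we are merely instantiating it at a particular definable one.
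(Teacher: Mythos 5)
Your proposal is correct and is essentially identical to the paper's own proof: the paper likewise defines $\calM_\phi$ to be the set of models of $\phibar$, observes that condition (I) here coincides with condition (I) of Proposition~\ref{pr:strong-compos-equiv-form} under that identification, and concludes by applying that proposition. Your unpacking of the dictionary (decreasing set $\leftrightarrow$ decreasing statement, contains all models $\leftrightarrow$ relaxation) is exactly the verification the paper leaves implicit.
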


\begin{proof}
%\commentphstartend
Define $\calM_\phi$ to be
all $\pi\in\Glex$ such that $\pi\models\phibar$.
Note that (I) holds if and only if
$\calM_\phi$ is decreasing and contains all models of $\phi$,
  and  if $\pi\in\calM_\phi$ and $\pi'\models\phi$ then $\pi\circ\pi'\models\phi$.
Proposition~\ref{pr:strong-compos-equiv-form} implies that (I) $\iff$ (II).
\end{proof}

%\commentph
\subsection{Important Instances of Strongly Compositional Preference Statements}
\label{sec:Important-Instances}

%\commentph
\subsubsection*{Projections to $Y$}

%\commentph
\noindent
Our computational techniques can be expressed in terms of projections of preference statements to a single variable.

%\commentph
Let $\RR \subseteq \outc\times\outc$, let
$Y \in V$ be a variable, and let $A\subseteq V-\set{Y}$ be a set of variables not containing $Y$.
Define $\RR^{\downarrow Y}$,
the projection of $\RR$ to $Y$, to be $\set{(\alpha(Y), \beta(Y)) \st \alphabeta \in \RR}$.
Also, define,
$\RR_A^{\downarrow Y}$, the $A$-restricted projection to $Y$,
to be the set of pairs $(\alpha(Y), \beta(Y))$
such that $\alphabeta \in \RR$ and $\alpha(A) = \beta(A)$.
$\RR_A^{\downarrow Y}$ is the projection to $Y$ of pairs that agree on $A$.
Thus, $\RR^{\downarrow Y} = \RR_\emp^{\downarrow Y}$.

%\commentph
From \cite{Wilson14} we have (a variation of the following):

%\commentph
\begin{lemma}
\label{le:alpha-pi-beta-Gamma-sequential}
Consider any lexicographic model $\pi\in\Glex$, written as
$(Y_1, \ge_1), \ldots, (Y_k, \ge_{k})$.
where  each $Y_i$, for $i=1, \ldots, k$, is a variable in $V$,
and each $\ge_{Y_i}$ is a total order on $\und{Y_i}$.
For $i=1, \ldots, k$, define
$A_i$ to be the set of earlier variables than $Y_i$, i.e.,
$A_i = \set{Y_1, \ldots, Y_{i-1}}$.
Let
$\RR\subseteq\und{V}\times\und{V}$.
Then
$\succceq_\pi \supseteq\, \RR$ if and only if
for all $i=1, \ldots, k$, $\ge_{i}\, \supseteq  \RR_{A_i}^{\downarrow Y_i}$.
\end{lemma}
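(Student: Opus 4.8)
The plan is to prove both directions of the biconditional by unwinding the definition of $\succceq_\pi$ in terms of the value orders $\ge_i$ and the restricted projections $\RR_{A_i}^{\downarrow Y_i}$. The key observation is that for any pair $\alphabeta \in \RR$, whether $\alpha \succceq_\pi \beta$ holds is decided by the first variable $Y_i$ on which $\alpha$ and $\beta$ differ: at that index we need $\alpha(Y_i) >_i \beta(Y_i)$, while on all earlier variables $Y_j$ ($j<i$) we have $\alpha(Y_j) = \beta(Y_j)$, which is precisely the condition $\alpha(A_i) = \beta(A_i)$ that places $(\alpha(Y_i),\beta(Y_i))$ into $\RR_{A_i}^{\downarrow Y_i}$.

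For the direction ($\Leftarrow$), I would assume $\ge_i \,\supseteq\, \RR_{A_i}^{\downarrow Y_i}$ for every $i$ and take an arbitrary $\alphabeta \in \RR$. If $\alpha(Y_i)=\beta(Y_i)$ for all $i=1,\ldots,k$, then $\alpha \equiv_\pi \beta$ and hence $\alpha \succceq_\pi \beta$ trivially. Otherwise let $i$ be the least index with $\alpha(Y_i)\neq\beta(Y_i)$; then $\alpha(A_i)=\beta(A_i)$, so by definition $(\alpha(Y_i),\beta(Y_i)) \in \RR_{A_i}^{\downarrow Y_i} \subseteq\, \ge_i$, and since the two values differ this gives $\alpha(Y_i) >_i \beta(Y_i)$. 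By the definition of $\succ_\pi$ (clause (ii) for $\succceq_\pi$) this yields $\alpha \succceq_\pi \beta$. As $\alphabeta$ was arbitrary, $\succceq_\pi \,\supseteq\, \RR$.

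For the direction ($\Rightarrow$), I would assume $\succceq_\pi \,\supseteq\, \RR$ and fix an index $i$ together with a pair $(a,b) \in \RR_{A_i}^{\downarrow Y_i}$; I must show $a \ge_i b$. By definition of the restricted projection there is some $\alphabeta \in \RR$ with $\alpha(Y_i)=a$, $\beta(Y_i)=b$, and $\alpha(A_i)=\beta(A_i)$, i.e.\ $\alpha(Y_j)=\beta(Y_j)$ for all $j<i$. From $\alpha \succceq_\pi \beta$ and the fact that $\alpha,\beta$ agree on all variables before $Y_i$, the lexicographic comparison cannot have been decided earlier, so the outcome at $Y_i$ governs: either $\alpha(Y_i)=\beta(Y_i)$ (giving $a=b$, hence $a\ge_i b$ by reflexivity of the total order) or $\alpha(Y_i) >_i \beta(Y_i)$ (giving $a >_i b$). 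In either case $a \ge_i b$, so $\ge_i \,\supseteq\, \RR_{A_i}^{\downarrow Y_i}$.

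The main obstacle, and the step deserving the most care, is the ($\Rightarrow$) argument that agreement on $A_i$ forces the comparison to be resolved exactly at $Y_i$: one must rule out the possibility that $\alpha \succceq_\pi \beta$ was achieved through a strict win on some later variable while $\alpha(Y_i) <_i \beta(Y_i)$. This is precisely excluded because a strict loss at $Y_i$ (with agreement before it) would make $\beta \succ_\pi \alpha$ and hence $\alpha \not\succeq_\pi \beta$, contradicting the hypothesis; I would state this explicitly rather than leave it implicit. The rest is a routine induction-free case analysis on the first point of disagreement, so once this lexicographic-resolution point is nailed down the proof is complete.
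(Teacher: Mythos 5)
Your proof is correct. The paper itself gives no proof of this lemma --- it is stated as a variation of a result quoted from Wilson (2014) --- so there is nothing internal to compare against; your argument (case analysis on the first variable of disagreement for $\Leftarrow$, and for $\Rightarrow$ the observation that agreement on $A_i$ forces the lexicographic comparison to be resolved at $Y_i$, since a strict loss there would give $\beta \succ_\pi \alpha$) is the standard and complete way to establish it, and you rightly flag the one step that needs explicit justification.
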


%\commentph
Note that, by Lemma~\ref{le:extends-pi-basic},
$\alpha\ge\beta$ is a decreasing element.
Lemma~\ref{le:composition-basic} implies that
$\alpha\ge\beta$ is compositional, and thus, strongly compositional,
by Lemma~\ref{le:compos-strong-iff-compos}.
If $\alpha>\beta$ is inconsistent (so that $\alpha=\beta$)
then it is trivially strongly compositional.
Otherwise, we can use Proposition~\ref{pr:strong-compos-equiv-form-phibar}
with $\phi = \alpha>\beta$ and $\phibar = \alpha\ge\beta$
to show, using Lemma~\ref{le:composition-basic}, that
$\alpha>\beta$ is also strongly compositional
and $\pi\modelsstar\alpha>\beta$
$\iff$ $\pi\models\alpha\ge\beta$.
We therefore have:

%\commentph
\begin{lemma}
\label{le:strong-compos-alpha-beta}
For any outcomes, $\alpha,\beta\in\und{V}$,
statements $\alpha\ge\beta$ and $\alpha>\beta$
are strongly compositional,
and, if $\alpha\not=\beta$, for $\pi\in\Glex$,
$\pi\modelsstar\alpha>\beta$ $\iff$ $\pi\modelsstar\alpha\ge\beta$
$\iff$ $\pi\models\alpha\ge\beta$.
\end{lemma}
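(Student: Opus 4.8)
The plan is to prove the two claims in Lemma~\ref{le:strong-compos-alpha-beta} separately, leaning on the machinery developed just before the statement. Since the paragraph preceding the lemma already sketches the argument, my proposal makes that sketch precise and fills the gaps.

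First I would handle $\alpha \ge \beta$. By Lemma~\ref{le:extends-pi-basic}, if $\pi'$ extends $\pi$ and $\alpha \succceq_{\pi'} \beta$ then $\alpha \succceq_\pi \beta$; since $\pi \models \alpha\ge\beta \iff \alpha\succceq_\pi\beta$, this says precisely that $\alpha\ge\beta$ is \emph{decreasing}. By the remark following the definition of decreasing, a decreasing statement satisfies $\pi \modelsstar \alpha\ge\beta \iff \pi\models\alpha\ge\beta$ for all $\pi\in\Glex$, which gives one of the stated equivalences. Next, the first part of Lemma~\ref{le:composition-basic} (namely, $\alpha\succceq_\pi\beta$ and $\alpha\succceq_{\pi'}\beta$ imply $\alpha\succceq_{\pi\circ\pi'}\beta$) shows that $\alpha\ge\beta$ is compositional; Lemma~\ref{le:compos-strong-iff-compos} then upgrades ``compositional'' to ``strongly compositional'' because $\alpha\ge\beta$ is decreasing.

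For $\alpha > \beta$, I would first dispose of the degenerate case: if $\alpha>\beta$ is inconsistent (which happens exactly when $\alpha=\beta$, since no lex model strictly prefers an outcome to itself), it is trivially strongly compositional by the remark after the definition of (strongly) compositional. For the nontrivial case $\alpha\not=\beta$, the plan is to invoke Proposition~\ref{pr:strong-compos-equiv-form-phibar} with $\phi = (\alpha>\beta)$ and $\phibar = (\alpha\ge\beta)$. I must verify condition (I): that $\phibar$ is a decreasing relaxation of $\phi$, and that $\pi\models\alpha\ge\beta$ together with $\pi'\models\alpha>\beta$ imply $\pi\circ\pi'\models\alpha>\beta$. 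That $\alpha\ge\beta$ is decreasing was just established; it is a relaxation of $\alpha>\beta$ because $\alpha\succ_\pi\beta$ implies $\alpha\succceq_\pi\beta$. The remaining composition requirement is exactly the second part of Lemma~\ref{le:composition-basic}: $\alpha\succceq_\pi\beta$ and $\alpha\succ_{\pi'}\beta$ yield $\alpha\succ_{\pi\circ\pi'}\beta$. Proposition~\ref{pr:strong-compos-equiv-form-phibar} then delivers that $\alpha>\beta$ is strongly compositional and that $\pi\modelsstar\alpha>\beta \iff \pi\models\alpha\ge\beta$.

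Finally I would assemble the equivalence chain for $\alpha\not=\beta$: the previous step gives $\pi\modelsstar\alpha>\beta \iff \pi\models\alpha\ge\beta$, and the decreasing property of $\alpha\ge\beta$ gives $\pi\modelsstar\alpha\ge\beta \iff \pi\models\alpha\ge\beta$, so all three conditions coincide. I do not anticipate a serious obstacle here, since every ingredient is already proved; the only point needing mild care is matching the hypotheses of Proposition~\ref{pr:strong-compos-equiv-form-phibar} exactly — in particular confirming the direction of the relaxation and that the composition hypothesis in (I) uses $\pi\models\phibar$ (not $\phi$) on the left factor, which aligns perfectly with the ``$\succceq_\pi$ and $\succ_{\pi'}$'' form of Lemma~\ref{le:composition-basic}. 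This is also why $\phi$ must be assumed consistent, justifying the separate treatment of the $\alpha=\beta$ case.
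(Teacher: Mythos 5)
Your proposal is correct and follows essentially the same route as the paper: the paper likewise establishes that $\alpha\ge\beta$ is decreasing via Lemma~\ref{le:extends-pi-basic}, compositional (hence strongly compositional) via Lemmas~\ref{le:composition-basic} and~\ref{le:compos-strong-iff-compos}, disposes of the inconsistent case $\alpha=\beta$ trivially, and otherwise applies Proposition~\ref{pr:strong-compos-equiv-form-phibar} with $\phi=\alpha>\beta$ and $\phibar=\alpha\ge\beta$. You have simply filled in the details of the paper's sketch accurately.
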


%\commentph
Let $\RR$ be a subset of $\und{V}\times\und{V}$,
and let $\phi^\RR$ be some statement satisfying:
$\pi\models\phi^\RR$ if and only if
$\succceq_\pi \supseteq\RR$.

%\commentph
If $\pi'$ extends $\pi$ then $\succceq_{\pi}$ $\supseteq$ $\succceq_{\pi'}$,
which implies that
$\phi^\RR$ is decreasing.
For any lexicographic model $\pi$ we have
$\pi\models\phi^\RR$ if and only if
for all $(\alpha,\beta)\in\RR$,
$\pi\models\alpha\ge\beta$,
which implies that $\phi^\RR$ is strongly compositional,
by Lemmas~\ref{le:strong-compos-alpha-beta} and~\ref{le:compos-conjunction}.
We therefore have:

%\commentph
\begin{proposition}
\label{pr:phi-calR-strongly-compos}
For any $\RR\subseteq\und{V}\times\und{V}$,
the preference statement $\phi^\RR$ is strongly compositional,
and, for any $\pi\in\Glex$, we have $\pi\modelsstar\phi^\RR$ $\iff$
$\pi\models\phi^\RR$
\end{proposition}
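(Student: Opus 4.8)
The plan is to prove Proposition~\ref{pr:phi-calR-strongly-compos} by reducing the statement about $\phi^\RR$ to the already-established facts about the atomic comparison statements $\alpha\ge\beta$. The key observation is that $\phi^\RR$ is, semantically, nothing but the conjunction over all pairs in $\RR$ of the basic statements $\alpha\ge\beta$: by definition $\succceq_\pi \supseteq \RR$ holds if and only if $(\alpha,\beta)\in\succceq_\pi$ for every $(\alpha,\beta)\in\RR$, which is exactly $\pi\models\alpha\ge\beta$ for every $(\alpha,\beta)\in\RR$. So the proof should set up the set $\Gamma = \set{\alpha\ge\beta \st (\alpha,\beta)\in\RR}$ and note that $\phi^\RR$ and $\Gamma$ have precisely the same models, i.e. for all $\pi\in\Glex$, $\pi\models\phi^\RR \iff \pi\models\Gamma$.

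First I would invoke Lemma~\ref{le:strong-compos-alpha-beta} to record that each conjunct $\alpha\ge\beta$ is strongly compositional, so that $\Gamma$ is a strongly compositional set of statements. Then the conjunction-closure result, Lemma~\ref{le:compos-conjunction}, applies directly: taking $\psi = \phi^\RR$ (which satisfies the hypothesis $\pi\models\psi \iff \pi\models\Gamma$ by the paragraph above), we conclude that $\phi^\RR$ is strongly compositional. This is the whole of the first claim. The two cited lemmas do essentially all the work, so the argument is short.

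For the second claim, that $\pi\modelsstar\phi^\RR \iff \pi\models\phi^\RR$, the cleanest route is to show $\phi^\RR$ is \emph{decreasing} and then apply the equivalence for decreasing statements. Concretely, if $\pi'$ extends $\pi$ then $\succceq_\pi \supseteq \succceq_{\pi'}$ (this is exactly the content of the first part of Lemma~\ref{le:extends-pi-basic}, that $\alpha\succceq_{\pi'}\beta \Rightarrow \alpha\succceq_\pi\beta$). Hence $\succceq_{\pi'} \supseteq \RR$ implies $\succceq_\pi \supseteq \RR$, i.e. $\pi'\models\phi^\RR \Rightarrow \pi\models\phi^\RR$, so $\phi^\RR$ is decreasing. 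By the remark preceding Lemma~\ref{le:compos-strong-iff-compos}, for any decreasing $\phi$ we have $\pi\modelsstar\phi \iff \pi\models\phi$, which gives the desired equivalence for $\phi^\RR$ at once. (Alternatively, this equivalence also drops out of the final clause of Lemma~\ref{le:compos-conjunction} together with $\modelsstar$-consistency of each conjunct, but the decreasing argument is more self-contained.)

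I do not anticipate a genuine obstacle here, since every ingredient is already proved: the proposition is really a packaging result that lifts the atomic case to an arbitrary conjunction encoded by $\RR$. The only point requiring a little care is making explicit that $\phi^\RR$ and $\Gamma$ share the same model set, since $\RR$ may be infinite or empty; in the empty case $\phi^\RR$ is satisfied by every lex model and the claims hold trivially, and in general the ``for all pairs'' quantification over $\RR$ matches the conjunction over $\Gamma$ verbatim, so Lemma~\ref{le:compos-conjunction} applies without any finiteness assumption.
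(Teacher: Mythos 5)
Your proof is correct and follows essentially the same route as the paper, which likewise observes that $\pi\models\phi^\RR$ iff $\pi\models\alpha\ge\beta$ for all $(\alpha,\beta)\in\RR$ and invokes Lemmas~\ref{le:strong-compos-alpha-beta} and~\ref{le:compos-conjunction}, together with the fact that $\phi^\RR$ is decreasing (since extending $\pi$ shrinks $\succceq_\pi$) to get $\pi\modelsstar\phi^\RR\iff\pi\models\phi^\RR$. No gaps.
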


%\commentph
We say that $\psi$ is a \emph{strict version of $\phi^\RR$}
if $\phi^\RR$ is a relaxation of $\psi$,
and $\psi$
satisfies the following monotonicity property regarding the
strict preferences among $\RR$:
for all $\pi,\pi'\in\Glex$,
if $\pi\models\psi$ and
$\pi'\models\phi^\RR$ and $\succ_{\pi'}\supseteq (\succ_\pi \cap \RR)$
then $\pi' \models\psi$.

%\commentph
There are many strict versions of $\phi^\RR$ (unless $\RR$ is very small).
We give two simple examples $\psi_1$ and $\psi_2$ of
strict versions of $\phi^\RR$.
Let $\psi_1$ be such that
$\pi\models\psi_1$ if and only if $\succ_\pi$ $\supseteq \RR$.
Let $\psi_2$ be such that
$\pi\models\psi_2$ if and only if $\succceq_\pi$ $\supseteq \RR$
and there exists some $(\alpha,\beta)\in\RR$ such that
$\alpha \succ_\pi \beta$.

%\commentph
\begin{proposition}
\label{pr:phi-calR-strict-strongly-compos}
Let $\RR\subseteq\und{V}\times\und{V}$,
and suppose that $\psi$ is a strict version of $\phi^\RR$.
Then $\psi$ is strongly compositional, and for $\pi\in\Glex$,
$\pi\modelsstar\psi$ $\iff$ $\pi\models\phi^\RR$.
\end{proposition}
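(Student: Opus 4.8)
The plan is to apply Proposition~\ref{pr:strong-compos-equiv-form-phibar} with the given strict version $\psi$ in the role of $\phi$, and with $\phibar = \phi^\RR$. To do this I must verify condition (I) of that proposition: namely that $\phi^\RR$ is a decreasing relaxation of $\psi$, and that for all $\pi,\pi'\in\Glex$, if $\pi\models\phi^\RR$ and $\pi'\models\psi$ then $\pi\circ\pi'\models\psi$. Once (I) is established, the proposition immediately yields both conclusions: that $\psi$ is strongly compositional, and that $\pi\modelsstar\psi \iff \pi\models\phi^\RR$. Note the roles are slightly delicate, so I first confirm that $\psi$ is consistent (which is needed to invoke the proposition); this should follow since $\RR$ admits some model realising the required strict comparisons, or can be handled as a trivial case if $\psi$ is inconsistent.

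First I would check the relaxation and decreasing conditions. That $\phi^\RR$ is a relaxation of $\psi$ is exactly the hypothesis that $\phi^\RR$ is a relaxation of $\psi$ in the definition of strict version, so this is immediate. That $\phi^\RR$ is decreasing was already shown in the discussion preceding Proposition~\ref{pr:phi-calR-strongly-compos}: if $\pi'$ extends $\pi$ then $\succceq_\pi \supseteq \succceq_{\pi'}$, so $\pi'\models\phi^\RR$ implies $\pi\models\phi^\RR$.

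The main work, and the step I expect to be the genuine obstacle, is the composition closure: assuming $\pi\models\phi^\RR$ and $\pi'\models\psi$, I must show $\pi\circ\pi'\models\psi$. The strategy is to use the monotonicity property built into the definition of a strict version. I would first establish that $\pi\circ\pi'\models\phi^\RR$: since $\psi$'s being a strict version forces $\pi'\models\phi^\RR$ (as $\phi^\RR$ relaxes $\psi$), and $\phi^\RR$ is strongly compositional with $\pi\modelsstar\phi^\RR$ following from $\pi\models\phi^\RR$ via Proposition~\ref{pr:phi-calR-strongly-compos}, Lemma~\ref{le:strong-compos-basic-Gamma} (or directly the compositionality of $\phi^\RR$) gives $\pi\circ\pi'\models\phi^\RR$. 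Then I must verify the inclusion $\succ_{\pi\circ\pi'}\supseteq(\succ_{\pi'}\cap\RR)$ so that the monotonicity property of the strict version applies with $\pi'$ and $\pi\circ\pi'$ in the roles of the two models. This inclusion is where Lemma~\ref{le:composition-basic} is essential: for any $(\alpha,\beta)\in\RR$ with $\alpha\succ_{\pi'}\beta$, I know $\alpha\succceq_\pi\beta$ holds because $\pi\models\phi^\RR$ means $\succceq_\pi\supseteq\RR$; combining $\alpha\succceq_\pi\beta$ with $\alpha\succ_{\pi'}\beta$, Lemma~\ref{le:composition-basic} yields $\alpha\succ_{\pi\circ\pi'}\beta$, giving exactly the required inclusion.

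Having shown $\pi\circ\pi'\models\phi^\RR$ and $\succ_{\pi\circ\pi'}\supseteq(\succ_{\pi'}\cap\RR)$, the monotonicity property in the definition of strict version (applied with $\pi'\models\psi$ and $\pi\circ\pi'\models\phi^\RR$) concludes $\pi\circ\pi'\models\psi$, completing the verification of (I). The subtlety to watch is matching the quantifier pattern in the definition of strict version exactly: that definition reads ``if $\pi\models\psi$ and $\pi'\models\phi^\RR$ and $\succ_{\pi'}\supseteq(\succ_\pi\cap\RR)$ then $\pi'\models\psi$,'' so I must instantiate it with the satisfying model as the first argument and the candidate $\phi^\RR$-model as the second; careful bookkeeping of which composite plays which role is the only place an error could slip in. The two explicit examples $\psi_1$ and $\psi_2$ can then be checked against this general result, though that verification is routine and I would relegate it to a remark rather than the main proof.
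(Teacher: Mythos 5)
Your proposal is correct and follows essentially the same route as the paper's proof: reduce to Proposition~\ref{pr:strong-compos-equiv-form-phibar} with $\phibar = \phi^\RR$, establish $\pi\circ\pi'\models\phi^\RR$ via compositionality of $\phi^\RR$, derive $\succ_{\pi\circ\pi'}\supseteq(\succ_{\pi'}\cap\RR)$ from Lemma~\ref{le:composition-basic}, and invoke the monotonicity clause of the strict-version definition with exactly the instantiation you describe. Your explicit flagging of the consistency hypothesis needed for Proposition~\ref{pr:strong-compos-equiv-form-phibar} is a point the paper's proof passes over silently, but otherwise the arguments coincide.
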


\begin{proof}
%\commentphstartend
We have that $\phi^\RR$ is a decreasing relaxation of $\psi$.
It is sufficient to prove that
for any $\pi,\pi'\in\Glex$,
if $\pi\models\phi^\RR$ and $\pi'\models\psi$ then $\pi\circ\pi'\models\psi$,
since then we can use Proposition~\ref{pr:strong-compos-equiv-form-phibar}
to prove the result.
Assume that $\pi\models\phi^\RR$ and $\pi'\models\psi$.
Since $\phi^\RR$ is a relaxation of $\psi$,
we have $\pi'\models \phi^\RR$, and thus,
$\pi\circ\pi'\models\phi^\RR$,
since $\phi^\RR$ is compositional, by Proposition~\ref{pr:phi-calR-strongly-compos}.
Consider any $(\alpha,\beta)\in\RR$ such that
$\alpha \succ_{\pi'} \beta$.
We also have $\alpha\succceq_{\pi}\beta$, because $\pi\models\phi^\RR$,
and so, $\alpha \succ_{\pi\circ\pi'} \beta$,
using Lemma~\ref{le:composition-basic}.
We have shown that $\succ_{\pi\circ\pi'}$ $\supseteq (\succ_{\pi'} \cap \RR)$,
which, since $\psi$ is a strict version of $\phi^\RR$, implies $\pi\circ\pi' \models\psi$, as required.
\end{proof}

%\commentph
\subsubsection{Negated Statements}
\label{subsec:negated}

%\commentph
\noindent
Suppose that $\pi\models\neg\phi^\RR$, i.e., $\pi\not\models\phi^\RR$,
and consider any $\pi'\in\Glex$.
Then, $\pi\circ\pi'$ extends or equals $\pi$,
which implies that $\pi\circ\pi'\not\models\phi^\RR$,
since $\phi^\RR$ is decreasing.
Thus, we have:

%\commentph
\begin{proposition}
\label{pr:neg-phi-calR-compos}
Preference statement $\neg\phi^\RR$ is compositional
for any $\RR\subseteq\und{V}\times\und{V}$.
\end{proposition}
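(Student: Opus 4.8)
The plan is to show directly that $\neg\phi^\RR$ satisfies the definition of compositional, namely that for all $\pi,\pi'\in\Glex$, if $\pi\models\neg\phi^\RR$ and $\pi'\models\neg\phi^\RR$ then $\pi\circ\pi'\models\neg\phi^\RR$. In fact I expect to prove something slightly stronger that makes the hypothesis on $\pi'$ unnecessary: whenever $\pi\models\neg\phi^\RR$, then $\pi\circ\pi'\models\neg\phi^\RR$ for \emph{every} $\pi'\in\Glex$. This immediately yields compositionality as a special case, and it matches the informal argument already sketched in the text preceding the statement.

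The key observation is that $\phi^\RR$ is \emph{decreasing}. This was established in the discussion leading up to Proposition~\ref{pr:phi-calR-strongly-compos}: if $\pi'$ extends $\pi$ then $\succceq_\pi\,\supseteq\,\succceq_{\pi'}$, so that $\pi'\models\phi^\RR$ implies $\pi\models\phi^\RR$. I would first recall this fact, and record its contrapositive in the form I need: if $\pi\not\models\phi^\RR$ and $\sigma\sqsupseteq\pi$, then $\sigma\not\models\phi^\RR$. In words, once $\phi^\RR$ fails at a model, it continues to fail at every extension of that model.

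The second ingredient is the structural fact about composition noted just after its definition, that $V_{\pi\circ\pi'}=V_\pi\cup V_{\pi'}$, and more specifically that $\pi\circ\pi'$ begins with $\pi$; hence $\pi\circ\pi'$ extends or equals $\pi$, i.e.\ $\pi\circ\pi'\sqsupseteq\pi$. Combining the two ingredients completes the argument: assume $\pi\models\neg\phi^\RR$, that is, $\pi\not\models\phi^\RR$. Since $\pi\circ\pi'\sqsupseteq\pi$ and $\phi^\RR$ is decreasing, we conclude $\pi\circ\pi'\not\models\phi^\RR$, i.e.\ $\pi\circ\pi'\models\neg\phi^\RR$. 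Specializing to the case where $\pi'$ also satisfies $\neg\phi^\RR$ gives exactly the compositionality condition.

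There is no real obstacle here; the statement is an easy consequence of the decreasing property of $\phi^\RR$ together with the fact that composition only ever extends its first argument. The one point worth stating carefully is why $\pi\circ\pi'$ extends or equals $\pi$ rather than merely sharing variables with it: this follows because, by the definition of $\circ$, the model $\pi\circ\pi'$ is $\pi$ followed by the pairs of $\pi'$ on variables outside $V_\pi$, so its initial segment is precisely $\pi$. Making that explicit, and invoking the already-established decreasing property, is all that the proof requires.
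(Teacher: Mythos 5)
Your proposal is correct and follows essentially the same route as the paper, which proves the proposition in the short paragraph immediately preceding it: since $\pi\circ\pi'$ extends or equals $\pi$ and $\phi^\RR$ is decreasing, $\pi\not\models\phi^\RR$ forces $\pi\circ\pi'\not\models\phi^\RR$ regardless of $\pi'$. Your observation that the hypothesis on $\pi'$ is never used is exactly what the paper's argument exploits as well.
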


%\commentph
\begin{lemma}
\label{le:phi-calR-notequal}
Let $\RR\subseteq\und{V}\times\und{V}$,
and let $\pi,\pi'\in\Glex$ be such that $\pi'$ extends $\pi$.
Suppose that for all $(\alpha,\beta)\in\RR$
there exists $X\in V_\pi$ such that $\alpha(X)\not=\beta(X)$.
Then, for any $(\alpha,\beta)\in\RR$,
$\alpha \succceq_\pi \beta$
%$\iff$ $\alpha \succ_\pi \beta$
$\iff$ $\alpha \succceq_{\pi'} \beta$.
Also,
$\pi\models\phi^\RR$ $\iff$ $\pi'\models\phi^\RR$.
\end{lemma}

\begin{proof}
%\commentphstartend
Consider any $(\alpha,\beta)\in\RR$.
Let $Y$ be the earliest variable in $V_\pi$ such that
$\alpha(Y)\not=\beta(Y)$
(this is well-defined, by the hypothesis).
Then, $\alpha \succceq_\pi \beta$
%$\iff$ $\alpha \succ_\pi \beta$
$\iff$ $\alpha \succceq_{\pi'} \beta$
$\iff$ $\alpha(Y) >_Y \beta(Y)$, where
$>_Y$ is the
strict part of the
 ordering for $Y$ in $\pi$.
We then have $\pi\models\phi^\RR$ if and only if
for all $(\alpha,\beta)\in\RR$, $\alpha \succceq_\pi \beta$
if and only if for all $(\alpha,\beta)\in\RR$, $\alpha \succceq_{\pi'} \beta$
if and only if
$\pi'\models\phi^\RR$.
\end{proof}
%
%\commentph
\begin{lemma}
\label{le:phi-calR-pi-equiv}
Let $\RR\subseteq\und{V}\times\und{V}$,
and let $\pi,\pi'\in\Glex$.
Suppose that
%for all $X\in V_\pi$, we have $\RR^{\downarrow X} \subseteq \ =$
 for all $(\alpha,\beta)\in\RR$ we have $\alpha(V_\pi) = \beta(V_\pi)$.
%We have $\pi\models\phi^\RR$.
Then, $\succceq_{\pi'} \cap\ \RR = \ \succceq_{\pi\circ\pi'} \cap\ \RR$,
and thus,
$\pi'\models\phi^\RR\ \iff\  \pi\circ\pi'\models\phi^\RR$.
\end{lemma}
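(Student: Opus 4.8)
Looking at Lemma~\ref{le:phi-calR-pi-equiv}, I need to prove two things under the hypothesis that every pair $(\alpha,\beta)\in\RR$ satisfies $\alpha(V_\pi)=\beta(V_\pi)$: first, that $\succceq_{\pi'}\cap\,\RR = \succceq_{\pi\circ\pi'}\cap\,\RR$, and second, that this set equality yields the stated equivalence $\pi'\models\phi^\RR \iff \pi\circ\pi'\models\phi^\RR$. Let me sketch my approach.

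The plan is to fix an arbitrary pair $(\alpha,\beta)\in\RR$ and show directly that $\alpha\succceq_{\pi'}\beta \iff \alpha\succceq_{\pi\circ\pi'}\beta$; intersecting with $\RR$ then gives the claimed equality of relations. The crucial observation is that the hypothesis $\alpha(V_\pi)=\beta(V_\pi)$ says precisely that $\alpha\equiv_\pi\beta$, since by the definition of $\equiv_\pi$ given earlier in the paper, $\alpha\equiv_\pi\beta$ holds if and only if $\alpha$ and $\beta$ agree on all variables of $\pi$, i.e.\ $\alpha(V_\pi)=\beta(V_\pi)$. This is exactly the situation covered by the last part of Lemma~\ref{le:composition-basic}, which asserts that when $\alpha\equiv_\pi\beta$ we have $\alpha\succceq_{\pi'}\beta \iff \alpha\succceq_{\pi\circ\pi'}\beta$ (and the analogous statement for $\succ$).

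So the main work is really just translating the hypothesis into the form $\alpha\equiv_\pi\beta$ and then invoking Lemma~\ref{le:composition-basic}. First I would note that for each $(\alpha,\beta)\in\RR$, the condition $\alpha(V_\pi)=\beta(V_\pi)$ is equivalent to $\alpha\equiv_\pi\beta$. Then, applying the equivalence from Lemma~\ref{le:composition-basic}, I get $\alpha\succceq_{\pi'}\beta \iff \alpha\succceq_{\pi\circ\pi'}\beta$ for every such pair. Since this holds for all $(\alpha,\beta)\in\RR$, a pair belongs to $\succceq_{\pi'}\cap\,\RR$ exactly when it belongs to $\succceq_{\pi\circ\pi'}\cap\,\RR$, establishing the first claim.

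For the second claim, I would unwind the definition of $\phi^\RR$: by definition $\pi'\models\phi^\RR$ holds iff $\succceq_{\pi'}\supseteq\RR$, which is iff $\succceq_{\pi'}\cap\,\RR = \RR$, i.e.\ iff $\alpha\succceq_{\pi'}\beta$ for every $(\alpha,\beta)\in\RR$; similarly $\pi\circ\pi'\models\phi^\RR$ iff $\alpha\succceq_{\pi\circ\pi'}\beta$ for every $(\alpha,\beta)\in\RR$. Since the per-pair equivalences coincide, the two satisfaction conditions coincide, giving $\pi'\models\phi^\RR \iff \pi\circ\pi'\models\phi^\RR$. I do not expect any genuine obstacle here; the only subtle point is recognising that the hypothesis is the $\equiv_\pi$ condition that unlocks Lemma~\ref{le:composition-basic}, after which everything is a routine quantifier-over-$\RR$ argument.
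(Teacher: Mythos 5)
Your proposal is correct and follows essentially the same route as the paper's proof: translating the hypothesis $\alpha(V_\pi)=\beta(V_\pi)$ into $\alpha\equiv_\pi\beta$ for each pair in $\RR$, applying the last part of Lemma~\ref{le:composition-basic} to get the per-pair equivalence $\alpha\succceq_{\pi'}\beta \iff \alpha\succceq_{\pi\circ\pi'}\beta$, and then deducing both the set equality and the satisfaction equivalence for $\phi^\RR$. No gaps.
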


\begin{proof}
%\commentphstartend
Consider any $(\alpha,\beta)\in\RR$.
We have $\alpha\equiv_\pi \beta$.
By Lemma~\ref{le:composition-basic},
$\alpha\succceq_{\pi'} \beta$ $\iff$ $\alpha\succceq_{\pi\circ\pi'} \beta$.
This implies $\succceq_{\pi'} \cap\ \RR = \ \succceq_{\pi\circ\pi'} \cap\ \RR$,
and $\pi'\models\phi^\RR\ \iff\  \pi\circ\pi'\models\phi^\RR$.
\end{proof}
%
%%\commentph
%\begin{lemma}
%\label{le:neg-calR-not-acyclic}
%Let $\RR\subseteq\und{V}\times\und{V}$,
%and let $\pi,\pi'\in\Glex$ be such that $\pi'$ extends $\pi$.
%Suppose that $X$ is the first variable in $\pi$ on which
%some pair in $\RR$ differs,
%so that there exists $(\alpha,\beta)\in\RR$ such that
%$\alpha(X)\not=\beta(X)$, and this does not hold for any earlier variable in $\pi$.
%If $\RR^{\downarrow X}$ is not acyclic then
%$\pi\not\models\phi^\RR$ and $\pi'\not\models\phi^\RR$.
%\end{lemma}
%
%\begin{proof}
%%\commentphstartend
%Let $\ge_X$ be the non-strict relation for $X$ in $\pi$.
%If $\pi\models\phi^\RR$ or $\pi'\models\phi^\RR$ then,
% by Lemma~\ref{le:alpha-pi-beta-Gamma-sequential},
% $\ge_X$ contains $\RR^{\downarrow X}$
%and thus, the latter is acyclic.
%\end{proof}
%\commentph
\begin{lemma}
\label{le:neg-calR-not-acyclic}
Let $\RR\subseteq\und{V}\times\und{V}$,
and let $\pi\in\Glex$.
Suppose that $X$ is the first variable in $\pi$ on which
some pair in $\RR$ differs,
so that there exists $(\alpha,\beta)\in\RR$ such that
$\alpha(X)\not=\beta(X)$, and this does not hold for any earlier variable in $\pi$.
If $\RR^{\downarrow X}$ is not acyclic then
$\pi\not\models\phi^\RR$.
\end{lemma}

\begin{proof}
%\commentphstartend
Let $\ge_X$ be the non-strict relation for $X$ in $\pi$.
If $\pi\models\phi^\RR$ then,
 by Lemma~\ref{le:alpha-pi-beta-Gamma-sequential},
 $\ge_X$ contains $\RR^{\downarrow X}$
and thus, the latter is acyclic.
\end{proof}

%\commentph
Define $\calM_{\neg \phi^\RR}$ to be the set of models $\pi\in\Glex$
such that either (i) $\pi\not\models\phi^\RR$ or
(ii) for every variable $X\in V_\pi$,
if $\RR^{\downarrow X}$ is acyclic then $\RR^{\downarrow X} \subseteq\ =$.

%\commentph
Define $\calM'_{\neg \phi^\RR}$ to be the set of models $\pi\in\Glex$
such that either (i) $\pi\not\models\phi^\RR$ or
(ii) there is no variable $X\in V_\pi$
such that  $\RR^{\downarrow X}$ is acyclic and irreflexive.

%\commentph
Note that $\calM_{\neg \phi^\RR} \subseteq \calM'_{\neg \phi^\RR}$.

%\commentph
\begin{lemma}
\label{le:neg-phi-calR-calM-compos}
If $\pi\in\calM_{\neg \phi^\RR}$ and
$\pi'\models\neg\phi^\RR$ then $\pi\circ\pi'\models\neg\phi^\RR$.
\end{lemma}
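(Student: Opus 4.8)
The plan is to prove the statement by cases on whether $\pi \not\models \phi^\RR$ or condition (ii) holds for $\pi$, using the earlier lemmas about projections and composition. We want to show $\pi \circ \pi' \models \neg\phi^\RR$, i.e. $\pi \circ \pi' \not\models \phi^\RR$. First I would dispose of the easy case: if $\pi \not\models \phi^\RR$ (condition (i)), then since $\pi \circ \pi'$ extends or equals $\pi$ and $\phi^\RR$ is decreasing (as noted before Proposition~\ref{pr:phi-calR-strongly-compos}), we immediately get $\pi \circ \pi' \not\models \phi^\RR$, which is exactly $\pi\circ\pi'\models\neg\phi^\RR$. This case requires no real work.

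The substantive case is when $\pi \models \phi^\RR$ but condition (ii) of $\calM_{\neg\phi^\RR}$ holds: for every variable $X \in V_\pi$, if $\RR^{\downarrow X}$ is acyclic then $\RR^{\downarrow X}\,\subseteq\,=$ (i.e.\ it is a subset of the equality relation on $\und{X}$). Here I would analyse the pairs in $\RR$ relative to the variables of $\pi$. The key observation I would aim for is that, under condition (ii), every pair $(\alpha,\beta)\in\RR$ must in fact satisfy $\alpha(V_\pi) = \beta(V_\pi)$, i.e.\ the pairs agree on all variables appearing in $\pi$. Once I establish this, I can apply Lemma~\ref{le:phi-calR-pi-equiv} directly: that lemma says precisely that if all pairs in $\RR$ agree on $V_\pi$, then $\pi' \models \phi^\RR \iff \pi\circ\pi' \models \phi^\RR$, and hence $\pi'\not\models\phi^\RR \iff \pi\circ\pi'\not\models\phi^\RR$. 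Since $\pi'\models\neg\phi^\RR$ gives $\pi'\not\models\phi^\RR$, we conclude $\pi\circ\pi'\not\models\phi^\RR$, as required.

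The crux is therefore establishing that condition (ii) together with $\pi\models\phi^\RR$ forces every pair in $\RR$ to agree on $V_\pi$. I would argue by contradiction via the ``first differing variable'' idea used in Lemma~\ref{le:neg-calR-not-acyclic} and Lemma~\ref{le:phi-calR-notequal}. Suppose some pair in $\RR$ differs on some variable of $\pi$; let $X$ be the earliest variable of $\pi$ on which some pair of $\RR$ differs. Since $\pi\models\phi^\RR$, Lemma~\ref{le:alpha-pi-beta-Gamma-sequential} tells us $\ge_X$ (the order for $X$ in $\pi$) contains $\RR^{\downarrow X}_{A}$ for the appropriate earlier-variable set, and because no earlier variable of $\pi$ distinguishes any pair, this restricted projection coincides with the full projection $\RR^{\downarrow X}$; as a total order $\ge_X$ contains it, so $\RR^{\downarrow X}$ is acyclic. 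Condition (ii) then forces $\RR^{\downarrow X}\,\subseteq\,=$, contradicting the choice of $X$ as a variable on which some pair genuinely differs. Hence no such $X$ exists and all pairs agree on $V_\pi$.

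The main obstacle I anticipate is getting the bookkeeping right about the restricted versus unrestricted projection at the first differing variable $X$: one must verify carefully that, because $X$ is chosen to be the \emph{earliest} variable on which any pair of $\RR$ differs, the pairs witnessing entries of $\RR^{\downarrow X}$ automatically agree on all variables preceding $X$ in $\pi$, so that $\RR^{\downarrow X}_{A_X}$ (the $A_X$-restricted projection, where $A_X$ is the set of variables before $X$) equals $\RR^{\downarrow X}$ on the relevant entries. Once that identification is secured, Lemma~\ref{le:alpha-pi-beta-Gamma-sequential} yields acyclicity cleanly and the contradiction closes. Everything else is a direct appeal to the decreasing property and to Lemma~\ref{le:phi-calR-pi-equiv}.
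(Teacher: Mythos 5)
Your proof is correct, and it reorganises the substantive case in a way that genuinely differs from the paper's argument. Both proofs dispose of the case $\pi\not\models\phi^\RR$ identically, via the decreasing property of $\phi^\RR$. For the remaining case the paper splits further: if all pairs of $\RR$ agree on $V_\pi$ it invokes Lemma~\ref{le:phi-calR-pi-equiv}, and if some pair differs it takes the first differing variable $X$, notes that condition (ii) of $\calM_{\neg\phi^\RR}$ forces $\RR^{\downarrow X}$ to be non-acyclic, and applies Lemma~\ref{le:neg-calR-not-acyclic} \emph{to the composed model} $\pi\circ\pi'$ to conclude $\pi\circ\pi'\not\models\phi^\RR$ directly. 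You instead prove that this second sub-case cannot occur: since $\pi\models\phi^\RR$, Lemma~\ref{le:alpha-pi-beta-Gamma-sequential} (together with the observation that at the first differing variable the $A_X$-restricted projection coincides with $\RR^{\downarrow X}$) shows $\RR^{\downarrow X}$ is contained in the total order $\ge_X$ and hence acyclic, whereupon condition (ii) forces $\RR^{\downarrow X}\subseteq\,=$, contradicting the choice of $X$; so every pair of $\RR$ agrees on $V_\pi$ and Lemma~\ref{le:phi-calR-pi-equiv} finishes the proof on its own. Your bookkeeping concern about restricted versus unrestricted projections is resolved exactly as you anticipate, and it is the same step the paper uses implicitly inside the proof of Lemma~\ref{le:neg-calR-not-acyclic}. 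What your route buys is a clean intermediate structural fact — any $\pi\in\calM_{\neg\phi^\RR}$ with $\pi\models\phi^\RR$ makes all pairs of $\RR$ equivalent under $\equiv_\pi$ — which the paper never states explicitly; what the paper's route buys is brevity, since Lemma~\ref{le:neg-calR-not-acyclic} already packages the acyclicity argument and applies verbatim to $\pi\circ\pi'$ without needing the vacuity claim.
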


\begin{proof}
%\commentphstartend
Suppose that $\pi\in\calM_{\neg \phi^\RR}$ and
$\pi'\models\neg\phi^\RR$.
First consider the case when $\pi\not\models\phi^\RR$.
The fact that $\phi^\RR$ is decreasing implies that $\pi\circ\pi'\not\models\phi^\RR$,
i.e., $\pi\circ\pi'\models\neg\phi^\RR$.
Now consider the other case, when $\pi\models\phi^\RR$.
If for all $(\alpha,\beta)\in\RR$ we have $\alpha(V_\pi) = \beta(V_\pi)$
then Lemma~\ref{le:phi-calR-pi-equiv} implies that $\pi\circ\pi'\models\neg\phi^\RR$.
Otherwise, let $X$ be the first variable in $\pi$ on which
some pair in $\RR$ differ.
The definition of $\calM_{\neg \phi^\RR}$ implies that
$\RR^{\downarrow X}$ is not acyclic, and thus,
$\pi\circ\pi'\models\neg\phi^\RR$, by Lemma~\ref{le:neg-calR-not-acyclic}.
\end{proof}

%\commentph
\begin{lemma}
\label{le:calM-prime-decreasing}
$\calM'_{\neg \phi^\RR}$ is decreasing.
\end{lemma}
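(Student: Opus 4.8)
The plan is to unfold the definition of $\calM'_{\neg \phi^\RR}$ and verify the decreasing property directly: given $\pi' \sqsupset \pi$ with $\pi' \in \calM'_{\neg \phi^\RR}$, I must show $\pi \in \calM'_{\neg \phi^\RR}$. Recall that membership in $\calM'_{\neg \phi^\RR}$ means either (i) $\pi\not\models\phi^\RR$, or (ii) there is no variable $X \in V_\pi$ such that $\RR^{\downarrow X}$ is acyclic and irreflexive. I would split the argument according to which of these two conditions holds for $\pi'$.

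First I would handle case (i): suppose $\pi' \not\models \phi^\RR$. Since $\phi^\RR$ is decreasing (as noted in the text before Proposition~\ref{pr:phi-calR-strongly-compos}, because $\pi' \sqsupseteq \pi$ implies $\succceq_{\pi} \supseteq \succceq_{\pi'}$), the implication $\pi'\models\phi^\RR \Rightarrow \pi\models\phi^\RR$ gives the contrapositive $\pi\not\models\phi^\RR$, so $\pi$ satisfies condition (i) and hence $\pi \in \calM'_{\neg \phi^\RR}$. This case is immediate.

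The substance is in case (ii): suppose $\pi' \models \phi^\RR$ but there is no variable $X \in V_{\pi'}$ with $\RR^{\downarrow X}$ acyclic and irreflexive. I want to conclude the analogous statement for $\pi$. The key observation is that $V_\pi \subseteq V_{\pi'}$, since $\pi'$ extends $\pi$. Therefore any variable $X \in V_\pi$ is also in $V_{\pi'}$, and the hypothesis on $\pi'$ already forbids $\RR^{\downarrow X}$ from being both acyclic and irreflexive for every such $X$. Hence there is no $X \in V_\pi$ with $\RR^{\downarrow X}$ acyclic and irreflexive, so $\pi$ satisfies condition (ii) and thus $\pi \in \calM'_{\neg \phi^\RR}$. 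The crucial point making this work is that condition (ii) is a \emph{universal} (``there is no'') statement over the variable set, and shrinking the variable set from $V_{\pi'}$ to the subset $V_\pi$ can only make such a universal statement easier to satisfy, since the property $\RR^{\downarrow X}$ acyclic and irreflexive depends only on $X$ itself and not on the rest of the model.

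I expect no serious obstacle here; the proof is essentially a monotonicity bookkeeping argument combining the decreasing property of $\phi^\RR$ with the set inclusion $V_\pi \subseteq V_{\pi'}$. The only point requiring a little care is that the definition uses an ``either (i) or (ii)'' disjunction, so I must check that whichever disjunct holds for $\pi'$ forces \emph{some} disjunct (not necessarily the same one) to hold for $\pi$; in fact case (i) for $\pi'$ yields case (i) for $\pi$, and case (ii) for $\pi'$ yields case (ii) for $\pi$, so the disjunction is preserved cleanly. Finally I would note that $\RR^{\downarrow X}$ is defined as a projection depending only on the variable $X$, so whether it is acyclic and irreflexive is a property intrinsic to $X$ and $\RR$, independent of the surrounding model, which is exactly what licenses the restriction argument.
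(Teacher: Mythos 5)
Your case split and your handling of disjunct (ii) are fine, but the argument for disjunct (i) contains a genuine logical error, and that case is precisely where the content of the lemma lives. The decreasing property of $\phi^\RR$ (from $\succceq_{\pi}\supseteq\succceq_{\pi'}$ when $\pi'\sqsupseteq\pi$) is the implication $\pi'\models\phi^\RR \Rightarrow \pi\models\phi^\RR$: an \emph{extension} can only lose models of $\phi^\RR$. Its contrapositive is $\pi\not\models\phi^\RR \Rightarrow \pi'\not\models\phi^\RR$, which is the opposite of what you use; from $\pi'\not\models\phi^\RR$ you cannot conclude $\pi\not\models\phi^\RR$. Indeed a prefix routinely satisfies $\phi^\RR$ while its extension does not (the empty model satisfies every $\phi^\RR$). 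Concretely: take $\RR=\set{(\alpha,\beta)}$ with $\alpha(X)=\beta(X)$ and $\alpha(Y)\not=\beta(Y)$, let $\pi=(X,\ge_X)$ and let $\pi'$ append $Y$ with $\beta(Y)$ ranked above $\alpha(Y)$. Then $\pi'\not\models\phi^\RR$ but $\pi\models\phi^\RR$ (since $\alpha\equiv_\pi\beta$), so your claim that ``case (i) for $\pi'$ yields case (i) for $\pi$'' fails; here $\pi$ lands in $\calM'_{\neg\phi^\RR}$ only via disjunct (ii), because $\RR^{\downarrow X}$ is reflexive.

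What is missing is the one nontrivial ingredient the paper uses. In the problematic configuration --- $\pi'\not\models\phi^\RR$, yet $\pi\models\phi^\RR$ and some $X\in V_\pi$ has $\RR^{\downarrow X}$ acyclic and irreflexive --- irreflexivity of $\RR^{\downarrow X}$ forces every pair $(\alpha,\beta)\in\RR$ to differ on $X\in V_\pi$, and then Lemma~\ref{le:phi-calR-notequal} gives $\pi\models\phi^\RR \iff \pi'\models\phi^\RR$, a contradiction. The paper packages the whole proof as a single contradiction argument: assuming $\pi\notin\calM'_{\neg\phi^\RR}$, the witness $X\in V_\pi\subseteq V_{\pi'}$ forces $\pi'$ to be in $\calM'_{\neg\phi^\RR}$ via disjunct (i), i.e.\ $\pi'\not\models\phi^\RR$, and then Lemma~\ref{le:phi-calR-notequal} contradicts $\pi\models\phi^\RR$. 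Your proof never invokes this lemma and so cannot close the gap; the result is not a pure ``monotonicity bookkeeping'' fact about the disjunction.
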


\begin{proof}
%\commentphstartend
Suppose that $\pi'$ extends $\pi$
and that $\pi' \in \calM'_{\neg \phi^\RR}$.
We need to show that $\pi \in \calM'_{\neg \phi^\RR}$.
We proceed using proof by contradiction.
Assume that $\pi \notin \calM'_{\neg \phi^\RR}$.
Thus, $\pi\models\phi^\RR$
and there exists variable $X\in V_\pi$
such that  $\RR^{\downarrow X}$ is acyclic and irreflexive.
Since $\pi' \in \calM'_{\neg \phi^\RR}$ and $V_{\pi'}\supseteq V_\pi$,
we must have $\pi'\not\models\phi^\RR$.
Because $\RR^{\downarrow X}$ is irreflexive,
for all  $(\alpha,\beta)\in\RR$,
$\alpha$ and $\beta$ differ on variable $X$.
 Lemma~\ref{le:phi-calR-notequal} leads to the required contradiction.
\end{proof}

%\commentph
\begin{definition}
\label{def:simult-decisive}
For $\RR\subseteq\und{V}\times\und{V}$,
we say that $\RR$ is
\emph{simultaneously decisive} if
for all $X\in V$:
if $\RR^{\downarrow X}$ is acyclic then either
$\RR^{\downarrow X}$ is irreflexive or $\RR^{\downarrow X} \subseteq\ =$.
\end{definition}

%\commentph
\begin{proposition}
\label{pr:neg-strongly-compos}
If $\RR\subseteq\und{V}\times\und{V}$
is simultaneously decisive then
$\neg\phi^\RR$ is strongly compositional,
and for all $\pi\in\Glex$, $\pi\modelsstar\neg\phi^\RR$ $\iff$ $\pi\in\calM_{\neg\phi^\RR}$.
\end{proposition}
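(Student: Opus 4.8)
The plan is to deduce both conclusions at once from Proposition~\ref{pr:strong-compos-equiv-form}, applied with the statement $\neg\phi^\RR$ playing the role of $\phi$ and the set $\calM_{\neg\phi^\RR}$ playing the role of $\calM_\phi$. That proposition requires $\neg\phi^\RR$ to be consistent, so I would first isolate the degenerate case: when $\neg\phi^\RR$ is inconsistent it is trivially strongly compositional by the remark following the definition of (strong) compositionality, and this boundary case is handled separately. In the main (consistent) case it then suffices to verify condition~(I) of Proposition~\ref{pr:strong-compos-equiv-form} for the pair $(\neg\phi^\RR,\calM_{\neg\phi^\RR})$; once~(I) is established, the implication (I)$\Rightarrow$(II) delivers simultaneously that $\neg\phi^\RR$ is strongly compositional and that $\pi\modelsstar\neg\phi^\RR \iff \pi\in\calM_{\neg\phi^\RR}$ for every $\pi\in\Glex$.

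Two of the three requirements in~(I) are already available. That $\calM_{\neg\phi^\RR}$ contains all models of $\neg\phi^\RR$ is immediate from clause~(i) of its definition, since any $\pi$ with $\pi\models\neg\phi^\RR$ satisfies $\pi\not\models\phi^\RR$. The composition requirement, that $\pi\in\calM_{\neg\phi^\RR}$ and $\pi'\models\neg\phi^\RR$ imply $\pi\circ\pi'\models\neg\phi^\RR$, is exactly Lemma~\ref{le:neg-phi-calR-calM-compos}. So the only remaining task, and the crux of the argument, is to show that $\calM_{\neg\phi^\RR}$ is \emph{decreasing}.

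The decreasing property is precisely where simultaneous decisiveness is needed, and I would obtain it by proving the set equality $\calM_{\neg\phi^\RR} = \calM'_{\neg\phi^\RR}$ and then invoking Lemma~\ref{le:calM-prime-decreasing}, which already gives that $\calM'_{\neg\phi^\RR}$ is decreasing. The inclusion $\calM_{\neg\phi^\RR}\subseteq\calM'_{\neg\phi^\RR}$ is noted in the text. For the reverse inclusion, take $\pi\in\calM'_{\neg\phi^\RR}$; if $\pi\not\models\phi^\RR$ then $\pi\in\calM_{\neg\phi^\RR}$ by clause~(i), so I may assume $\pi\models\phi^\RR$, whence clause~(ii) of the definition of $\calM'_{\neg\phi^\RR}$ holds. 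Now fix any $X\in V_\pi$ with $\RR^{\downarrow X}$ acyclic: clause~(ii) for $\calM'_{\neg\phi^\RR}$ says $\RR^{\downarrow X}$ is not irreflexive, and simultaneous decisiveness then forces $\RR^{\downarrow X}\subseteq\ =$. This is exactly clause~(ii) in the definition of $\calM_{\neg\phi^\RR}$, so $\pi\in\calM_{\neg\phi^\RR}$, giving the equality. I expect this transfer, from the absence of an acyclic irreflexive projection to the condition that every acyclic projection is contained in equality, to be the only subtle step; everything else is assembly of the earlier lemmas through Proposition~\ref{pr:strong-compos-equiv-form}.
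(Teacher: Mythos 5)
Your proposal is correct and follows essentially the same route as the paper's proof: establish $\calM_{\neg\phi^\RR}=\calM'_{\neg\phi^\RR}$ via simultaneous decisiveness, deduce the decreasing property from Lemma~\ref{le:calM-prime-decreasing}, and then combine Lemma~\ref{le:neg-phi-calR-calM-compos} with Proposition~\ref{pr:strong-compos-equiv-form}. Your explicit attention to the consistency hypothesis of Proposition~\ref{pr:strong-compos-equiv-form} is a point the paper glosses over, but otherwise the two arguments coincide.
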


\begin{proof}
%\commentphstart
We will first show that $\calM_{\neg \phi^\RR} = \calM'_{\neg \phi^\RR}$.
Suppose that $\pi\in\calM'_{\neg \phi^\RR}$.
If $\pi\not\models\phi^\RR$ then we clearly have $\pi\in\calM_{\neg \phi^\RR}$,
so we can assume that there is no variable $X\in V_\pi$
such that  $\RR^{\downarrow X}$ is acyclic and irreflexive.
Thus, by our assumption on $\RR$,
if $\RR^{\downarrow X}$ is acyclic then  $\RR^{\downarrow X} \subseteq\ =$,
and thus, $\pi\in\calM_{\neg \phi^\RR}$.

%\commentphend
Lemma~\ref{le:calM-prime-decreasing} then implies that $\calM_{\neg \phi^\RR}$
is decreasing, and it contains all models of $\neg \phi^\RR$.
Then, Lemma~\ref{le:neg-phi-calR-calM-compos} and
Proposition~\ref{pr:strong-compos-equiv-form} imply the result.
\end{proof}

\section{Preference Languages $\calLpq$ and $\calLpqn$}
\label{sec:pref-languages-calLpq}

Here we show that certain relatively expressive compact preference languages are strongly compositional.
This includes forms of the statements
$\phi^\RR$
from Proposition~\ref{pr:strong-compos-alpha-beta-calR},
where $\RR$ is a set of pairs of outcomes.
In many natural situations, $\RR$ can be exponentially large;
in the languages discussed here,
we are able to express certain exponentially large sets $\RR$ compactly.

\subsection{The Language $\calLpq$ }
\label{subsec:language-calLpq}

We will consider preference statements
of the
form $p \rhd q$ \eqq~$T$,
where
$\rhd$ is either $\ge$, or $\gg$ or $>$, and
 $P$, $Q$ and $T$ are subsets of $V$,
 with
$(P\cup Q) \cap T = \emp$,
and  $p \in \und{P}$ is an assignment to $P$,
and $q \in \und{Q}$ is an assignment to $Q$.
The statement $p \rhd q$ \eqq~$T$
represents that
$p$ is preferred to $q$ if $T$ is held constant, i.e.,
any outcome $\alpha$ extending $p$ is preferred to
any outcome $\beta$ that extends $q$ and agrees with $\alpha$ on variables $T$.
Formally, the semantics of this statement
relates to the set
$\phi^*$ which is defined to be the set of
pairs $\alphabeta$ of outcomes such that
$\alpha$ extends $p$, and $\beta$ extends $q$,
and $\alpha$ and $\beta$ agree on $T$,
i.e., $\alpha(T) = \beta(T)$.

Statements of the form $p \ge q$ \eqq~$T$
are called \emph{non-strict};
statements of the form $p \gg q$ \eqq~$T$,
are called \emph{fully strict},
and statements of the form
$p > q$ \eqq~$T$ are called \emph{weakly strict}.

Let $\calLpq$ be the set of all preference statements $\phi$
of the form $p \rhd q$ \eqq~$T$, as defined above.
For any statement $\phi\in\calLpq$ equalling $p \rhd q$ \eqq~$T$,
we define $\phidestrict$ to be
$p \ge q$ \eqq~$T$,
the \emph{non-strict version of $\phi$}.
For lex model $\pi$, we define:
\begin{itemize}
  \item $\pi$  satisfies $\phidestrict$ if
 $\alpha \succceq_\pi \beta$ for all $(\alpha,\beta)\in\phi^*$.
  \item $\pi$ satisfies fully strict $\phi$
if $\alpha \succ_\pi \beta$ for all $(\alpha,\beta)\in\phi^*$.
  \item $\pi$ satisfies weakly strict $\phi$ if
  $\pi$ satisfies $\phidestrict$ and
if $\alpha \succ_\pi \beta$ for some $(\alpha,\beta)\in\phi^*$.
\end{itemize}

For outcomes $\alpha$ and $\beta$, a non-strict preference of $\alpha$ over $\beta$
can be represented as
$\alpha \ge \beta$ \eqq~$\emp$,
which is equivalent to the preference statement $\gralphabeta$ introduced in Section~\ref{sec:lex-inf-strong-compos},
so we abbreviate it to that.
 Similarly, we abbreviate $\alpha > \beta$ \eqq~$\emp$ to $\alpha>\beta$
(which is also equivalent to  $\alpha \gg \beta$ \eqq~$\emp$).

\newcommand{\WW}{W}

We can write a statement $\phi\in\calLpq$
as
$ur \rhd us $ \eqq~$T$, where $u \in\und{U}$, $r\in\und{R}$,
$s \in \und{S}$, and $U$, $T$ and $R \cup S$ are (possibly empty) mutually disjoint subsets of $V$,
and for all $X \in R \cap S$, $r(X) \not= s(X)$.
For such a representation, we write
$u_\phi = u$, $r_\phi = r$, $s_\phi = s$, $U_\phi = U$, $R_\phi = R$, $S_\phi = S$ and $T_\phi = T$.
We assume, without loss of generality,  that
if $|\und{X}| = 1$ then $X\in T_\phi$.
This ensures that such a representation is unique.
We also define
$\WW_\phi = V - (R_\phi \cup S_\phi \cup T_\phi \cup U_\phi)$.

As discussed in \cite{Wilson09} (which, however, just considers non-strict statements), this is a relatively expressive preference language.
As well as preferences between outcomes, preferences between partial tuples can be expressed.
\emph{Ceteris paribus} statements can be expressed by using
$\phi$ with $\WW_\phi = \emp$.
More generally, any variables in $\WW_\phi$ are forced to be less important than variables in $R_\phi$ and $S_\phi$.

%\commentph
\subsubsection*{Projections to $Y$}

%\commentph
For comparative preference statement $\phi$
we abbreviate $(\phi^*)_\AA^{\downarrow Y}$ to $\phi_\AA^{\downarrow Y}$.

%\commentph
Proposition~1 of~\cite{Wilson14} leads to the following result.

%\commentph
\begin{proposition} \label{pr:rs-local-ordering}
Consider any element $\phi \in \calLpq$.
Let $\AA$ be a set of variables and let $Y$ be a variable not in $A$.
If $R_\phi \cap S_\phi \cap A \not=\emp$
then
$\phi_\AA^{\downarrow Y}$ is empty.
Otherwise,
$\phi_\AA^{\downarrow Y}$ consists of all pairs $(y, y') \in\und{Y}\times\und{Y}$
such that
(i) $y = y'$ if $Y \in T_\phi$;
(ii) $y = y' = u_\phi(Y)$ if $Y \in U_\phi$
(iii) $y = r_\phi(Y)$  if $Y \in R_\phi$; and
(iv) $y' = s_\phi(Y)$ if $Y \in S_\phi$.
Thus if $R_\phi \cap S_\phi \cap A =\emp$  and
$Y\in \WW_\phi$ then  $\phi_\AA^{\downarrow Y} =\und{Y}\times\und{Y}$.
\end{proposition}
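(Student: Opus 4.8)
The plan is to argue directly from the definition of $\phi^*$ together with the canonical representation $\phi = u_\phi r_\phi \rhd u_\phi s_\phi \parallel T_\phi$, recalling that $(\alpha,\beta)\in\phi^*$ holds exactly when $\alpha(U_\phi)=\beta(U_\phi)=u_\phi$, $\alpha(R_\phi)=r_\phi$, $\beta(S_\phi)=s_\phi$, and $\alpha(T_\phi)=\beta(T_\phi)$, while $\alpha$ and $\beta$ are otherwise free. Since $\phi_A^{\downarrow Y}$ collects the pairs $(\alpha(Y),\beta(Y))$ over those $(\alpha,\beta)\in\phi^*$ that additionally satisfy $\alpha(A)=\beta(A)$, I would organise the argument around which of the four mutually disjoint blocks $U_\phi$, $T_\phi$, $R_\phi\cup S_\phi$, and $W_\phi$ (which together partition $V$) contains the variable $Y$.

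First I would dispose of the empty case. If some $X\in R_\phi\cap S_\phi\cap A$ exists, then every $(\alpha,\beta)\in\phi^*$ has $\alpha(X)=r_\phi(X)\ne s_\phi(X)=\beta(X)$; since $X\in A$, this contradicts the side condition $\alpha(A)=\beta(A)$, so no admissible witness exists and $\phi_A^{\downarrow Y}=\emp$.

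For the main case $R_\phi\cap S_\phi\cap A=\emp$, I would prove the two inclusions separately. The forward inclusion is routine: given $(y,y')\in\phi_A^{\downarrow Y}$ with a witness $(\alpha,\beta)$, each of conditions (i)--(iv) is read off directly from the defining constraints of $\phi^*$ according to whether $Y$ lies in $T_\phi$, $U_\phi$, $R_\phi$, or $S_\phi$ (with both (iii) and (iv) applying when $Y\in R_\phi\cap S_\phi$). The substantive direction is the reverse inclusion: given $(y,y')$ satisfying the applicable conditions, I must build a witness $(\alpha,\beta)\in\phi^*$ with $\alpha(A)=\beta(A)$, $\alpha(Y)=y$, and $\beta(Y)=y'$. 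I would define $\alpha$ and $\beta$ variable by variable, honouring the forced values on $U_\phi$, $R_\phi$, $S_\phi$, setting $\alpha$ and $\beta$ equal on $T_\phi$, placing $y$ and $y'$ at $Y$, and, for each $X\in A$, forcing $\alpha(X)=\beta(X)$ by copying whichever value is already fixed onto the free side (or choosing any common value when neither side is fixed).

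The one point where this variable-by-variable definition could fail is a variable $X\in A$ on which $\alpha$ and $\beta$ are \emph{both} already forced, to differing values; this occurs precisely when $X\in R_\phi\cap S_\phi\cap A$, which the hypothesis excludes. Verifying that the hypothesis is exactly what guarantees consistency at every variable is the crux of the argument; everywhere else at least one side is free and can be matched, so the construction yields a genuine element of $\phi^*$ with $\alpha(A)=\beta(A)$. Finally, the concluding assertion is immediate: if $Y\in W_\phi$ then none of (i)--(iv) constrains $(y,y')$, so the characterised set is all of $\und{Y}\times\und{Y}$.
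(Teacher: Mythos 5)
Your proof is correct. The paper itself gives no argument for this proposition, deferring instead to Proposition~1 of the cited earlier work, so there is no in-paper proof to compare against; your direct verification from the definition of $\phi^*$ and the canonical representation $u_\phi r_\phi \rhd u_\phi s_\phi$ \eqq~$T_\phi$ is exactly the natural way to establish it, and you correctly identify the one place the construction of a witness could fail (a variable in $R_\phi \cap S_\phi \cap \AA$, where both sides are forced to differing values) as precisely what the hypothesis $R_\phi \cap S_\phi \cap \AA = \emp$ rules out.
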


%\commentph
The following lemma will be used later.

%\commentph
\begin{lemma}
\label{le:equality-for-strict}
Consider any $\phi\in\calLpq$, and any set of variables $\AA \subseteq V$.
We have the following.
\begin{itemize}
  \item[(i)] There exists $(\alpha,\beta)\in\phi^*$ such that $\alpha(\AA) = \beta(\AA)$
  if and only if $R_\phi\cap S_\phi \cap \AA = \emp$.
  \item[(ii)] $\alpha(\AA) = \beta(\AA)$  holds for all $(\alpha,\beta)\in\phi^*$
  if and only if
%  $(R_\phi \cup S_\phi \cup \WW_\phi) \cap \AA =\emp$.
$\AA \subseteq T_\phi \cup U_\phi$.
\end{itemize}
\end{lemma}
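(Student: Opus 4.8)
The plan is to argue directly from the unique representation $\phi = u_\phi r_\phi \rhd u_\phi s_\phi \parallel T_\phi$ and the explicit form of $\phi^*$: a pair $\alphabeta$ lies in $\phi^*$ exactly when $\alpha(U_\phi) = \beta(U_\phi) = u_\phi$, $\alpha(R_\phi) = r_\phi$, $\beta(S_\phi) = s_\phi$ and $\alpha(T_\phi) = \beta(T_\phi)$. Two auxiliary facts will be used repeatedly: first, $\phi^*$ is nonempty (assign $u_\phi$ on $U_\phi$, $r_\phi$ on $R_\phi$, $s_\phi$ on $S_\phi$, a common value on $T_\phi$, and arbitrary values elsewhere); and second, the normalisation convention that $|\und{X}| = 1$ forces $X \in T_\phi$, so every variable outside $T_\phi$ admits two distinct domain values. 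It is convenient to keep in mind the partition of $V$ into $U_\phi$, $T_\phi$, $R_\phi \setminus S_\phi$, $S_\phi \setminus R_\phi$, $R_\phi \cap S_\phi$ and $\WW_\phi$.

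For part (i), the forward direction is immediate by contraposition: if some $X$ lies in $R_\phi \cap S_\phi \cap \AA$, then every $\alphabeta \in \phi^*$ has $\alpha(X) = r_\phi(X) \ne s_\phi(X) = \beta(X)$, using the defining property that $r_\phi$ and $s_\phi$ differ on $R_\phi \cap S_\phi$; hence no pair can agree on $\AA$. For the converse I would build an explicit witness $\alphabeta \in \phi^*$ with $\alpha(\AA) = \beta(\AA)$ by choosing values class by class: on $U_\phi$ and $T_\phi$ the outcomes agree automatically; on $R_\phi \setminus S_\phi$ the coordinate $\alpha(X) = r_\phi(X)$ is forced while $\beta(X)$ is free, so I set $\beta(X) = r_\phi(X)$ when $X \in \AA$ (symmetrically on $S_\phi \setminus R_\phi$); on $\WW_\phi$ both coordinates are free and set equal when $X \in \AA$; and on $R_\phi \cap S_\phi$ the two are forced to differ, which causes no clash precisely because the hypothesis $R_\phi \cap S_\phi \cap \AA = \emp$ keeps those variables out of $\AA$.

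For part (ii), the backward direction is immediate: if $\AA \subseteq T_\phi \cup U_\phi$ then every $\alphabeta \in \phi^*$ agrees on $\AA$, since the outcomes share $u_\phi$ on $U_\phi$ and are forced equal on $T_\phi$. The forward direction I again handle by contraposition: suppose some $X \in \AA$ satisfies $X \notin T_\phi \cup U_\phi$, so $X \in R_\phi \cup S_\phi \cup \WW_\phi$. If $X \in R_\phi \cap S_\phi$, any witness in $\phi^*$ already disagrees at $X$; if $X$ lies in $R_\phi \setminus S_\phi$, $S_\phi \setminus R_\phi$ or $\WW_\phi$, I take a witness and use the free coordinate at $X$, together with $|\und{X}| \ge 2$, to force $\alpha(X) \ne \beta(X)$. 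In every case I obtain a pair in $\phi^*$ with $\alpha(\AA) \ne \beta(\AA)$, contradicting the hypothesis.

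The calculations are routine; the one place that needs genuine care is ensuring that disagreement at a variable can actually be realised, which is exactly where the convention $|\und{X}| = 1 \Rightarrow X \in T_\phi$ is indispensable. Without it, a singleton-domain variable nominally placed in $\WW_\phi$ would force agreement and break the forward direction of (ii); the convention rules this out, and is also what makes the representation of $\phi$ unique in the first place.
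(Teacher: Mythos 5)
Your proof is correct and takes essentially the same approach as the paper: contraposition for the forward directions plus explicit witness construction organised by the partition of $V$ into $U_\phi$, $T_\phi$, $R_\phi\setminus S_\phi$, $S_\phi\setminus R_\phi$, $R_\phi\cap S_\phi$ and $\WW_\phi$, relying on the convention that variables outside $T_\phi$ have domain size at least two. The only cosmetic difference is that for (ii) the paper builds one pair disagreeing on all of $R_\phi\cup S_\phi\cup\WW_\phi$ at once, whereas you force disagreement only at a single chosen $X\in\AA$; both work.
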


\begin{proof}
%\commentphstart
(i) First suppose that $R_\phi\cap S_\phi \cap \AA \not= \emp$,
choose some $X \in R_\phi\cap S_\phi \cap \AA$,
and consider any $(\alpha,\beta)\in\phi^*$.
Then $\alpha(X) = r_\phi(X)$ and $\beta(X)= s_\phi(X)\not=\alpha(X)$,
which shows that $\alpha(\AA) \not=\beta(\AA)$.
Conversely, suppose that $R_\phi\cap S_\phi \cap \AA = \emp$.
Let $s'_\phi$ be $s_\phi$ restricted to $S_\phi - R_\phi$.
Let $\alpha$ be any outcome extending $u_\phi$ and $r_\phi$ and $s'_\phi$.
Define $\beta$ by $\beta(X) = s_\phi(X)$ if $X\in S_\phi$,
and $\beta(X) = \alpha(X)$, otherwise.
Then $(\alpha,\beta)\in\phi^*$ and $\alpha(\AA) = \beta(\AA)$,
since $\alpha$ and $\beta$ differ only on $R_\phi \cap S_\phi$,
which is disjoint from $\AA$.

%\commentph
\noindent\ssk
(ii) Assume first that
$\AA \not\subseteq T_\phi \cup U_\phi$, i.e.,
$(R_\phi \cup S_\phi \cup \WW_\phi) \cap \AA \not=\emp$.
We will construct $\alpha$ and $\beta$ such that
 $\alpha(\AA) \not= \beta(\AA)$  and $(\alpha,\beta)\in\phi^*$.
Let $\alpha$ be any outcome extending $u_\phi$ and $r_\phi$
and such that $\alpha(X) \not=s_\phi(X)$ for all $X\in S_\phi$.
Let $\beta$ be any outcome extending $u_\phi$ and $s_\phi$ and $\alpha(T_\phi)$
and such that $\beta(X) \not = r_\phi(X)$ for all $X\in R_\phi$,
and also $\beta(X) \not= \alpha(X)$ for all $X\in \WW_\phi$
(we can do this because each element of the domain of each variable in $R_\phi \cup S_\phi \cup \WW_\phi$ includes at least two elements).
Then $(\alpha,\beta)\in\phi^*$ and
$\alpha(X) \not=\beta(X)$ for all $X \in R_\phi \cup S_\phi \cup \WW_\phi$,
which implies that $\alpha(\AA) \not= \beta(\AA)$.

%\commentphend
To prove the converse, assume that
$\AA \subseteq T_\phi \cup U_\phi$.
and consider any $(\alpha,\beta)\in\phi^*$.
Then, for each
$X\in T_\phi \cup U_\phi$,
we have $\alpha(X) = \beta(X)$, and so $\alpha(\AA) = \beta(\AA)$.
\end{proof}

%\commentph
\begin{lemma}
\label{le:phi-W-variable}
Let $\phi\in\calLpq$ and $\pi\in\Glex$.
Suppose that $\pi \models \phidestrict$, i.e.,
$\succceq_\pi$ $\supseteq$ $\phi^*$.
If $\WW_\phi \cap V_\pi \not=\emp$ then there exists
$X\in R_\phi\cap S_\phi\cap V_\pi$ that appears earlier in
$\pi$ than any variable in $\WW_\phi$.
\end{lemma}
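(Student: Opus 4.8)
The plan is to argue by contradiction, exhibiting a pair $(\alpha,\beta)\in\phi^*$ for which $\pi$ gives $\beta\succ_\pi\alpha$; this contradicts $\pi\models\phidestrict$, since the latter demands $\alpha\succceq_\pi\beta$ for every $(\alpha,\beta)\in\phi^*$. First I would write $\pi$ as $(Y_1,\ge_1),\ldots,(Y_k,\ge_k)$, so that $V_\pi=\set{Y_1,\ldots,Y_k}$, and, using $\WW_\phi\cap V_\pi\neq\emp$, let $W=Y_m$ be the earliest variable of $\WW_\phi$ occurring in $\pi$ (i.e.\ $m$ minimal with $Y_m\in\WW_\phi$). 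It then suffices to find some $j<m$ with $Y_j\in R_\phi\cap S_\phi$: such a $Y_j$ lies in $R_\phi\cap S_\phi\cap V_\pi$ and precedes $W$, hence precedes every $\WW_\phi$-variable appearing in $\pi$. So I would assume for contradiction that none of $Y_1,\ldots,Y_{m-1}$ belongs to $R_\phi\cap S_\phi$.

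The heart of the argument is to build $(\alpha,\beta)\in\phi^*$ that agrees on $Y_1,\ldots,Y_{m-1}$ but reverses on $W$. I would define $\alpha$ and $\beta$ to coincide on every variable except those of $R_\phi\cap S_\phi$ and $W$. On $R_\phi\cap S_\phi$ the membership constraints force $\alpha=r_\phi$ and $\beta=s_\phi$, which differ there; on $W$ I would pick $w_1<_m w_2$ and set $\alpha(W)=w_1$, $\beta(W)=w_2$, which is possible since $W\in\WW_\phi$ forces $|\und{W}|\ge2$ (any singleton-domain variable lies in $T_\phi$ by the standing assumption); and on the remaining variables both take the common values $u_\phi$ on $U_\phi$, $r_\phi$ on $R_\phi\setminus S_\phi$, $s_\phi$ on $S_\phi\setminus R_\phi$, and arbitrary fixed values on $T_\phi$ and on $\WW_\phi\setminus\set{W}$. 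One then checks that $\alpha$ extends $u_\phi r_\phi$, $\beta$ extends $u_\phi s_\phi$, and $\alpha(T_\phi)=\beta(T_\phi)$, so $(\alpha,\beta)\in\phi^*$. Equivalently, I could obtain the agreement on $\set{Y_1,\ldots,Y_{m-1}}$ from Lemma~\ref{le:equality-for-strict}(i), whose hypothesis $R_\phi\cap S_\phi\cap\set{Y_1,\ldots,Y_{m-1}}=\emp$ is exactly the contradiction assumption, and then perturb the free $W$-coordinate.

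Finally I would read off the lexicographic order. For $j<m$, the variable $Y_j$ is neither in $\WW_\phi$ (by minimality of $m$) nor in $R_\phi\cap S_\phi$ (by assumption), so it is one of the coordinates on which $\alpha$ and $\beta$ agree; thus the first index where they differ along $\pi$ is $m$, where $\alpha(W)=w_1<_m w_2=\beta(W)$, giving $\beta\succ_\pi\alpha$ and so $\alpha\not\succceq_\pi\beta$ --- the desired contradiction. I expect the only delicate point to be the bookkeeping in the construction: verifying that the chosen values respect every membership constraint defining $\phi^*$ while keeping the $W$-coordinate free to carry the strict reversal. The reduction and the comparison step are then immediate.
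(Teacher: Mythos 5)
Your proof is correct and takes essentially the same route as the paper's: both argue by contradiction, take the earliest $\WW_\phi$-variable $W$ occurring in $\pi$, and exploit the fact that the $W$-coordinate is unconstrained in $\phi^*$ to produce a pair in $\phi^*$ whose first difference along $\pi$ is at $W$. The only cosmetic difference is that you choose the $W$-values adversarially (setting $\alpha(W)$ strictly worse than $\beta(W)$) so that a single pair directly violates $\alpha \succceq_\pi \beta$, whereas the paper uses two pairs with swapped $W$-values to derive the contradictory inequalities $\alpha(W) >_W \beta(W)$ and $\beta(W) >_W \alpha(W)$.
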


\begin{proof}
%\commentphstart
Suppose otherwise, and let $X$ be the first variable in
$\WW_\phi$ that appears in $\pi$, and let $\ge_X$ be the corresponding value ordering.
We will define two different pairs $(\alpha, \beta)$ and
$(\alpha', \beta')$ in $\phi^*$.
Let $s'_\phi$ be $s_\phi$ restricted to $S_\phi - R_\phi$.
Let $\alpha$ be any outcome extending $u_\phi$ and $r_\phi$ and $s'_\phi$.
Define $\beta$ by:
$\beta(X)$ is an element other than $\alpha(X)$;
$\beta(Y) = s_\phi(Y)$ if $Y\in S_\phi$;
$\beta(Y) = \alpha(Y)$ for all other $Y$.
Then $(\alpha, \beta) \in \phi^*$,
and $\alpha$ and $\beta$ only differ on variable $X$ and variables $R_\phi \cap S_\phi$. The first variable in $\pi$ on which $\alpha$ and $\beta$ differ is $X$,
and thus, $\alpha(X) >_X \beta(X)$, since $\alpha \succceq_\pi \beta$.

%\commentphend
Now, define outcome $\alpha'$ which agrees with $\alpha$ except on $X$,
and outcome $\beta'$ which agrees with $\beta$ except on $X$,
and where $\alpha'(X) = \beta(X)$ and $\beta'(X) = \alpha(X)$.
By the same argument, we have $(\alpha', \beta') \in \phi^*$
and  $\alpha'(X) >_X \beta'(X)$, i.e.,
 $\beta(X) >_X \alpha(X)$, which is a contradiction, since $>_X$ is a total order.
\end{proof}

The following result characterises when
a lex model satisfies a non-strict preference statement in $\calLpq$.

\begin{proposition}
\label{pr:basic-pi-satisfies-phi-Lpq}
Let $\phi$ be a non-strict element of $\calLpq$ (so that $\phi = \phidestrict$),
which we write
as
$u_\phi r_\phi \ge u_\phi s_\phi $ \eqq~$T_\phi$.
Let $\pi\in\Glex$ be the model
$(Y_1, \ge_{1}), \ldots, (Y_k, \ge_{k})$.
Let $i$ be the smallest index such that
$Y_i \in R_\phi \cap S_\phi$,
or let $i = k+1$ if $R_\phi \cap S_\phi \cap V_\pi =\emptyset$.
Then, $\pi \models\phi$ if and only if
(i) $r_\phi(Y_i) >_i s_\phi(Y_i)$
if $R_\phi \cap S_\phi \cap V_\pi \not=\emptyset$,
and (ii) for all $j<i$,
\begin{itemize}
  \item[(a)] $Y_j\notin\WW_\phi$;
  \item[(b)] if $Y_j \in R_\phi \setminus S_\phi$
  then for all $y\in\und{Y_j}$,
  $r_\phi(Y_j) \ge_j y$, i.e., $r_\phi(Y_j)$ is the best value of $Y_j$; and
  \item[(c)] if $Y_j \in S_\phi \setminus R_\phi$  then for all $y\in\und{Y_j}$,
  $y \ge_j s_\phi(Y_j)$, i.e., $s_\phi(Y_j)$ is the worst value of $Y_j$.
\end{itemize}
In particular, if $V_\pi \subseteq T_\phi \cup U_\phi$ then $\pi\models\phi$.
\end{proposition}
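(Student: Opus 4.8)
The plan is to reduce the claim to the sequential, per-variable criterion of Lemma~\ref{le:alpha-pi-beta-Gamma-sequential} and then read off each local condition from the projection description in Proposition~\ref{pr:rs-local-ordering}. Since $\phi$ is non-strict, $\pi\models\phi$ means exactly $\succceq_\pi\,\supseteq\phi^*$, so by Lemma~\ref{le:alpha-pi-beta-Gamma-sequential} this holds if and only if $\ge_j\,\supseteq\phi_{A_j}^{\downarrow Y_j}$ for every $j=1,\dots,k$, where $A_j=\set{Y_1,\dots,Y_{j-1}}$. The whole argument then consists of evaluating this inclusion for each $j$, splitting according to the position of $j$ relative to the distinguished index $i$.

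First I would dispose of the indices $j>i$. For such $j$ the set $A_j$ contains $Y_i\in R_\phi\cap S_\phi$, so $R_\phi\cap S_\phi\cap A_j\neq\emp$, and Proposition~\ref{pr:rs-local-ordering} gives $\phi_{A_j}^{\downarrow Y_j}=\emp$, making the inclusion vacuous. Hence $\pi\models\phi$ is equivalent to the conjunction of the local conditions over the indices $j\in\set{1,\dots,k}$ with $j\le i$. By the minimality of $i$, none of $Y_1,\dots,Y_{i-1}$ lies in $R_\phi\cap S_\phi$, so $R_\phi\cap S_\phi\cap A_j=\emp$ for every such $j$, and the ``otherwise'' clause of Proposition~\ref{pr:rs-local-ordering} applies throughout.

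Next I would treat the indices $j<i$. Here $Y_j\notin R_\phi\cap S_\phi$, so $Y_j$ lies in exactly one of $T_\phi$, $U_\phi$, $R_\phi\setminus S_\phi$, $S_\phi\setminus R_\phi$, $\WW_\phi$. Using Proposition~\ref{pr:rs-local-ordering}: if $Y_j\in T_\phi$ or $Y_j\in U_\phi$ the projection consists only of diagonal pairs $(y,y)$, which $\ge_j$ contains by reflexivity, so no constraint arises; if $Y_j\in R_\phi\setminus S_\phi$ the projection is $\set{(r_\phi(Y_j),y')\st y'\in\und{Y_j}}$ and the inclusion says precisely that $r_\phi(Y_j)$ is $\ge_j$-best, giving~(b); symmetrically $Y_j\in S_\phi\setminus R_\phi$ gives~(c). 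The one delicate point is $Y_j\in\WW_\phi$: then $\phi_{A_j}^{\downarrow Y_j}=\und{Y_j}\times\und{Y_j}$, and since by the standing convention a variable with a one-element domain is placed in $T_\phi$, we have $|\und{Y_j}|\ge 2$; a total order cannot contain the full product relation on two or more elements, so the inclusion fails, forcing $Y_j\notin\WW_\phi$, i.e.\ (a). This is the step I expect to require the most care, since it is where the well-definedness convention on $\calLpq$ is actually used.

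Finally, for $j=i$ (when $R_\phi\cap S_\phi\cap V_\pi\neq\emp$) both clauses (iii) and (iv) of Proposition~\ref{pr:rs-local-ordering} apply, so $\phi_{A_i}^{\downarrow Y_i}=\set{(r_\phi(Y_i),s_\phi(Y_i))}$, a single pair of \emph{distinct} values (distinctness being part of the defining property of the representation). The inclusion $\ge_i\,\supseteq$ this pair is then exactly $r_\phi(Y_i)>_i s_\phi(Y_i)$, yielding~(i). Collecting the conditions over $j\le i$ reproduces (i) together with (ii)(a)--(c), proving the equivalence. The ``in particular'' statement is then immediate: if $V_\pi\subseteq T_\phi\cup U_\phi$ then $R_\phi\cap S_\phi\cap V_\pi=\emp$, so $i=k+1$ and (i) is vacuous, while every $Y_j$ lies in $T_\phi\cup U_\phi$, so (a) holds and (b),(c) are vacuous, giving $\pi\models\phi$.
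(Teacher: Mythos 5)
Your proof is correct, but it follows a genuinely different route from the paper's. The paper derives this proposition from Proposition~\ref{pr:basic-pi-satisfies-phi-Lpq-version}, which it proves from scratch by constructing explicit witness pairs $(\alpha_0,\beta_0)$, $(\alpha_0,\beta_1)$, $(\alpha_1,\beta_0)$, etc.\ in $\phi^*$ and deriving each of the conditions (a)--(d) by a direct contradiction or perturbation argument, then proving the converse by tracking the first variable on which an arbitrary pair in $\phi^*$ differs. You instead reduce everything to the per-variable criterion of Lemma~\ref{le:alpha-pi-beta-Gamma-sequential} and read off each local inclusion $\ge_j\,\supseteq\phi_{A_j}^{\downarrow Y_j}$ from the case analysis in Proposition~\ref{pr:rs-local-ordering}; your three-way split on $j>i$, $j<i$, $j=i$, and your observations that the projection vanishes past the first $R_\phi\cap S_\phi$ variable, that a total order on a domain of size at least two cannot contain the full product (this is exactly where the convention placing singleton-domain variables in $T_\phi$ is needed, as you note), and that $r_\phi(Y_i)\neq s_\phi(Y_i)$ converts the inclusion at $j=i$ into a strict preference, are all sound. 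The trade-off is that your argument is shorter and more modular but leans on two results the paper imports from \cite{Wilson14} without proof, whereas the paper's witness-construction proof is self-contained (and, in effect, re-derives the content of Proposition~\ref{pr:rs-local-ordering} inline). Both are valid; yours arguably makes the logical dependencies more transparent.
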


%\commentph
This proposition can be seen to follow from the next proposition, which is an alternative version of the result.

%\commentph
\begin{proposition}
\label{pr:basic-pi-satisfies-phi-Lpq-version}
Let $\pi\in\Glex$
and $\phi$ be a non-strict element of $\calLpq$,
so that $\phi = \phidestrict$.
Let us say that $X\in V_\pi$ is \emph{definite} if
$X \in (R_\phi \cap S_\phi) \cup \WW_\phi$,
and that $X$ is \emph{relevant} if
$X \in R_\phi \cup S_\phi \cup \WW_\phi$
and there is no earlier definite variable in $V_\pi$.
Thus, the set of relevant variables consists of
the earliest definite variable (if there is one),
plus all earlier variables not in  $T_\phi$ or $U_\phi$.
As usual, we let $\ge_X$ be the total ordering
associated with $X$ in $\pi$.
Then, $\pi \models\phi$ if and only if
for all relevant variables $X$,
\begin{itemize}
  \item[(a)] $X\notin\WW_\phi$;
  \item[(b)] if $X \in R_\phi \cap S_\phi$ then
  $r_\phi(X) >_X s_\phi(X)$;
  \item[(c)] if $X \in R_\phi \setminus S_\phi$
  then for all $x\in\und{X}$,
  $r_\phi(X) \ge_X x$, i.e., $r_\phi(X)$ is the best value of $X$; and
  \item[(d)] if $X \in S_\phi \setminus R_\phi$  then for all $x\in\und{X}$,
  $x \ge_X s_\phi(X)$, i.e., $s_\phi(X)$ is the worst value of $X$.
\end{itemize}
In particular, if $V_\pi \subseteq T_\phi \cup U_\phi$ then $\pi\models\phi$.
\end{proposition}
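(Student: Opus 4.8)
The plan is to reduce the biconditional to a variable-by-variable condition via Lemma~\ref{le:alpha-pi-beta-Gamma-sequential} and then read off each local condition from Proposition~\ref{pr:rs-local-ordering}. Writing $\pi$ as $(Y_1, \ge_1), \ldots, (Y_k, \ge_k)$ and $A_i = \set{Y_1, \ldots, Y_{i-1}}$, the statement $\pi \models \phi$ is, since $\phi = \phidestrict$ means $\succceq_\pi \supseteq \phi^*$, equivalent by Lemma~\ref{le:alpha-pi-beta-Gamma-sequential} to the requirement that $\ge_i\, \supseteq \phi_{A_i}^{\downarrow Y_i}$ for every $i$. I would then apply Proposition~\ref{pr:rs-local-ordering} to describe each $\phi_{A_i}^{\downarrow Y_i}$, noting that it is empty whenever $R_\phi \cap S_\phi \cap A_i \neq \emp$ (so the local condition holds automatically), and is otherwise determined by which block $Y_i$ lies in.

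For the forward direction I would take any relevant variable $X = Y_i$. By definition of relevance there is no earlier definite variable, hence in particular $R_\phi \cap S_\phi \cap A_i = \emp$, so Proposition~\ref{pr:rs-local-ordering} gives the explicit form of $\phi_{A_i}^{\downarrow Y_i}$. If $X \in \WW_\phi$ then $\phi_{A_i}^{\downarrow Y_i} = \und{X} \times \und{X}$, which cannot be contained in the total order $\ge_i$ (the domain has at least two elements, by the convention that singleton-domain variables lie in $T_\phi$), contradicting $\pi \models \phi$; this yields (a). In the remaining cases $\phi_{A_i}^{\downarrow Y_i}$ equals $\set{(r_\phi(X), s_\phi(X))}$, or $\set{r_\phi(X)} \times \und{X}$, or $\und{X} \times \set{s_\phi(X)}$, and the containment $\ge_i\, \supseteq \phi_{A_i}^{\downarrow Y_i}$ translates directly into (b), (c) and (d) respectively (using $r_\phi(X) \neq s_\phi(X)$ for $X \in R_\phi \cap S_\phi$ to upgrade $\ge_i$ to $>_i$).

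For the converse I would assume (a)--(d) hold at all relevant variables and verify $\ge_i\, \supseteq \phi_{A_i}^{\downarrow Y_i}$ for every $i$. The one subtlety is that \emph{definite} includes $\WW_\phi$ whereas the emptiness clause of Proposition~\ref{pr:rs-local-ordering} concerns only $R_\phi \cap S_\phi$. To bridge this, I would first observe that (a) forces every $\WW_\phi$ variable of $\pi$ to be preceded by some variable of $R_\phi \cap S_\phi$: otherwise the earliest $\WW_\phi$ variable would have no earlier definite variable, hence be relevant and in $\WW_\phi$, contradicting (a). Consequently, whenever $R_\phi \cap S_\phi \cap A_i = \emp$ there is no $\WW_\phi$ variable in $A_i$ either, so $A_i$ contains no definite variable at all. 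The verification then goes by cases on $Y_i$: if $R_\phi \cap S_\phi \cap A_i \neq \emp$ then $\phi_{A_i}^{\downarrow Y_i} = \emp$; if $Y_i \in T_\phi \cup U_\phi$ then $\phi_{A_i}^{\downarrow Y_i}$ lies on the diagonal and so is contained in the reflexive $\ge_i$; and if $Y_i \in R_\phi \cup S_\phi \cup \WW_\phi$ with $R_\phi \cap S_\phi \cap A_i = \emp$, then by the observation $Y_i$ is relevant, and (a)--(d) give exactly $\ge_i\, \supseteq \phi_{A_i}^{\downarrow Y_i}$. The ``in particular'' clause is then immediate: if $V_\pi \subseteq T_\phi \cup U_\phi$ there are no relevant variables, so the conditions hold vacuously and $\pi \models \phi$.

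I expect the main obstacle to be precisely this reconciliation in the converse direction --- ensuring that $\WW_\phi$ variables cannot appear before the first $R_\phi \cap S_\phi$ variable, which is what makes \emph{relevant} line up with the indices $i$ for which $\phi_{A_i}^{\downarrow Y_i}$ is nonempty in Proposition~\ref{pr:rs-local-ordering}. Everything else is a routine translation of the containment $\ge_i\, \supseteq \phi_{A_i}^{\downarrow Y_i}$ into the four conditions.
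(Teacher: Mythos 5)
Your proof is correct, and it takes a genuinely different route from the paper's. You reduce $\pi\models\phi$ to the variable-wise criterion $\ge_i\,\supseteq\phi_{A_i}^{\downarrow Y_i}$ via Lemma~\ref{le:alpha-pi-beta-Gamma-sequential} and then read each of (a)--(d) off the explicit description of the projections in Proposition~\ref{pr:rs-local-ordering}; the paper instead argues directly from the lexicographic semantics, constructing explicit witness pairs $(\alpha_0,\beta_0)$, $(\alpha_1,\beta_0)$, $(\alpha_0,\beta_1)$, etc.\ in $\phi^*$ for the forward direction, and for the converse taking an arbitrary $(\alpha,\beta)\in\phi^*$ and examining the first variable of $\pi$ on which they differ. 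The one genuinely delicate point in your route --- that \emph{definite} includes $\WW_\phi$ while the emptiness clause of Proposition~\ref{pr:rs-local-ordering} refers only to $R_\phi\cap S_\phi$ --- you handle correctly: condition (a) forces the earliest $\WW_\phi$ variable of $\pi$ (if any) to be preceded by an $R_\phi\cap S_\phi$ variable, so the indices with $R_\phi\cap S_\phi\cap A_i=\emp$ are exactly those with no definite variable in $A_i$, and \emph{relevant} lines up with the nonempty-projection cases. Your derivation of (a) from the fact that $\und{X}\times\und{X}$ cannot sit inside a total order (using the paper's convention that singleton-domain variables lie in $T_\phi$) replaces the paper's two-witness antisymmetry argument. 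What each approach buys: yours is more modular and shorter, delegating all combinatorial work to the two imported projection results (which the paper cites from earlier work without proof); the paper's is self-contained and exhibits the concrete outcome pairs that make each condition necessary, which is also the style reused in the proofs of Lemma~\ref{le:equality-for-strict} and Lemma~\ref{le:phi-W-variable}.
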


\begin{proof}
%\commentphstart
First let us assume that $\pi \models\phi$, i.e., $\pi\models\phidestrict$,
and thus, $\succceq_\pi \ \supseteq\ \phi^*$.
We will prove that (a), (b), (c) and (d) hold.

%\commentph
We first define a pair $(\alpha_0, \beta_0)$ in $\phi^*$.
Let $s'_\phi$ be $s_\phi$ restricted to $S_\phi - R_\phi$.
Let $\alpha_0$ be any outcome extending $u_\phi$ and $r_\phi$ and $s'_\phi$.
Define $\beta_0$ by:
$\beta_0(Y) = s_\phi(Y)$ if $Y\in S_\phi$;
$\beta_0(Y) = \alpha_0(Y)$ for all other $Y$.
The only variables on which $\alpha_0$ and $\beta_0$ differ are
those in $R_\phi \cap S_\phi$, and we have $\alpha_0(R_\phi) = r_\phi$
and $\beta_0(S_\phi) = s_\phi$.
We then have $(\alpha_0, \beta_0) \in \phi^*$.

%\commentph
\noindent\ssk
(a): Suppose that  $X\in\WW_\phi$.
%FROM PROOF OF Lemma~\ref{le:phi-W-variable}.
Let $x$ be any element of $\und{X}$ other than $\alpha_0(X)$,
and let $x' = \alpha_0(X)$.
Define $\beta_1$ by $\beta(X) = x$,
and for all other $Y\in V\setminus\set{X}$,
$\beta_1(Y) = \beta_0(Y)$.
Also, define $\alpha_1$ by
$\alpha_1(X) = x$, and for all $Y\in V\setminus\set{X}$,
$\alpha_1(Y) = \alpha_0(Y)$.
It follows that
$(\alpha_0, \beta_1)$ and $(\alpha_1, \beta_0)$ are in $\phi^*$,
and thus, $\alpha_0 \succceq_\pi \beta_1$
and $\alpha_1 \succceq_\pi \beta_0$, because
$\succceq_\pi \ \supseteq\ \phi^*$.
The first variable in $\pi$ on which $\alpha_0$ and $\beta_1$ differ is $X$,
 and thus, $\alpha_0(X) >_X \beta_1(X)$, i.e., $x' >_X x$.
Similarly, the first variable in $\pi$ on which $\alpha_1$ and $\beta_0$ differ is $X$,
 and thus, $\alpha_1(X) >_X \beta_0(X)$, i.e., $x >_X x'$,
 contradicting the fact that $\ge_X$ is a total order.

%\commentph
\noindent\ssk
(b): Assume that $X \in R_\phi \cap S_\phi$,
and so $\alpha_0(X) \not= \beta_0(X)$.
Since $(\alpha_0, \beta_0) \in \phi^*$ we
have $\alpha_0 \succceq_\pi \beta_0$.
Let $Y$ be the first variable on which $\alpha_0$ and $\beta_0$
differ, so $Y\in R_\phi \cap S_\phi$.
$Y$ is thus a definite variable. % as is $X$.
Since $X$ is relevant, there is no earlier definite variable,
so $Y = X$, and  $X$ is the first variable on which $\alpha_0$ and $\beta_0$
differ.
$\alpha_0 \succceq_\pi \beta_0$ implies that
$\alpha_0(X) >_X \beta_0(X)$,
i.e., $r_\phi(X) >_X s_\phi(X)$, proving (b).

%\commentph
\noindent\ssk
(c):
Assume that $X \in R_\phi \setminus S_\phi$ and
  $\pi \models\phi$.
Choose any $x\in\und{X}$.
Let $\beta_2$ be an outcome that only differs with $\beta_0$
on $X$, and with $\beta_2(X) = x$.
Then, $(\alpha_0,\beta_2)\in\phi^*$, and
so $\alpha_0 \succceq_\pi \beta_2$, since $\pi \models\phi$.
Now, $\alpha_0$ and $\beta_2$ do not differ on any
earlier variables, since no earlier variable is in $R_\phi \cap S_\phi$,
because $X$ is relevant.
This implies that $\alpha_0(X) \ge_X \beta_2(X)$,
i.e., $r_\phi(X) \ge_X x$.

%\commentph
\noindent\ssk
(d):
Assume that $X \in S_\phi \setminus R_\phi$ and
  $\pi \models\phi$.
  The proof of (d) is analogous to that of (c).
Choose any $x\in\und{X}$.
Let $\alpha_2$ be an outcome that only differs with $\alpha_0$
on $X$, and with $\alpha_2(X) = x$.
Then, $(\alpha_2,\beta_0)\in\phi^*$, and so,
$\alpha_2(X) \ge_X \beta_0(X)$,
i.e., $x \ge_X s_\phi(X)$.

%\commentph
\ssk
To prove the converse, we now assume that for all relevant variables,
conditions (a), (b), (c) and (d) hold.
We will prove that $\pi\models\phi$.
It is sufficient to show that for all $(\alpha,\beta)\in\phi^*$
we have $\alpha \succceq_\pi \beta$.
So, consider any $(\alpha,\beta)\in\phi^*$.
If $\alpha(V_\pi) = \beta(V_\pi)$ then we have $\alpha \succceq_\pi \beta$,
so we can assume that $\alpha$ and $\beta$ differ on some variable in $V_\pi$;
let $X$ be the first such variable, and let
$\AA$ be the set of earlier variables, so that
$\alpha(\AA) = \beta(\AA)$.
By the definition of a lexicographic order,
to prove that $\alpha \succceq_\pi \beta$,
it is sufficient to prove that $\alpha(X) \ge_X \beta(X)$
(i.e., $\alpha(X) >_X \beta(X)$, since
$\alpha(X) \not=\beta(X)$).

%\commentph
Suppose that there exists a definite variable,
and let $Y$ be the earliest (according, as always to the $V_\pi$ ordering).
Then $Y$ is relevant.
By condition (a), $Y\notin\WW_\phi$ and so $Y\in R_\phi \cap S_\phi$,
 but then $\alpha(Y) = r_\phi(Y) \not= s_\phi(Y) = \beta(Y)$,
 so $\alpha(Y) \not=  \beta(Y)$.
 In particular this implies that $Y\notin\AA$,
 so $\AA$ contains no definite variable.
Since $\alpha(X) \not= \beta(X)$, we have
 $X\notin T_\phi \cup U_\phi$, so $X$ is relevant.

% \commentphend
 If $X\in R_\phi \cap S_\phi$ then the definition of $\phi^*$
 implies that $\alpha(X) = r_\phi(X)$ and $\beta(X) = s_\phi(X)$,
 and thus, $\alpha(X) >_X \beta(X)$, by condition (b).
If $X\in R_\phi \setminus S_\phi$ then $\alpha(X) = r_\phi(X)$
and condition (c) implies that $\alpha(X) \ge_X \beta(X)$.
%and thus, $\alpha(X) >_X \beta(X)$, since $\alpha(X) \not=\beta(X)$.
Similarly, if $X\in S_\phi \setminus R_\phi$
then condition (d) implies that $\alpha(X) \ge_X \beta(X)$.
\end{proof}

\noindent
The next result gives the extra conditions
required for satisfying strict statements.

\begin{proposition}
\label{pr:satisfaction-strict-statements}
Let $\phi\in\calLpq$ and $\pi\in\Glex$.
Then:
\begin{itemize}
  \item if $\phi$ is a fully strict statement then
  $\pi\models\phi$ if and only if $\pi \models \phidestrict$ and $R_\phi \cap S_\phi \cap V_\pi \not=\emp$;
  \item if $\phi$ is a weakly strict statement then
  $\pi\models\phi$ if and only if $\pi \models \phidestrict$ and
  $(R_\phi \cup S_\phi) \cap V_\pi \not=\emp$.
\end{itemize}
\end{proposition}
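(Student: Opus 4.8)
The plan is to reduce both cases to one elementary observation: for $(\alpha,\beta)\in\phi^*$ we have $\alpha\succ_\pi\beta$ if and only if $\alpha\succceq_\pi\beta$ and $\alpha\not\equiv_\pi\beta$. Since $\succceq_\pi$ is a total pre-order and $\alpha\equiv_\pi\beta$ is by definition equivalent to $\alpha(V_\pi)=\beta(V_\pi)$, this says precisely that $\alpha\succ_\pi\beta$ holds iff $\alpha\succceq_\pi\beta$ and $\alpha(V_\pi)\neq\beta(V_\pi)$. I would state this first, as it converts both claims into statements about agreement of outcome pairs on $V_\pi$, which is exactly what Lemma~\ref{le:equality-for-strict} controls.

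For the fully strict case, $\pi\models\phi$ means $\alpha\succ_\pi\beta$ for every $(\alpha,\beta)\in\phi^*$, which by the observation is the conjunction of [$\alpha\succceq_\pi\beta$ for all pairs] and [$\alpha(V_\pi)\neq\beta(V_\pi)$ for all pairs]. The first conjunct is exactly $\pi\models\phidestrict$. For the second I would apply Lemma~\ref{le:equality-for-strict}(i) with $\AA=V_\pi$ in contrapositive form: no pair of $\phi^*$ agrees on $V_\pi$ if and only if $R_\phi\cap S_\phi\cap V_\pi\neq\emp$. This yields the stated equivalence immediately.

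For the weakly strict case I would work under the standing hypothesis $\pi\models\phidestrict$, which is part of the definition and also appears on the right-hand side, so that $\alpha\succceq_\pi\beta$ holds for every pair automatically. Then ``$\alpha\succ_\pi\beta$ for some $(\alpha,\beta)\in\phi^*$'' reduces, by the observation, to ``$\alpha(V_\pi)\neq\beta(V_\pi)$ for some pair'', which by Lemma~\ref{le:equality-for-strict}(ii) (contrapositive, $\AA=V_\pi$) is equivalent to $V_\pi\not\subseteq T_\phi\cup U_\phi$, i.e.\ to $(R_\phi\cup S_\phi\cup\WW_\phi)\cap V_\pi\neq\emp$. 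It then remains only to replace $R_\phi\cup S_\phi\cup\WW_\phi$ by $R_\phi\cup S_\phi$ in this condition.

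That last replacement is the only real obstacle, and it is where Lemma~\ref{le:phi-W-variable} does the work. The inclusion $R_\phi\cup S_\phi\subseteq R_\phi\cup S_\phi\cup\WW_\phi$ gives one direction trivially; for the other I would argue by contradiction that if $(R_\phi\cup S_\phi\cup\WW_\phi)\cap V_\pi\neq\emp$ while $(R_\phi\cup S_\phi)\cap V_\pi=\emp$, then $\WW_\phi\cap V_\pi\neq\emp$, and Lemma~\ref{le:phi-W-variable} (applicable because $\pi\models\phidestrict$) forces a variable of $R_\phi\cap S_\phi$ into $V_\pi$, contradicting $(R_\phi\cup S_\phi)\cap V_\pi=\emp$. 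Hence under $\pi\models\phidestrict$ the two conditions coincide, completing the weakly strict equivalence. As a sanity check I would note that $\phi^*$ is always nonempty (one can extend $u_\phi r_\phi$ to an outcome $\alpha$ and build a matching $\beta$ agreeing with $\alpha$ on $T_\phi$), so the ``for some pair'' quantifiers are never vacuous.
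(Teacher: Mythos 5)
Your proof is correct and follows essentially the same route as the paper's: reduce each case, under $\pi\models\phidestrict$, to a question about which pairs of $\phi^*$ agree on $V_\pi$, settle that via Lemma~\ref{le:equality-for-strict}, and then eliminate $\WW_\phi$ from the weakly strict condition. The only cosmetic difference is that you invoke Lemma~\ref{le:phi-W-variable} for that last elimination where the paper cites Proposition~\ref{pr:basic-pi-satisfies-phi-Lpq-version}; both deliver the same fact, and your choice is arguably the more direct citation.
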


\begin{proof}
%\commentphstart
The definitions immediately imply that
if $\pi \models \phi$ then $\pi \models \phidestrict$,
so we can assume that in all cases
$\pi\models\phidestrict$.

%\commentph
Suppose that $\phi$ is a fully strict statement.
Now $\pi \models \phidestrict$ implies that
$\alpha\succceq_\pi \beta$ for all $(\alpha,\beta)\in\phi^*$.
Therefore,
$\pi\models\phi$ if and only if
for all $(\alpha,\beta)\in\phi^*$,
$\alpha\not\equiv_\pi \beta$, i.e., $\alpha(V_\pi) \not=\beta(V_\pi)$.
Lemma~\ref{le:equality-for-strict}(i) then implies that $\pi\models\phi$ if and only if
$R_\phi \cap S_\phi \cap V_\pi \not=\emp$.

%\commentphend
Assume now that $\phi$ is a weakly strict statement,
and also assume that
 $\pi \models \phidestrict$.
 We then have
$\pi\models\phi$ if and only if
there exists $(\alpha,\beta)\in\phi^*$ with
 $\alpha(V_\pi) \not=\beta(V_\pi)$,
 which, by Lemma~\ref{le:equality-for-strict}(ii),
 is if and only if
 $T_\phi \cup U_\phi \not\supseteq V_\pi$, i.e.,
 $(R_\phi \cup S_\phi \cup \WW_\phi) \cap V_\pi \not=\emp$.
 Now,
 %Proposition~\ref{pr:basic-pi-satisfies-phi-Lpq}
 Proposition~\ref{pr:basic-pi-satisfies-phi-Lpq-version}
 implies that if
 $\pi \models\phi$ and
 $\WW_\phi \cap V_\pi \not=\emp$ then $R_\phi \cap V_\pi \not=\emp$,
 and thus, $\pi\models\phi$ if and only if
 $(R_\phi \cup S_\phi) \cap V_\pi \not=\emp$.
\end{proof}

%\commentph
\begin{lemma}
\label{le:simult-decisive-R-S}
Suppose $\phi\in\calLpq$ is such that
$R_\phi = S_\phi$.
Then $\phi^*$ is simultaneously decisive
(see Definition~\ref{def:simult-decisive}).
\end{lemma}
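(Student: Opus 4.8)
The plan is to verify the condition of Definition~\ref{def:simult-decisive} directly, variable by variable, by computing the projection $(\phi^*)^{\downarrow X} = \phi_\emp^{\downarrow X}$ for each $X \in V$ via Proposition~\ref{pr:rs-local-ordering} (applied with $A = \emp$, so that $R_\phi \cap S_\phi \cap A = \emp$ holds automatically and we land in its ``Otherwise'' branch). The key simplification coming from the hypothesis $R_\phi = S_\phi$ is that the mutually disjoint blocks $U_\phi$, $T_\phi$, $R_\phi$, $S_\phi$, $\WW_\phi$ covering $V$ collapse: every $X \in V$ lies in exactly one of $T_\phi$, $U_\phi$, $R_\phi\,(=S_\phi)$, or $\WW_\phi$, and the ``mixed'' cases $R_\phi \setminus S_\phi$ and $S_\phi \setminus R_\phi$ are empty. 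So it suffices to handle these four blocks.

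First I would dispatch the three easy blocks. For $X \in T_\phi$, part (i) of Proposition~\ref{pr:rs-local-ordering} gives $(\phi^*)^{\downarrow X} = \set{(y,y) \st y \in \und{X}}$, so $(\phi^*)^{\downarrow X} \subseteq\ =$, which is one of the allowed outcomes. For $X \in U_\phi$, part (ii) gives $(\phi^*)^{\downarrow X} = \set{(u_\phi(X), u_\phi(X))} \subseteq\ =$, again allowed. For $X \in R_\phi = S_\phi$, parts (iii) and (iv) both apply, forcing $y = r_\phi(X)$ and $y' = s_\phi(X)$; hence $(\phi^*)^{\downarrow X} = \set{(r_\phi(X), s_\phi(X))}$, and since $X \in R_\phi \cap S_\phi$ the defining constraint of $\calLpq$ yields $r_\phi(X) \not= s_\phi(X)$. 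This single non-loop pair is irreflexive (and acyclic), so the irreflexivity branch of the definition is met.

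The only case needing a little care is $X \in \WW_\phi$, where the final sentence of Proposition~\ref{pr:rs-local-ordering} gives $(\phi^*)^{\downarrow X} = \und{X}\times\und{X}$. Here I would invoke the normalisation convention on $\calLpq$ representations that $|\und{X}| = 1$ forces $X \in T_\phi$; consequently $X \in \WW_\phi$ guarantees $|\und{X}| \ge 2$, so $\und{X}\times\und{X}$ contains a $2$-cycle $(a,b),(b,a)$ with $a \not= b$ and is therefore \emph{not} acyclic. The antecedent of the implication in Definition~\ref{def:simult-decisive} then fails, and the condition holds vacuously. Since every $X \in V$ falls into one of these four blocks and each satisfies the required disjunction, $\phi^*$ is simultaneously decisive. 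This lemma is essentially a routine verification; the one point one must not overlook is precisely this $\WW_\phi$ case, where the $|\und{X}| = 1 \Rightarrow X \in T_\phi$ convention is exactly what rules out a spurious ``acyclic but neither irreflexive nor $\subseteq\ =$'' projection.
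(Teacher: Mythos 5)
Your proof is correct and follows essentially the same route as the paper's: both reduce everything to the projection formula of Proposition~\ref{pr:rs-local-ordering} and observe that under $R_\phi = S_\phi$ the only acyclic, non-identity projection can arise at a variable in $R_\phi \cap S_\phi$, where it is the irreflexive singleton $\set{(r_\phi(X), s_\phi(X))}$. The paper phrases this contrapositively and leaves the $\WW_\phi$ case implicit, whereas you spell out why the convention $|\und{X}|=1 \Rightarrow X\in T_\phi$ forces $\und{X}\times\und{X}$ to contain a $2$-cycle there --- a worthwhile detail, but the argument is the same.
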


\begin{proof}
%\commentphstartend
Let $\RR = \phi^*$ and
consider any $X\in V$ such that $\RR^{\downarrow X}$ is acyclic and  $\RR^{\downarrow X} \not\subseteq\ =$.
Proposition~\ref{pr:rs-local-ordering}
implies that $X\in R_\phi \cup S_\phi = R_\phi = S_\phi$,
and so $\RR^{\downarrow X}$ equals $\set{(r_\phi(X), s_\phi(X))}$,
which is irreflexive since $r_\phi(X) \not= s_\phi(X)$.
\end{proof}

%\commentph
As one would expect, both kinds of strict statements are
strict versions of $\phi^\RR$
(with a fully strict statement corresponding to $\psi_1$,
and a weakly strict statement corresponding to $\psi_2$,
in the discussion before Proposition~\ref{pr:phi-calR-strict-strongly-compos}).

%\commentph
\begin{lemma}
\label{le:strict-strict}
Suppose that $\phi\in\calLpq$ is either a fully strict statement or a weakly strict statement.
Then $\phi$ is a strict version of $\phi^\RR$,
where $\RR = \phi^*$.
\end{lemma}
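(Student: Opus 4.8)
The plan is to unwind the definition of a strict version of $\phi^\RR$. The crucial preliminary observation is that, since $\RR = \phi^*$, the statement $\phi^\RR$ is equivalent to the non-strict version $\phidestrict$: indeed $\pi\models\phi^\RR$ iff $\succceq_\pi\supseteq\phi^*$ iff $\alpha\succceq_\pi\beta$ for all $(\alpha,\beta)\in\phi^*$, which is exactly the condition defining $\pi\models\phidestrict$. With this identification in hand, the two requirements in the definition of a strict version become statements purely about $\phidestrict$ and the strict relations $\succ_\pi$ and $\succ_{\pi'}$.

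First I would check that $\phi^\RR$ is a relaxation of $\phi$, i.e., that $\pi\models\phi\Rightarrow\pi\models\phi^\RR$. This is immediate: Proposition~\ref{pr:satisfaction-strict-statements} already records that $\pi\models\phi$ implies $\pi\models\phidestrict$ for both fully strict and weakly strict $\phi$, and $\pi\models\phidestrict$ coincides with $\pi\models\phi^\RR$ by the observation above.

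The main work is the monotonicity property: for all $\pi,\pi'$, if $\pi\models\phi$, $\pi'\models\phi^\RR$ and $\succ_{\pi'}\supseteq(\succ_\pi\cap\RR)$, then $\pi'\models\phi$. I would argue by cases on the type of $\phi$. If $\phi$ is fully strict, then $\pi\models\phi$ means $\alpha\succ_\pi\beta$ for every $(\alpha,\beta)\in\phi^*=\RR$, i.e., $\succ_\pi\supseteq\RR$, so $\succ_\pi\cap\RR=\RR$ and the hypothesis gives $\succ_{\pi'}\supseteq\RR$; hence $\alpha\succ_{\pi'}\beta$ for all $(\alpha,\beta)\in\phi^*$, which is precisely $\pi'\models\phi$. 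If $\phi$ is weakly strict, then $\pi\models\phi$ supplies a witness $(\alpha_0,\beta_0)\in\phi^*$ with $\alpha_0\succ_\pi\beta_0$; this pair lies in $\succ_\pi\cap\RR\subseteq\succ_{\pi'}$, so $\alpha_0\succ_{\pi'}\beta_0$, and together with $\pi'\models\phidestrict$ (which is $\pi'\models\phi^\RR$) this establishes $\pi'\models\phi$.

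I do not expect any real obstacle here beyond correctly recognising the equivalence $\phi^\RR\equiv\phidestrict$ and keeping the two satisfaction conditions straight across the two statement types, namely the common non-strict part plus the appropriate strictness witness (a universal strict inclusion for fully strict statements, a single strict pair for weakly strict ones). Everything then follows by chasing the definitions of fully and weakly strict satisfaction together with the given inclusion of strict relations.
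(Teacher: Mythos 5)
Your proof is correct and follows the route the paper intends: the paper states this lemma without an explicit proof, relying on the earlier identification of fully strict and weakly strict satisfaction with the prototype strict versions $\psi_1$ and $\psi_2$ of $\phi^\RR$, and your argument is exactly the direct verification of that identification (the equivalence $\phi^\RR \equiv \phidestrict$ when $\RR = \phi^*$, the relaxation condition via Proposition~\ref{pr:satisfaction-strict-statements}, and the monotonicity condition checked case by case). No gaps.
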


Proposition~\ref{pr:strong-compos-alpha-beta-calR}
can be seen to imply that the non-strict
elements of the language $\calLpq$
are strongly compositional.
In fact this also holds for both kinds of strict statements
and certain negations.

\begin{theorem}
\label{th:phi-strongly-compos}
Consider any $\phi\in\calLpq$.
Then $\phi$ is strongly compositional
and
$\pi\modelsstar\phi$ if and only if $\pi\models\phidestrict$.
If $\phi$ is non-strict then
 $\neg\phi$ is compositional;
 if also, $R_\phi = S_\phi$ then
 $\neg\phi$ is strongly compositional,
 and $\pi\modelsstar\neg\phi$ if and only if
 either $\pi\models\neg\phi$ or $V_\pi \cap S_\phi = \emp$.
\end{theorem}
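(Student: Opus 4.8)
The plan is to reduce every assertion to the general results about $\phi^\RR$ from Section~\ref{sec:Important-Instances}, taking $\RR = \phi^*$. The key observation is that $\pi\models\phidestrict$ means precisely $\succceq_\pi \supseteq \phi^*$, so $\phidestrict$ and $\phi^\RR$ (for $\RR=\phi^*$) are literally the same statement. For the first claim I would split on the type of $\phi$. If $\phi$ is non-strict then $\phi=\phidestrict=\phi^\RR$, so strong compositionality and the equivalence $\pi\modelsstar\phi \iff \pi\models\phidestrict$ are immediate from Proposition~\ref{pr:phi-calR-strongly-compos}. If $\phi$ is fully strict or weakly strict, then Lemma~\ref{le:strict-strict} says $\phi$ is a strict version of $\phi^\RR$, and Proposition~\ref{pr:phi-calR-strict-strongly-compos} then gives both that $\phi$ is strongly compositional and that $\pi\modelsstar\phi \iff \pi\models\phi^\RR=\phidestrict$. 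This settles the first sentence for all three statement types.

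For the second claim, if $\phi$ is non-strict then $\neg\phi = \neg\phi^\RR$, so its compositionality is exactly Proposition~\ref{pr:neg-phi-calR-compos}.

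For the third claim, assume additionally $R_\phi=S_\phi$. Lemma~\ref{le:simult-decisive-R-S} shows $\phi^*=\RR$ is simultaneously decisive, so Proposition~\ref{pr:neg-strongly-compos} immediately yields that $\neg\phi$ is strongly compositional together with $\pi\modelsstar\neg\phi \iff \pi\in\calM_{\neg\phi^\RR}$. The remaining work is to translate membership in $\calM_{\neg\phi^\RR}$ into the stated condition. Condition (i), namely $\pi\not\models\phi^\RR$, is exactly $\pi\models\neg\phi$, so it suffices to show that, when $\pi\models\phi^\RR$, condition (ii) (``for every $X\in V_\pi$, if $\RR^{\downarrow X}$ is acyclic then $\RR^{\downarrow X}\subseteq\ =$'') is equivalent to $V_\pi\cap S_\phi=\emp$. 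I would prove this by computing $\RR^{\downarrow X}$ for each variable class using Proposition~\ref{pr:rs-local-ordering} with $\AA=\emp$: for $X\in R_\phi=S_\phi$ it is the single irreflexive pair $\set{(r_\phi(X), s_\phi(X))}$ (recall $r_\phi(X)\neq s_\phi(X)$), which is acyclic but not contained in $=$, so (ii) fails; for $X\in T_\phi\cup U_\phi$ it is contained in the diagonal, so (ii) holds; and for $X\in\WW_\phi$ it equals $\und{X}\times\und{X}$, which is not acyclic, so the premise is false and (ii) holds vacuously. Hence (ii) holds iff $V_\pi$ meets none of $R_\phi=S_\phi$, i.e.\ iff $V_\pi\cap S_\phi=\emp$, completing the characterization.

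The main obstacle --- really the only place that goes beyond citing earlier results --- is this final case analysis: one must correctly identify $\RR^{\downarrow X}$ for each kind of variable and, in particular, justify non-acyclicity of $\und{X}\times\und{X}$ for $X\in\WW_\phi$. This is exactly where the normalizing convention that singleton-domain variables lie in $T_\phi$ becomes essential, since it guarantees $|\und{X}|\geq 2$ for $X\in\WW_\phi$; without it a $\WW_\phi$-variable with a one-element domain would make $\RR^{\downarrow X}$ acyclic and reflexive, breaking the clean equivalence with $V_\pi\cap S_\phi=\emp$.
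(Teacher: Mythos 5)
Your proposal is correct and follows essentially the same route as the paper's own proof: reduce to $\phi^{\RR}$ with $\RR=\phi^*$, invoke Lemma~\ref{le:strict-strict} and Proposition~\ref{pr:phi-calR-strict-strongly-compos} for the strict cases, Proposition~\ref{pr:phi-calR-strongly-compos} and Proposition~\ref{pr:neg-phi-calR-compos} for the non-strict case, and Lemma~\ref{le:simult-decisive-R-S} with Proposition~\ref{pr:neg-strongly-compos} followed by a translation of membership in $\calM_{\neg\phi^{\RR}}$ via Proposition~\ref{pr:rs-local-ordering}. Your final case analysis is in fact slightly more explicit than the paper's (which compresses it into one sentence), and your remark about the convention forcing $|\und{X}|\ge 2$ for $X\in\WW_\phi$ correctly identifies why the $\WW_\phi$ case is vacuous.
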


\begin{proof}
%\commentphstart
Let $\RR = \phi^*$.
First suppose that $\phi$ is either a fully strict statement or a weakly strict statement.
For all $\pi\in\Glex$ we have
$\pi\models\phidestrict$ $\iff$ $\pi\models\phi^\RR$.
Then, by Lemma~\ref{le:strict-strict}, $\phi$ is a strict version of $\phi^\RR$.
Proposition~\ref{pr:phi-calR-strict-strongly-compos} implies that
$\phi$ is strongly compositional and, for $\pi\in\Glex$,
$\pi\modelsstar\phi$ if and only if $\pi\models\phi^\RR$,
which is if and only if $\pi\models\phidestrict$.

%\commentph
Now suppose that $\phi$ is non-strict.
Then for all $\pi\in\Glex$ we have
$\pi\models\phi$ $\iff$ $\pi\models\phi^\RR$,
and thus also,
$\pi\modelsstar\phi$ $\iff$ $\pi\modelsstar\phi^\RR$.
Proposition~\ref{pr:phi-calR-strongly-compos} implies that
$\phi^\RR$ is strongly compositional,
and $\pi\modelsstar\phi^\RR$ $\iff$
$\pi\models\phi^\RR$ for any $\pi\in\Glex$.
Thus, $\phi$ is strongly compositional,
and, for all $\pi\in\Glex$ we have $\pi\modelsstar\phi$ $\iff$
 $\pi\models\phidestrict$, since $\phidestrict = \phi$.
Proposition~\ref{pr:neg-phi-calR-compos} implies that
$\neg\phi$ is compositional.

%\commentph
Now, assume also that $R_\phi = S_\phi$.
Then  $\RR = \phi^*$ is simultaneously decisive,
by Lemma~\ref{le:simult-decisive-R-S}.
Proposition~\ref{pr:neg-strongly-compos} implies that
$\neg\phi^\RR$ and thus $\neg\phi$ is strongly compositional
and for all $\pi\in\Glex$, $\pi\modelsstar\neg\phi$ $\iff$ $\pi\in\calM_{\neg\phi^\RR}$.

%\commentphend
We have:
$\pi\in\calM_{\neg\phi^\RR}$ $\iff$
either (i) $\pi\not\models\phi^\RR$ or
(ii) for every variable $X\in V_\pi$,
either $\RR^{\downarrow X}$ is not acyclic or $\RR^{\downarrow X} \subseteq\ =$.
(ii) holds if and only if for every $X\in V_\pi$,
we have
$X\notin R_\phi \cup S_\phi$
(i.e., $X\notin S_\phi$, since $R_\phi = S_\phi$), so
(ii) holds if and only if $V_\pi \cap S_\phi = \emp$.
Thus, $\pi\modelsstar\neg\phi$ holds if and only if
either $\pi\models\neg\phi$ or $V_\pi \cap S_\phi = \emp$.
\end{proof}

\subsection{Checking Consistency for Subsets of $\calLpqn$}
\label{subsec:checking-cons-calLpqn}

Theorem~\ref{th:phi-strongly-compos} suggests the feasibility of checking consistency of subsets of the language $\calLpqn$,
which is $\calLpq$ with certain negated statements also included.
Formally,
define $\calLpqn$ to be the union of $\calLpq$
with
$\set{\neg\phi \st \phi\in\calLpq,
\phi \textit{ non-strict, and } R_\phi = S_\phi}$.

We use the method of Section~\ref{subsec:checking-cons}
to determine the consistency of a set of preference statements $\Gamma\subseteq\calLpqn$,
by incrementally extending a  maximal $\modelsstar$-model $\pi$ of $\Gamma$, and then checking whether or not $\pi\models\Gamma$ holds;
this makes use of Propositions~\ref{pr:basic-pi-satisfies-phi-Lpq} and~\ref{pr:satisfaction-strict-statements}.

\newcommand\bestelt{\textrm{Best}}
\newcommand\worstelt{\textrm{Worst}}
\newcommand\pospairs{\textrm{Pos}}
\newcommand\negpairs{\textrm{Neg}}
\newcommand\pairs{\textrm{Pairs}}

\newcommand{\Gammabar}{\overline{\Gamma}}

Let $\Gamma\subseteq\calLpqn$,
let $X\in V$ and let $\pi\in\Glex$.
We make the following definitions,
where  $\Gammabar$ is the set of
all $\phi\in\Gamma\cap\calLpq$ such that
$R_\phi \cap S_\phi \cap V_\pi = \emp$.
\begin{itemize}
  \item $\bestelt_\Gamma^\pi(X) =
  \set{r_\phi(X) \st \phi\in\Gammabar \ \& \
  X\in R_\phi\setminus S_\phi}$.
  \item  $\worstelt_\Gamma^\pi(X) =
  \set{s_\phi(X) \st \phi\in\Gammabar \ \& \
   X\in S_\phi\setminus R_\phi}$.
   \item $\pairs_\Gamma^\pi(X) = \pospairs_\Gamma^\pi(X) \cup \negpairs_\Gamma^\pi(X)$,
       where
\end{itemize}
    $\pospairs_\Gamma^\pi(X) =
  \set{(r_\phi(X), s_\phi(X)) : \phi\in\Gammabar \ \& \
  X\in R_\phi\cap S_\phi}$; and
   $\negpairs_\Gamma^\pi(X)$ is the set of  all
   pairs
  $(s_\phi(X), r_\phi(X))$ such that $\neg\phi\in\Gamma$ and
  $T_\phi \cup U_\phi \supseteq V_\pi$,
    and $X\in R_\phi $($= S_\phi$).

%\commentph
\begin{lemma}
\label{le:Bestelt-etc-properties}
Suppose that $\Gamma\subseteq\calLpqn$.
Let $X \in V$ and let $\ge_X$ be a total ordering on $\und{X}$,
and let $\pi' = \pi \circ (X,\ge_X)$.
Suppose that
$\pi' \modelsstar \Gamma$. Then the following hold:
\begin{itemize}
  \item For all  $ x \in \bestelt_\Gamma^\pi(X)$
and $x' \in \und{X}$ we have
$x \ge_X x'$. In particular then,
$|\bestelt_\Gamma^\pi(X)| \le 1$.
  \item  For all  $ x \in \worstelt_\Gamma^\pi(X)$
and $x' \in \und{X}$ we have
$x' \ge_X x$. In particular then,
$|\worstelt_\Gamma^\pi(X)| \le 1$.
\item If
$(x,x') \in \pairs_\Gamma^\pi(X)$
then $x \ge_X x'$.
\end{itemize}
\end{lemma}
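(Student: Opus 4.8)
The plan is to unwind the hypothesis $\pi'\modelsstar\Gamma$ through the characterisations of $\modelsstar$-satisfaction established in Theorem~\ref{th:phi-strongly-compos}, and then to invoke the structural description of satisfaction of non-strict statements from Proposition~\ref{pr:basic-pi-satisfies-phi-Lpq-version}. The key observation is that $\pi'\modelsstar\Gamma$ means $\pi'\modelsstar\phi$ for each $\phi\in\Gamma$, and Theorem~\ref{th:phi-strongly-compos} tells us that for $\phi\in\Gamma\cap\calLpq$ this is equivalent to $\pi'\models\phidestrict$, while for the negated statements $\neg\phi\in\Gamma$ it is equivalent to $\pi'\models\neg\phi$ or $V_{\pi'}\cap S_\phi=\emp$. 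Since every witnessing statement defining $\bestelt_\Gamma^\pi(X)$, $\worstelt_\Gamma^\pi(X)$ and $\pospairs_\Gamma^\pi(X)$ lies in $\Gammabar\subseteq\calLpq$, the first task is to extract the value-ordering constraints that $\pi'\models\phidestrict$ forces on the ordering $\ge_X$ attached to the last variable $X$ of $\pi'$.

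First I would handle the $\bestelt$ claim. Take any $x\in\bestelt_\Gamma^\pi(X)$; by definition $x=r_\phi(X)$ for some $\phi\in\Gammabar$ with $X\in R_\phi\setminus S_\phi$. Membership in $\Gammabar$ means $R_\phi\cap S_\phi\cap V_\pi=\emp$, and since $\pi'=\pi\circ(X,\ge_X)$ adds exactly the variable $X$, and $X\notin S_\phi$, we also get $R_\phi\cap S_\phi\cap V_{\pi'}=\emp$. Thus in the model $\pi'$ the index $i$ of Proposition~\ref{pr:basic-pi-satisfies-phi-Lpq-version} finds no definite variable from $R_\phi\cap S_\phi$, so $X$ (being in $R_\phi\setminus S_\phi$, hence relevant whenever it is reached) falls under condition~(c), which states precisely that $r_\phi(X)\ge_X x'$ for all $x'\in\und{X}$. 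This gives $x\ge_X x'$ for all $x'$, and since $\ge_X$ is a total order, any two such best-values must coincide, yielding $|\bestelt_\Gamma^\pi(X)|\le 1$. The $\worstelt$ claim is entirely symmetric, using condition~(d) of the same proposition to conclude $x'\ge_X s_\phi(X)$ for all $x'$.

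For the $\pairs$ claim I would split into the two constituent sets. If $(x,x')\in\pospairs_\Gamma^\pi(X)$ then $(x,x')=(r_\phi(X),s_\phi(X))$ for some $\phi\in\Gammabar$ with $X\in R_\phi\cap S_\phi$; here $X$ is a definite variable, and because $\phi\in\Gammabar$ there is no earlier definite variable in $V_\pi$, so $X$ is the earliest definite variable in $V_{\pi'}$ and is therefore relevant. Condition~(b) of Proposition~\ref{pr:basic-pi-satisfies-phi-Lpq-version} then gives $r_\phi(X)>_X s_\phi(X)$, hence $x\ge_X x'$. If instead $(x,x')\in\negpairs_\Gamma^\pi(X)$, so $(x,x')=(s_\phi(X),r_\phi(X))$ with $\neg\phi\in\Gamma$, $R_\phi=S_\phi$, $X\in R_\phi$, and $T_\phi\cup U_\phi\supseteq V_\pi$, I would use the negated-statement half of Theorem~\ref{th:phi-strongly-compos}: $\pi'\modelsstar\neg\phi$ holds iff $\pi'\models\neg\phi$ or $V_{\pi'}\cap S_\phi=\emp$. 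Since $X\in S_\phi\cap V_{\pi'}$, the second disjunct fails, so $\pi'\models\neg\phi$, i.e.\ $\pi'\not\models\phidestrict$. Here $V_{\pi'}=V_\pi\cup\set{X}$ with $T_\phi\cup U_\phi\supseteq V_\pi$, so $X$ is the unique variable of $\pi'$ in $R_\phi\cap S_\phi$, making it the sole relevant variable; by Proposition~\ref{pr:basic-pi-satisfies-phi-Lpq-version} the failure of satisfaction must come from the negation of condition~(b), forcing $r_\phi(X)\not>_X s_\phi(X)$, i.e.\ $s_\phi(X)\ge_X r_\phi(X)$, which is exactly $x\ge_X x'$.

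The main obstacle I anticipate is the $\negpairs$ case: one must argue carefully that $\pi'\not\models\phidestrict$ together with the fact that $X$ is the only relevant variable pins the failure down to condition~(b) alone, rather than to any of (a), (c), (d). This relies on verifying that, with $T_\phi\cup U_\phi\supseteq V_\pi$ and $R_\phi=S_\phi$, the variable $X$ cannot be in $\WW_\phi$ nor in $R_\phi\setminus S_\phi$ nor $S_\phi\setminus R_\phi$, so conditions (a), (c), (d) are vacuous at $X$ and the only way to violate satisfaction is via (b). Once that reduction is clean, the rest follows directly from the cited propositions, and the ``in particular'' cardinality statements are immediate from totality of $\ge_X$.
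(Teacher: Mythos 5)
Your proposal follows essentially the same route as the paper's proof: unwind $\pi'\modelsstar\Gamma$ via Theorem~\ref{th:phi-strongly-compos} into $\pi'\models\phidestrict$ for the $\calLpq$-statements and the disjunction for the negated ones, then read off the constraints on $\ge_X$ from Proposition~\ref{pr:basic-pi-satisfies-phi-Lpq-version}, treating $\negpairs$ exactly as the paper does (the second disjunct fails because $X\in S_\phi\cap V_{\pi'}$, so $\pi'\not\models\phidestrict$, and $X$ is the sole relevant variable so only condition (b) can be violated).

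There is one under-justified step, in the $\bestelt$, $\worstelt$ and $\pospairs$ cases: you assert that $X$ is relevant because there is no definite variable from $R_\phi\cap S_\phi$ in $V_{\pi'}$ (or, for $\pospairs$, in $V_\pi$). But ``definite'' in Proposition~\ref{pr:basic-pi-satisfies-phi-Lpq-version} means lying in $(R_\phi\cap S_\phi)\cup\WW_\phi$, so you must also rule out an earlier variable of $V_\pi$ lying in $\WW_\phi$; your parenthetical ``being in $R_\phi\setminus S_\phi$, hence relevant whenever it is reached'' is not correct as a matter of definition. The conclusion is still true, and the paper supplies the missing line: if $Y\in V_\pi$ were the earliest definite variable it would itself be relevant, so condition (a) of Proposition~\ref{pr:basic-pi-satisfies-phi-Lpq-version} (which applies because $\pi'\models\phidestrict$) forces $Y\notin\WW_\phi$, hence $Y\in R_\phi\cap S_\phi$, contradicting $R_\phi\cap S_\phi\cap V_\pi=\emp$ for $\phi\in\Gammabar$. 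With that one-line repair your argument is complete and coincides with the paper's.
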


\begin{proof}
%\commentphstart
Since $\pi' \modelsstar \Gamma$ we have,
by Theorem~\ref{th:phi-strongly-compos},
$\pi'\models\phidestrict$ for $\phi\in\Gamma\cap\calLpq$,
and for $\neg\phi \in \Gamma$,
either $\pi'\models\neg\phi$ or $V_{\pi'} \cap S_\phi = \emp$.
Recall the definitions of definite and relevant variables in Proposition~\ref{pr:basic-pi-satisfies-phi-Lpq-version}.
Given $\phi\in\Gamma\cap\calLpq$,
we have that
if $\phi$ is such that $R_\phi \cap S_\phi \cap V_\pi = \emp$ and $X\in R_\phi\cup S_\phi$
then $X$ is relevant given $\phidestrict$ and $\pi'$.
This is because $X$ would only not be relevant if
there were an earlier definite variable $Y$ in $\pi'$
and thus in $V_\pi$;
we'd then have
$Y\notin\WW_\phi$, by
Proposition~\ref{pr:basic-pi-satisfies-phi-Lpq-version}
and so $Y\in R_\phi \cap S_\phi$, which contradicts
$R_\phi \cap S_\phi \cap V_\pi = \emp$.

%\commentph
Suppose that $x \in \bestelt_\Gamma^\pi(X)$.
By definition, there exists
$\phi\in\Gamma\cap\calLpq $ such that
$x = r_\phi(X)$ and
  $R_\phi \cap S_\phi \cap V_\pi = \emp$ and $X\in R_\phi\setminus S_\phi$.
Since $X$ is relevant given $\phidestrict$ and $\pi'$,
Proposition~\ref{pr:basic-pi-satisfies-phi-Lpq-version}
implies that
for all $x'\in\und{X}$,
  $x \ge_X x'$.
Since $\ge_X$ is a total order, there can be at most one element $x$
 in $\bestelt_\Gamma^\pi(X)$.
A similar argument shows that if  $x \in \worstelt_\Gamma^\pi(X)$
then for all $x'\in \und{X}$,
we have $x' \ge_X x$ which implies that
$|\worstelt_\Gamma^\pi(X)| \le 1$.

%\commentph
Suppose that $(x,x') \in \pospairs_\Gamma^\pi(X)$.
Then, by definition, there exists $\phi\in\Gamma\cap\calLpq$
such that $r_\phi(X) = x$ and $ s_\phi(X) = x'$
and $R_\phi \cap S_\phi \cap V_\pi = \emp$ and $X\in R_\phi\cap S_\phi$.
Since $X$ is relevant given $\phidestrict$ and $\pi'$,
Proposition~\ref{pr:basic-pi-satisfies-phi-Lpq-version}
implies that
$x \ge_X x'$.

%\commentphend
Suppose that $(x,x') \in \negpairs_\Gamma^\pi(X)$.
Then there exists $\neg\phi\in\Gamma$ with
$s_\phi(X)= x$ and  $r_\phi(X) = x'$ and
  $T_\phi \cup U_\phi \supseteq V_\pi$ and  $X\in R_\phi= S_\phi$.
Since $V_{\pi'} \cap S_\phi \not= \emp$
and $\pi'\modelsstar\neg\phi$,
we have $\pi'\models\neg\phi$,
i.e., $\pi'\not\models\phi$.
The condition $T_\phi \cup U_\phi \supseteq V_\pi$,
 using
Proposition~\ref{pr:basic-pi-satisfies-phi-Lpq-version}
implies that $\pi\models\phidestrict$, i.e.,
$\pi\models\phi$, since $\phi$ is non-strict.
Also, $X$ is relevant given $\phidestrict$ and $\pi'$,
so, $\pi'\not\models\phi$ implies,
using
Proposition~\ref{pr:basic-pi-satisfies-phi-Lpq-version},
that
$r_\phi(X) \not\ge_X s_\phi(X))$,
and thus, $s_\phi(X) >_X r_\phi(X)$
and $x \ge_X x'$.
\end{proof}

Given $\Gamma$ and $\pi\in\Glex$ with
$\pi \modelsstar\Gamma$,
we say that $X$ \emph{can be chosen next} if:
$X\in V - V_\pi$ and
\begin{itemize}
  \item if $\phi\in\Gamma\cap\calLpq$
  and $R_\phi \cap S_\phi \cap V_\pi = \emp$ then
  $X \notin \WW_\phi$;
  \item $\pairs_\Gamma^\pi(X)$ is acyclic;
  \item $|\bestelt_\Gamma^\pi(X)| \le 1$ and $|\worstelt_\Gamma^\pi(X)| \le 1$;
  \item if $x \in \bestelt_\Gamma^\pi(X)$ then
  $x$ is undominated in $\pairs_\Gamma^\pi(X)$,
  i.e., there exists no element of the form $(x', x)$ in $\pairs_\Gamma^\pi(X)$;
  \item if $x \in \worstelt_\Gamma^\pi(X)$ then
  $x$ is not dominating in $\pairs_\Gamma^\pi(X)$,
  i.e., there exists no element of the form $(x, x')$ in $\pairs_\Gamma^\pi(X)$.
\end{itemize}
We also say that
 $(X,\ge_X)$ is a \emph{valid extension of $\pi$} if
 (i) $X$ can be chosen next; and
 (ii) $\ge_X$ $\supseteq$ $\pairs_\Gamma^\pi(X)$;
  (iii) if $x\in \bestelt_\Gamma^\pi(X)$ then
 $x$ is the best element in $\und{X}$
 with respect to $\ge_X$ (so that $x\ge y$
 for all $y\in\und{X}$);
 and (iv) if $x'\in \worstelt_\Gamma^\pi(X)$ then
 $x'$ is the worst element in $\und{X}$
 with respect to $\ge_X$.

%\commentph
We state a basic lemma that follows easily
from
Proposition~\ref{pr:basic-pi-satisfies-phi-Lpq-version}.

%\commentph
\begin{lemma}
\label{le:W-not-in-models}
Suppose that $\phi\in \calLpq$ and
 $\pi\in\Glex$ are such that
 $\pi \models \phi$ and
 $V_\pi \cap R_\phi \cap S_\phi = \emp$.
 Then, $V_\pi \cap \WW_\phi = \emp$.
\end{lemma}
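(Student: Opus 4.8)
The plan is to argue by contradiction, reducing to the non-strict version $\phidestrict$ and then invoking the satisfaction characterisation in Proposition~\ref{pr:basic-pi-satisfies-phi-Lpq-version}. First I would observe that for any $\phi\in\calLpq$ (whether non-strict, fully strict, or weakly strict), $\pi\models\phi$ implies $\pi\models\phidestrict$; this is immediate from the three clauses defining satisfaction in $\calLpq$, since each of them requires in particular that $\succceq_\pi\ \supseteq\ \phi^*$. So I may assume $\pi\models\phidestrict$ together with the hypothesis $V_\pi\cap R_\phi\cap S_\phi=\emp$, and the goal is to show $V_\pi\cap\WW_\phi=\emp$.

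Suppose, for contradiction, that $V_\pi\cap\WW_\phi\not=\emp$, and let $X$ be the earliest variable of $\WW_\phi$ appearing in $\pi$. The key step is to check that $X$ is \emph{relevant} in the sense of Proposition~\ref{pr:basic-pi-satisfies-phi-Lpq-version}: recall that a variable of $V_\pi$ is definite if it lies in $(R_\phi\cap S_\phi)\cup\WW_\phi$, and relevant if it lies in $R_\phi\cup S_\phi\cup\WW_\phi$ and has no earlier definite variable in $V_\pi$. Since $X\in\WW_\phi$, the membership condition holds trivially. For the ``no earlier definite variable'' condition, note that the hypothesis $V_\pi\cap R_\phi\cap S_\phi=\emp$ rules out any definite variable arising from $R_\phi\cap S_\phi$, while the choice of $X$ as the earliest $\WW_\phi$-variable in $\pi$ rules out any earlier definite variable arising from $\WW_\phi$; hence no definite variable precedes $X$, so $X$ is relevant.

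Finally, since $\pi\models\phidestrict$, condition (a) of Proposition~\ref{pr:basic-pi-satisfies-phi-Lpq-version} applies to every relevant variable and yields $X\notin\WW_\phi$, contradicting $X\in\WW_\phi$; this forces $V_\pi\cap\WW_\phi=\emp$, as required. I do not anticipate a serious obstacle here: the only delicate point is the verification that the chosen $X$ is relevant, and that is precisely where the hypothesis $V_\pi\cap R_\phi\cap S_\phi=\emp$ is consumed. As a sanity check, one could alternatively derive the result almost immediately from Lemma~\ref{le:phi-W-variable}, which, under $\pi\models\phidestrict$, guarantees that $\WW_\phi\cap V_\pi\not=\emp$ produces a variable in $R_\phi\cap S_\phi\cap V_\pi$, directly contradicting the hypothesis.
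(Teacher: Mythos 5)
Your proof is correct and follows exactly the route the paper intends: the paper gives no explicit proof, stating only that the lemma ``follows easily from Proposition~\ref{pr:basic-pi-satisfies-phi-Lpq-version}'', and your argument (reduce to $\phidestrict$, take the earliest $\WW_\phi$-variable in $\pi$, verify it is relevant using $V_\pi\cap R_\phi\cap S_\phi=\emp$, and contradict condition~(a)) is precisely that derivation spelled out. Your observation that Lemma~\ref{le:phi-W-variable} yields the result even more directly is also accurate.
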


The following result states the  conditions needed for minimally extending $\pi$ to
maintain the $\modelsstar$-satisfaction of $\Gamma$.

\begin{proposition}
\label{pr:Valid-extensions}
Suppose that $\Gamma\subseteq\calLpqn$,
and that $\pi \modelsstar\Gamma$.
Let $X$ be a variable in $V-V_\pi$
and let $\pi' = \pi \circ (X,\ge_X)$,
where $\ge_X$ is a total ordering on $\und{X}$.
Then
$\pi' \modelsstar \Gamma$
if and only if $(X,\ge_X)$ is a valid extension of $\pi$.
\end{proposition}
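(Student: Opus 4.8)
The plan is to prove the biconditional by establishing each direction separately, using the characterization of $\modelsstar$ for the individual statements provided by Theorem~\ref{th:phi-strongly-compos}, together with the satisfaction conditions of Proposition~\ref{pr:basic-pi-satisfies-phi-Lpq-version} and the strict-statement refinement in Proposition~\ref{pr:satisfaction-strict-statements}. The key reduction is that $\pi'\modelsstar\Gamma$ means $\pi'\modelsstar\phi$ for every $\phi\in\Gamma$, and by Theorem~\ref{th:phi-strongly-compos} this unpacks into concrete conditions: for $\phi\in\Gamma\cap\calLpq$ we need $\pi'\models\phidestrict$, and for each negated statement $\neg\phi\in\Gamma$ we need either $\pi'\models\neg\phi$ or $V_{\pi'}\cap S_\phi=\emp$. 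Since $V_{\pi'}=V_\pi\cup\set{X}$ and we are given $\pi\modelsstar\Gamma$, the task is to isolate exactly what the single new variable $X$ (with its ordering $\ge_X$) must contribute.

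First I would prove the forward direction: assuming $\pi'\modelsstar\Gamma$, I verify that $X$ can be chosen next and that $\ge_X$ satisfies conditions (ii)--(iv) of a valid extension. The four numbered bullets of ``can be chosen next'' are almost exactly the conclusions of Lemma~\ref{le:Bestelt-etc-properties}: that lemma already gives $|\bestelt_\Gamma^\pi(X)|\le 1$, $|\worstelt_\Gamma^\pi(X)|\le 1$, that $\pairs_\Gamma^\pi(X)$ is respected by $\ge_X$ (hence acyclic, since $\ge_X$ is a total order), and the best/worst dominance conditions. The condition $X\notin\WW_\phi$ when $R_\phi\cap S_\phi\cap V_\pi=\emp$ follows because $\pi'\models\phidestrict$ forces the relevant variable $X$ out of $\WW_\phi$ by Proposition~\ref{pr:basic-pi-satisfies-phi-Lpq-version}(a); here I would invoke Lemma~\ref{le:W-not-in-models} or the relevance argument at the start of Lemma~\ref{le:Bestelt-etc-properties}'s proof to rule out an earlier definite variable. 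Conditions (ii)--(iv) on $\ge_X$ are then restatements of the same three bullets of Lemma~\ref{le:Bestelt-etc-properties}.

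For the converse I would assume $(X,\ge_X)$ is a valid extension and show $\pi'\modelsstar\phi$ for each statement in $\Gamma$. The main work is to show $\pi'\models\phidestrict$ for each $\phi\in\Gamma\cap\calLpq$; I would apply Proposition~\ref{pr:basic-pi-satisfies-phi-Lpq-version} to $\pi'$, checking conditions (a)--(d) variable by variable. Since $\pi\modelsstar\Gamma$ already handles the relevant variables lying in $V_\pi$, only the new variable $X$ needs attention, and it matters only when $X$ is relevant for $\phidestrict$, i.e.\ when $R_\phi\cap S_\phi\cap V_\pi=\emp$ and $X\in R_\phi\cup S_\phi\cup\WW_\phi$; then the ``can be chosen next'' clause $X\notin\WW_\phi$ plus the $\bestelt/\worstelt/\pairs$ conditions (ii)--(iv) give exactly (a)--(d). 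For the negated statements $\neg\phi\in\Gamma$ I must show that $\pi'$ either falsifies $\phi$ or has $V_{\pi'}\cap S_\phi=\emp$; the cases split on whether $X\in S_\phi$, and when it is, the $\negpairs$ membership forces the ordering $s_\phi(X)>_X r_\phi(X)$ through condition (ii), which makes $\pi'\not\models\phidestrict=\phi$.

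The main obstacle I anticipate is the bookkeeping around negated statements interacting with the transition from $V_\pi$ to $V_{\pi'}$: specifically, a $\neg\phi$ whose set $S_\phi$ was disjoint from $V_\pi$ (so vacuously $\modelsstar$-satisfied via the second disjunct) but now meets $V_{\pi'}$ precisely because $X\in S_\phi$. In that borderline case the new variable $X$ must carry the whole burden of falsifying $\phi$, and I need to confirm that $\negpairs_\Gamma^\pi(X)$ was defined to capture exactly this situation (it requires $T_\phi\cup U_\phi\supseteq V_\pi$ and $X\in R_\phi=S_\phi$) and that condition (ii), $\ge_X\supseteq\pairs_\Gamma^\pi(X)$, installs the reversed order witnessing $\pi'\not\models\phi$. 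Getting the direction of the pair $(s_\phi(X),r_\phi(X))$ right, and confirming that the $T_\phi\cup U_\phi\supseteq V_\pi$ hypothesis makes $X$ the decisive relevant variable, is the delicate point; the rest reduces to careful application of the two satisfaction propositions.
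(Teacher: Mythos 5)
Your proposal is correct and follows essentially the same route as the paper's proof: both directions unpack $\modelsstar$ via Theorem~\ref{th:phi-strongly-compos}, the forward direction rests on Lemma~\ref{le:Bestelt-etc-properties} together with Lemma~\ref{le:W-not-in-models} for the $\WW_\phi$ clause, and the converse reduces to checking the contribution of the single new variable $X$ (the paper phrases this via Lemma~\ref{le:alpha-pi-beta-Gamma-sequential} and the projection $\phi^{\downarrow X}_{V_\pi}$ rather than directly via the relevant-variable conditions of Proposition~\ref{pr:basic-pi-satisfies-phi-Lpq-version}, but these are interchangeable). You have also correctly isolated the one delicate point, namely the negated statement whose $S_\phi$ first meets the model at $X$, which the paper resolves exactly as you describe through $\negpairs_\Gamma^\pi(X)$ and the reversed pair $(s_\phi(X),r_\phi(X))$.
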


\begin{proof}
%\commentphstart
Since $\pi \modelsstar\Gamma$,
Theorem~\ref{th:phi-strongly-compos} implies that
$\pi\models\phidestrict$ for $\phi\in\Gamma\cap\calLpq$,
and for $\neg\phi \in \Gamma$,
either $\pi\models\neg\phi$ or $V_{\pi} \cap S_\phi = \emp$
(since $\neg\phi \in \calLpqn$ implies that
$\phi\in\calLpq$ and $\phi$ is non-strict, and $R_\phi = S_\phi$).

%\commentph
\ssk\noindent
$\Leftarrow:$
We will first prove that if $(X,\ge_X)$ is a valid extension of $\pi$
then $\pi' \modelsstar \Gamma$.
For $\phi\in\Gamma\cap\calLpq$,
we have $\pi'\modelsstar\phi$ if and only if $\pi'\models\phidestrict$,
by Theorem~\ref{th:phi-strongly-compos}.
Also, for $\neg\phi \in \Gamma$ we have
$\pi'\modelsstar\neg\phi$ if and only if
 either $\pi'\models\neg\phi$ or $V_{\pi'} \cap S_\phi = \emp$.

%\commentph
Consider any $\phi\in\Gamma\cap\calLpq$.
Since  $\pi\models\phidestrict$, it follows using  Lemma~\ref{le:alpha-pi-beta-Gamma-sequential}
that $\pi'\models\phidestrict$
if and only if  $\ge_X$ $\supseteq$ $\phi^{\downarrow X}_{V_\pi}$.

%\commentph
Consider any $(x,x') \in \phi^{\downarrow X}_{V_\pi}$.
We need to show that $(x,x') \in\ \ge_X$,
i.e., that $x \ge_X x'$;
this will then imply that
$\ge_X$ $\supseteq$ $\phi^{\downarrow X}_{V_\pi}$,
and hence, $\pi'\models\phidestrict$.
Since $\phi^{\downarrow X}_{V_\pi}$ is non-empty, we have,
using Proposition~\ref{pr:rs-local-ordering},
that $R_\phi \cap S_\phi \cap V_\pi = \emp$.
This implies that $X \notin \WW_\phi$,
because  $(X,\ge_X)$ is a valid extension of $\pi$.
If $x = x'$ then clearly, $x \ge_X x'$,
This covers the cases when $X \in T_\phi$ and $X \in U_\phi$
(see Proposition~\ref{pr:rs-local-ordering}).

%\commentph
If $X \in R_\phi \cap S_\phi$, then, by  Proposition~\ref{pr:rs-local-ordering},
$\phi^{\downarrow X}_{V_\pi} = \set{(r_\phi(X), s_\phi(X))}.$
Thus,  $(x,x')\in \pospairs_\Gamma^\pi(X)$,
so $x \ge_X x'$.

%\commentph
If $X \in R_\phi \setminus S_\phi$
then $r_\phi(X) \in \bestelt_\Gamma^\pi(X)$.
Proposition~\ref{pr:rs-local-ordering} implies that
$x = r_\phi(X)$, and thus,
$x \ge_X x'$.
Similarly, if $X \in S_\phi \setminus R_\phi$
then $s_\phi(X) \in \worstelt_\Gamma^\pi(X)$.
Proposition~\ref{pr:rs-local-ordering} implies that
$x' = s_\phi(X)$, and thus,
$x \ge_X x'$.
This completes the proof that,
for any $\phi\in\Gamma\cap\calLpq$, we have
$\ge_X$ $\supseteq$ $\phi^{\downarrow X}_{V_\pi}$,
and hence, $\pi'\models\phidestrict$,
and thus, $\pi'\modelsstar\phi$.

%\commentph
Now suppose that $\neg\phi\in\Gamma$,
and so $\phi$ is non-strict and $R_\phi = S_\phi$.
We will show that $\pi' \modelsstar \neg\phi$.
Since
$\pi\modelsstar \neg\phi$ we have either $\pi\models\neg\phi$ or $V_{\pi} \cap S_\phi = \emp$.
If  $\pi\models\neg\phi$,
and so $\succceq_\pi \not\supseteq \phi^*$, then the fact that $\pi'$ extends $\pi$ implies that $\succceq_{\pi'} \subseteq \succceq_\pi$
(e.g., using Lemma~\ref{le:extends-pi-basic}),
and thus, $\succceq_{\pi'}\not\supseteq \phi^*$,
and therefore $\pi'\models\neg\phi$ and $\pi'\modelsstar\neg\phi$.
We now thus have only to consider the case when $\pi\models\phi$
and $V_{\pi} \cap S_\phi = \emp$.
%The fact that $\pi\models\phi$ and
% $V_\pi \cap S_\phi (=R_\phi) = \emp$
This implies,
 using Lemma~\ref{le:W-not-in-models},
 that $V_\pi \cap \WW_\phi = \emp$,
 and thus, $V_\pi \subseteq T_\phi \cup U_\phi$.
If $X\notin S_\phi$ then $V_{\pi'} \cap S_\phi = \emp$,
and so, $\pi' \modelsstar \neg\phi$.
Now assume that $X\in S_\phi$.
This implies that $(s_\phi(X), r_\phi(X)) \in \negpairs_\Gamma^\pi(X)$.
Because $(X,\ge_X)$ is a valid extension of $\pi$,
we have
$s_\phi(X) \ge_X r_\phi(X)$, i.e.,
 $s_\phi(X) >_X r_\phi(X)$, since $s_\phi(X) \not= r_\phi(X)$.
It cannot be the case that
$\pi' \models \phi$, since then we would have
$\pi'\models\phidestrict$ and thus,
using Lemma~\ref{le:alpha-pi-beta-Gamma-sequential},
$\ge_X$ $\supseteq$ $\phi^{\downarrow X}_{V_\pi}$,
which implies
$r_\phi(X) \ge_X s_\phi(X)$ using Proposition~\ref{pr:rs-local-ordering},
contradicting $s_\phi(X) >_X r_\phi(X)$.
We therefore have
 $\pi' \models \neg\phi$,
and thus, $\pi' \modelsstar \neg\phi$.

%\commentph
\ssk\noindent
$\Rightarrow$:
Assume now that $\pi' \modelsstar \Gamma$;
we will show that  $(X,\ge_X)$ is a valid extension of $\pi$.
We have
$X\in V - V_\pi$.
Since $\pi' \modelsstar \Gamma$ we have
$\pi'\models\phidestrict$ for $\phi\in\Gamma\cap\calLpq$,
and for $\neg\phi \in \Gamma$,
either $\pi'\models\neg\phi$ or $V_{\pi'} \cap S_\phi = \emp$.
For $\phi\in\Gamma\cap\calLpq$ we then have
$\ge_X$ $\supseteq$ $\phi^{\downarrow X}_{V_\pi}$,
by Lemma~\ref{le:alpha-pi-beta-Gamma-sequential}.
Lemma~\ref{le:W-not-in-models} implies that
if $\phi\in\Gamma\cap\calLpq$
  and $R_\phi \cap S_\phi \cap V_\pi = \emp$ then
  $X \notin \WW_\phi$.
Lemma~\ref{le:Bestelt-etc-properties} implies that
there is at most one element in $\bestelt_\Gamma^\pi(X)$
and at most one element in $\worstelt_\Gamma^\pi(X)$.
Also if  $(x,x') \in \pairs_\Gamma^\pi(X)$
then $x \ge_X x'$ and so $\ge_X$ extends $\pairs_\Gamma^\pi(X)$,
and thus, $\pairs_\Gamma^\pi(X)$ is acyclic.
The same lemma also implies that
if $x \in \bestelt_\Gamma^\pi(X)$
then for all $x' \in \und{X}$,
$x \ge_X x'$,
and thus, by the acyclicity of $\ge_X$,
$x$ is undominated in $\pairs_\Gamma^\pi(X)$.
A similar argument shows that
if  $x \in \worstelt_\Gamma^\pi(X)$
then $x$ is not dominating in $\pairs_\Gamma^\pi(X)$.
This completes the proof that
$(X,\ge_X)$ is a valid extension of $\pi$.
\end{proof}

In summary, at each stage
of the incremental algorithm
 we see if there is a variable $X$ that can be chosen next,
and if so, generate a valid extension;
if not, we then have generated a maximal $\modelsstar$-model
$\pi$ of $\Gamma$.
We check consistency of $\Gamma$ by determining if
$\pi$ satisfies $\Gamma$.
Using the fact that
$|\pairs_\Gamma^\pi(X)| \le |\Gamma|$,
it can be shown that the overall complexity of checking
that $\Gamma$ ($\subseteq\calLpqn$) is consistent is
$\calO(|V|^2 |\Gamma|)$.

%\commentph
\subsubsection{Checking $\Gamma$-Satisfaction of Maximal $\modelsstar$-Model}

%\commentph
This section considers the following problem:
Suppose that $\pi \modelsstar \Gamma$,
where $\Gamma \subseteq\calLpqn$.
Under what conditions do we have
$\pi \models \Gamma$?

%\commentph
\begin{proposition}
\label{pr:final-satisfaction-check}
Suppose that $\Gamma \subseteq\calLpqn$
and that $\pi \modelsstar \Gamma$.
\begin{itemize}
  \item If $\phi \in\Gamma \cap \calLpq$ and $\phi$ is non-strict, then $\pi\models\phi$.
  \item If $\phi \in\Gamma \cap \calLpq$ and $\phi$ is fully strict, then $\pi\models\phi$ $\iff$ $R_\phi \cap S_\phi \cap V_\pi \not=\emp$.
  \item If $\phi \in\Gamma \cap \calLpq$ and $\phi$ is weakly strict, then $\pi\models\phi$ $\iff$ $(R_\phi \cup S_\phi) \cap V_\pi \not=\emp$.
    \item If $\neg\phi\in\Gamma$, where $\phi$ is a non-strict element of $\calLpq$ with $R_\phi = S_\phi$,
  we have $\pi\models \neg\phi$ $\iff$ $V_\pi \not\subseteq T_\phi \cup U_\phi$.
\end{itemize}
Thus,
$\pi \models \Gamma$ if and only if
\begin{itemize}
  \item for all fully strict statements $\phi$
in $\Gamma \cap \calLpq$, $R_\phi \cap S_\phi \cap V_\pi \not=\emp$;
  \item for all weakly strict statements $\phi$
in $\Gamma \cap \calLpq$, $(R_\phi \cup S_\phi) \cap V_\pi \not=\emp$;
    \item for all $\neg\phi\in\Gamma$, where $\phi$ is a non-strict element of $\calLpq$ with $R_\phi = S_\phi$,
  we have $V_\pi \not\subseteq T_\phi \cup U_\phi$.
\end{itemize}
\end{proposition}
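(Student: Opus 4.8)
The plan is to establish the four per-statement characterisations first, each as a quick consequence of earlier results once we record what the hypothesis $\pi \modelsstar \Gamma$ supplies, and then to read off the global criterion for $\pi \models \Gamma$ by noting which bullets actually impose a constraint. I would begin by unpacking $\pi \modelsstar \Gamma$ via Theorem~\ref{th:phi-strongly-compos}: for every $\phi \in \Gamma \cap \calLpq$ this gives $\pi \models \phidestrict$, and for every $\neg\phi \in \Gamma$ (so $\phi$ is non-strict with $R_\phi = S_\phi$) it gives the dichotomy that either $\pi\models\neg\phi$ or $V_\pi \cap S_\phi = \emp$.

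The first three bullets then fall out immediately. For non-strict $\phi$ we have $\phi = \phidestrict$, so $\pi \models \phidestrict$ already says $\pi \models \phi$. For fully strict and weakly strict $\phi$, Proposition~\ref{pr:satisfaction-strict-statements} states that $\pi\models\phi$ holds iff $\pi \models \phidestrict$ together with $R_\phi \cap S_\phi \cap V_\pi \neq \emp$ (resp.\ $(R_\phi \cup S_\phi)\cap V_\pi \neq \emp$); since $\pi\models\phidestrict$ is already guaranteed, these reduce exactly to the stated conditions.

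The substantive case is the negated bullet, and I expect it to be the main obstacle. I would prove the equivalent form $\pi\models\phi \iff V_\pi \subseteq T_\phi \cup U_\phi$. One direction is free: if $V_\pi \subseteq T_\phi \cup U_\phi$ then the ``in particular'' clause of Proposition~\ref{pr:basic-pi-satisfies-phi-Lpq-version} yields $\pi\models\phi$, so $\pi \models \neg\phi$ forces $V_\pi \not\subseteq T_\phi \cup U_\phi$. For the converse I would invoke the dichotomy above. If $\pi\models\neg\phi$ we are done; otherwise $V_\pi \cap S_\phi = \emp$, and since $R_\phi = S_\phi$ also $V_\pi \cap R_\phi = \emp$, whence $V_\pi \cap R_\phi \cap S_\phi = \emp$. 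Assuming $V_\pi \not\subseteq T_\phi\cup U_\phi$ and recalling that $\WW_\phi = V - (R_\phi \cup S_\phi \cup T_\phi \cup U_\phi)$ forces $V_\pi \cap \WW_\phi \neq \emp$. Here Lemma~\ref{le:W-not-in-models} does the work: were $\pi \models \phi$, then $V_\pi \cap R_\phi \cap S_\phi = \emp$ would give $V_\pi \cap \WW_\phi = \emp$, a contradiction; hence $\pi \models \neg\phi$.

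Finally, for the global statement I would use that $\pi \models \Gamma$ iff $\pi \models \psi$ for all $\psi \in \Gamma$. By the first bullet the non-strict members of $\Gamma \cap \calLpq$ are automatically satisfied and so impose nothing, so $\pi \models \Gamma$ is equivalent to the conjunction of the conditions contributed by the fully strict statements, the weakly strict statements, and the negated statements, which is precisely the three-part criterion. The only bookkeeping needed is to observe that this exhausts $\calLpqn$: every member of $\Gamma$ is either a non-strict, fully strict, or weakly strict element of $\calLpq$, or a negation $\neg\phi$ of the prescribed form with $\phi$ non-strict and $R_\phi = S_\phi$.
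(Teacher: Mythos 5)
Your proposal is correct and follows essentially the same route as the paper's proof: both unpack $\pi\modelsstar\Gamma$ via Theorem~\ref{th:phi-strongly-compos}, dispatch the first three bullets with Proposition~\ref{pr:satisfaction-strict-statements}, and handle the negated bullet by combining the dichotomy ($\pi\models\neg\phi$ or $V_\pi\cap S_\phi=\emp$) with Lemma~\ref{le:W-not-in-models} and the ``in particular'' clause of Proposition~\ref{pr:basic-pi-satisfies-phi-Lpq-version}. The only difference is cosmetic: you phrase the key direction as a contradiction where the paper runs the same chain of implications directly.
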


\begin{proof}
%\commentphstart
First consider any $\phi\in\Gamma\cap\calLpq$.
We have that $\pi\modelsstar \phi$,
which implies, by Theorem~\ref{th:phi-strongly-compos},
that $\pi \models \phidestrict$.
Thus, if $\phi$ is non-strict then $\pi\models\phi$,
showing the first bullet point.
We can then use Proposition~\ref{pr:satisfaction-strict-statements}
to imply the second and third bullet points.

%\commentph
Now consider an element of the form $\neg\phi$  in $\Gamma$.
Theorem~\ref{th:phi-strongly-compos} implies that
either $\pi \models\neg\phi$ or $V_\pi \cap S_\phi = \emp$.
Thus, if $\pi\models\phi$ then $V_\pi \cap S_\phi = \emp$,
and so $V_\pi \cap \WW_\phi = \emp$,
by Lemma~\ref{le:W-not-in-models},
and so $V_\pi \cap(R_\phi \cup S_\phi \cup \WW_\phi) = \emp$,
i.e., $V_\pi \subseteq T_\phi \cup U_\phi$.
Conversely, if $V_\pi \subseteq T_\phi \cup U_\phi$
then it follows using
Proposition~\ref{pr:basic-pi-satisfies-phi-Lpq-version}
that $\succceq_\pi$ $\supseteq$ $\phi^*$,
and hence $\pi \models \phidestrict$,
and so, $\pi\models\phi$, since $\phi$ is a non-strict statement.
This proves the fourth bullet point.

%\commentphend
The second half of the result follows from the first half.
\end{proof}

\section{Optimality with Respect to Sets of Compositional Preference Statements}
\label{sec:optimality}

We consider a finite set of alternatives $\calA \subseteq\und{V}$,
and we assume that we have elicited a set $\Gamma$ of preference statements;
we would like to find the optimal alternatives among $\calA$.
As we will see, there are several natural definitions of \emph{optimal} \cite{GelainPRVW10,WilsonOMahonyAICS2011}.
We compare some of these in this section and analyse their computational cost in the context of lex models and compositional statements in the next section.

We define
pre-order relation $\succceq_\Gamma$ on outcomes by
$\alpha \succceq_\Gamma \beta$ $\iff$
$\Gamma \modelslex \alpha\ge\beta$,
and we define equivalence relation $\equiv_\Gamma$ by
$\alpha \equiv_\Gamma \beta$ $\iff$
$\Gamma \modelslex \alpha\equiv\beta$,
i.e., if $\alpha$ and $\beta$ are equivalent in all models of $\Gamma$.
Using a similar argument to that for Theorem~\ref{th:max-modelsstar-consistent},
it can be shown that,
for compositional $\Gamma$, we have  $\Gamma \modelslex \alpha\equiv\beta$ holds
if and only if $\alpha \equiv_\pi \beta$,
 where $\pi$ is an arbitrary maximal model of $\Gamma$.
 We also define $\succ_\Gamma$ to be the strict part of $\succceq_\Gamma$, so that $\alpha \succ_\Gamma \beta$
if and only if $\alpha \succceq_\Gamma \beta$ and
$\alpha \not\equiv_\Gamma \beta$.
We then say that $\alpha$ \emph{strictly dominates} $\beta$.

%%\commentph
%For compositional $\Gamma$,
%and arbitrary model $\pi$ of $\Gamma$,
%Proposition~\ref{pr:compos-max-models} implies that
%$\Gamma \modelslex \alpha\equiv\beta$
%if and only if $\alpha$ and $\beta$ are equivalent in $\pi$.

We define $\CSD_\Gamma(\calA)$ (`Can Strictly Dominate') to be the set of maximal, i.e., undominated, elements of $\calA$ w.r.t. $\succ_\Gamma$.
$\alpha\in\CSD_\Gamma(\calA)$  if and only if
for all $\beta\in\calA$ which are not
%$\Gamma$-equivalent (i.e.,
$\equiv_\Gamma$-equivalent to $\alpha$
there exists some $\pi\models\Gamma$ with $\alpha\succ_\pi\beta$.

We define $\OO_\pi(\calA)$ to be the subset of the alternatives that are optimal in model $\pi$, i.e.,
$\set{\alpha\in\calA \st \forall\beta\in\calA, \alpha \succceq_\pi\beta}$.
We say that $\alpha\in\calA$ is \emph{necessarily optimal in $\calA$},
written $\alpha\in\NO_\Gamma(\calA)$,
if $\alpha$ is optimal in every model,
i.e., if for all $\pi\in\Glex$ with $\pi\models\Gamma$
we have $\alpha\in\OO_\pi(\calA)$.
%\commentph
This holds if and only if for all $\beta\in\calA$,
and for all $\pi$ with $\pi\models\Gamma$ we have
$\alpha \succceq_\Gamma \beta$.

We say that $\alpha$ is \emph{possibly optimal},
written $\alpha\in\PO_\Gamma(\calA)$,
if $\alpha$ is optimal in some model of $\Gamma$,
so that
$\PO_\Gamma(\calA) = \bigcup_{\pi\models\Gamma} \OO_\pi(A)$.
Similarly, we say that
$\alpha\in\POM_\Gamma(\calA)$ if
$\alpha$ is optimal in some maximal model of $\Gamma$.
%\commentph
Thus we have
$\POM_\Gamma(\calA) = \bigcup_{\pi\modelsmax\Gamma} \OO_\pi(A)$,
where $\pi\modelsmax\Gamma$ means that
$\pi$ is a maximal model of $\Gamma$.
$\alpha$ is \emph{possibly strictly optimal} in $\calA$,
%(given $\Gamma$),
written $\alpha\in\PSO_\Gamma(\calA)$,
if there exists
some $\pi\models\Gamma$ with $\OO_\pi(A)\ni \alpha$
and $\Gamma\models\alpha\equiv\beta$ for all $\beta\in\OO_\pi(A)$.
Thus $\alpha$ is in $\PSO_\Gamma(\calA)$ if
there is a model of $\Gamma$ in which $\alpha$ is optimal,
and all other optimal elements are equivalent to $\alpha$.

%\commentph
Given $\Gamma\subseteq\calL$, we say that $\alpha$  is \emph{strictly optimal} (within $\calA$)  with respect to $\pi$
if $\alpha$ is optimal in $\pi$ and any other optimal element is equivalent to $\alpha$,
i.e., $\alpha\in\OO_\pi(\calA)$ and
$\Gamma\models\alpha\equiv\beta$
for all $\beta\in\OO_\pi(\calA)$.
We write
$\SO^\Gamma_\pi(\calA)$ for the set of such elements.

%\commentph
For $B\subseteq\und{V}$, we say that $B$ are all $\Gamma$-equivalent if for all $\alpha,\beta\in B$,
we have
$\Gamma\models\alpha\equiv\beta$.
If $\OO_\pi(\calA)$ are all $\Gamma$-equivalent then
$\SO^\Gamma_\pi(\calA) = \OO_\pi(\calA)$,  otherwise,
$\SO^\Gamma_\pi(\calA) = \emp$.
We always have that $\SO^\Gamma_\pi(\calA)$ are all $\Gamma$-equivalent.

Let $\Opt_\Gamma^\calA(\alpha)$ be the set of models $\pi$ of $\Gamma$
that make $\alpha$ optimal in $\calA$,
i.e., $\set{\pi\models\Gamma\st \OO_\pi(\calA)\ni\alpha}$.
We define $\alpha\in\MPO_\Gamma(\calA)$ if $\Opt_\Gamma^\calA(\alpha)$
is maximal, in the sense that there exists no $\beta\in\calA$
with $\Opt_\Gamma^\calA(\beta)$ a strict superset of $\Opt_\Gamma^\calA(\alpha)$.
We say that $\alpha\in\MPO_\Gamma(\calA)$ is \emph{maximally possibly optimal} in $\calA$ given $\Gamma$;
this holds if and only if there is no alternative that is optimal in the same set of lex models and more.

%\commentph
For $\alpha\in\calA$,
let $\Delta^\calA_\alpha = \set{\alpha\ge\beta \st \beta\in\calA}$.

%\commentph
We give some basic properties.

%\commentph
\begin{lemma}
\label{le:OO-PO-basic}
Let $\Gamma\subseteq\calL$ be a set of preference statements,
and let $\pi$ and $\pi'$ be lex models,
and let $\alpha$ be an element of set of alternatives $\calA\subseteq\und{V}$. Then the following all hold.
\begin{itemize}
  \item If $\pi'$ extends $\pi$ then $\OO_{\pi'}(\calA) \subseteq \OO_\pi(\calA)$.
  \item $\alpha\in\OO_\pi(\calA)$ $\iff$ $\pi\models\Delta^\calA_\alpha$
  \item $\pi \models \Gamma\cup\Delta^\calA_\alpha$
        $\iff$ $\pi\in\Opt_\Gamma^\calA(\alpha)$
  \item $\alpha\in\PO_\Gamma(\calA)$ $\iff$ $\Gamma\cup\Delta^\calA_\alpha$ is consistent
        $\iff$ $\Opt_\Gamma^\calA(\alpha)$ is non-empty.
  \item $\Gamma\cup\Delta^\calA_\alpha \models\alpha\equiv\beta$ $\iff$  $\Opt_\Gamma^\calA(\beta) \supseteq  \Opt_\Gamma^\calA(\alpha)$.
\end{itemize}
\end{lemma}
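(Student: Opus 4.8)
The plan is to prove the five bullet points in the order listed, since each later item reduces to definitional equivalences established by the earlier ones; only the final bullet requires a genuine argument, and it routes everything through the third bullet.

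For the first bullet I would take $\alpha\in\OO_{\pi'}(\calA)$, so that $\alpha\succceq_{\pi'}\beta$ for every $\beta\in\calA$, and apply Lemma~\ref{le:extends-pi-basic}: since $\pi'$ extends $\pi$, $\alpha\succceq_{\pi'}\beta$ implies $\alpha\succceq_{\pi}\beta$. Hence $\alpha$ is optimal in $\pi$ too, giving $\OO_{\pi'}(\calA)\subseteq\OO_\pi(\calA)$. The second bullet is a pure unfolding of definitions: $\alpha\in\OO_\pi(\calA)$ means $\alpha\succceq_\pi\beta$ for all $\beta\in\calA$, while $\pi\models\Delta^\calA_\alpha$ means $\pi\models\alpha\ge\beta$, i.e.\ $\alpha\succceq_\pi\beta$, for all $\beta\in\calA$; these are literally the same condition.

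The third bullet then follows by combining the second with the definition of $\Opt_\Gamma^\calA(\alpha)$: we have $\pi\models\Gamma\cup\Delta^\calA_\alpha$ iff ($\pi\models\Gamma$ and $\pi\models\Delta^\calA_\alpha$) iff ($\pi\models\Gamma$ and $\alpha\in\OO_\pi(\calA)$), which is exactly $\pi\in\Opt_\Gamma^\calA(\alpha)$. The fourth bullet is a direct corollary: $\alpha\in\PO_\Gamma(\calA)$ holds iff some $\pi\models\Gamma$ has $\alpha\in\OO_\pi(\calA)$, which by the third bullet is iff $\Opt_\Gamma^\calA(\alpha)\ne\emp$, and again by the third bullet iff $\Gamma\cup\Delta^\calA_\alpha$ has a model, i.e.\ is consistent.

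The substantive item is the fifth bullet, which I would prove by double inclusion. For the forward direction, assume $\Gamma\cup\Delta^\calA_\alpha\models\alpha\equiv\beta$ and take any $\pi\in\Opt_\Gamma^\calA(\alpha)$; by the third bullet $\pi\models\Gamma\cup\Delta^\calA_\alpha$, so $\alpha\equiv_\pi\beta$, hence $\beta\succceq_\pi\alpha$. Since $\alpha\in\OO_\pi(\calA)$ gives $\alpha\succceq_\pi\gamma$ for every $\gamma\in\calA$, transitivity of the total pre-order $\succceq_\pi$ yields $\beta\succceq_\pi\gamma$ for all $\gamma$, so $\beta\in\OO_\pi(\calA)$ and $\pi\in\Opt_\Gamma^\calA(\beta)$; thus $\Opt_\Gamma^\calA(\beta)\supseteq\Opt_\Gamma^\calA(\alpha)$. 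For the converse, assume $\Opt_\Gamma^\calA(\beta)\supseteq\Opt_\Gamma^\calA(\alpha)$ and take any $\pi\models\Gamma\cup\Delta^\calA_\alpha$; by the third bullet $\pi\in\Opt_\Gamma^\calA(\alpha)\subseteq\Opt_\Gamma^\calA(\beta)$, so $\beta\in\OO_\pi(\calA)$, giving $\beta\succceq_\pi\alpha$ (taking $\gamma=\alpha\in\calA$), while $\pi\models\Delta^\calA_\alpha$ gives $\alpha\succceq_\pi\beta$. Since $\succceq_\pi$ is a total pre-order, these combine to $\alpha\equiv_\pi\beta$, i.e.\ $\pi\models\alpha\equiv\beta$, and as $\pi$ is arbitrary we conclude $\Gamma\cup\Delta^\calA_\alpha\models\alpha\equiv\beta$. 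The only point worth flagging as the crux is the repeated use of the order-theoretic facts that $\alpha\equiv_\pi\beta$ is equivalent to ($\alpha\succceq_\pi\beta$ and $\beta\succceq_\pi\alpha$) and that $\equiv_\pi$-equivalent outcomes dominate exactly the same alternatives; both are immediate from $\succceq_\pi$ being a total pre-order, so no real obstacle remains beyond careful bookkeeping.
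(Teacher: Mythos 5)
Your proof is correct and follows essentially the same route as the paper's: the first four bullets by direct unfolding of definitions plus Lemma~\ref{le:extends-pi-basic}, and the fifth by double inclusion through the third bullet, using that $\alpha\equiv_\pi\beta$ is the conjunction of $\alpha\succceq_\pi\beta$ and $\beta\succceq_\pi\alpha$. The only difference is that you spell out slightly more explicitly (via transitivity of the total pre-order) why $\equiv_\pi$-equivalent outcomes are simultaneously optimal, a step the paper leaves implicit.
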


\begin{proof}
%\commentphstart
Assume that $\pi'$ extends $\pi$.
Consider any $\alpha \in \OO_{\pi'}(\calA)$,
so that, for all $\beta\in\calA$,
$\alpha \succceq_{\pi'}\beta$.
By Lemma~\ref{le:extends-pi-basic},
if $\alpha \succceq_{\pi'} \beta$ then
$\alpha \succceq_{\pi} \beta$.
Therefore, for all $\beta\in\calA$,
$\alpha \succceq_{\pi}\beta$,
and so $\alpha \in \OO_\pi(\calA)$.

%\commentph
We have: $\pi\models\Delta^\calA_\alpha$ if and only if for all $\beta\in\calA$,
 $\alpha \succceq_\pi \beta$, which is if and only if
 $\alpha \in \OO_\pi(\calA)$.

%\commentph
 $\pi \models \Gamma\cup\Delta^\calA_\alpha$
 if and only if $\pi\models\Gamma$
 and $\alpha \in \OO_\pi(\calA)$,
 which is if and only if
        $\pi\in\Opt_\Gamma^\calA(\alpha)$.

%\commentph
 $\alpha\in\PO_\Gamma(\calA)$
 if and only if
 there exists some $\pi$ with $\pi\models\Gamma$
 and $\alpha \in \OO_\pi(\calA)$.
 By the second bullet point,
 this holds if and only if there exists $\pi$ with
 $\pi \models \Gamma\cup\Delta^\calA_\alpha$,
 i.e.,  $\Gamma\cup\Delta^\calA_\alpha$ is consistent.
 This is also equivalent to
 $\Opt_\Gamma^\calA(\alpha)$ being non-empty by the third bullet point.

%\commentph
First suppose that
$\Gamma\cup\Delta^\calA_\alpha \models\alpha\equiv\beta$,
and consider any $\pi\in\Opt_\Gamma^\calA(\alpha)$.
Then, $\pi \models \Gamma\cup\Delta^\calA_\alpha$,
and thus, $\pi\models\alpha\equiv\beta$,
and so, $\alpha \equiv_\pi \beta$.
This implies that
$\beta \in \OO_\pi(\calA)$ and hence, $\pi \in \Opt_\Gamma^\calA(\beta)$.

%\commentphend
Conversely, suppose that
$\Opt_\Gamma^\calA(\beta) \supseteq \Opt_\Gamma^\calA(\alpha)$,
and consider any $\pi$ such that
$\pi \models \Gamma\cup\Delta^\calA_\alpha$;
we then have $\alpha \succceq_\pi \beta$.
Then, by the third bullet point,
$\pi\in \Opt_\Gamma^\calA(\alpha)$,
and so, $\pi\in \Opt_\Gamma^\calA(\beta)$,
which implies that
$\pi \models \Gamma\cup\Delta^\calA_\beta$.
This entails that $\beta \succceq_\pi \alpha$,
and thus  $\alpha \equiv_\pi \beta$,
i.e., $\pi \models \alpha\equiv\beta$.
We have shown that
$\Gamma\cup\Delta^\calA_\alpha \models\alpha\equiv\beta$.
\end{proof}

\noindent
Without making assumptions about $\Gamma$ we have the following properties, which follow from basic arguments,
that apply in a very general context \cite{WilsonOMahonyAICS2011} (for proofs see also \cite{OMahonyThesis13}).

\begin{proposition}
\label{pr:MODS-results}
Consider  any $\calA\subseteq\und{V}$ and $\Gamma\subseteq\calL$.
Then, the following all hold.
(i) $\NO_\Gamma(\calA) \cup \PSO_\Gamma(\calA)
\subseteq \MPO_\Gamma(\calA) \cap \EXT_\Gamma(\calA)$;
(ii) $\EXT_\Gamma(\calA) \subseteq \CSD_\Gamma(\calA) \cap \PO_\Gamma(\calA)$;
(iii) $\MPO_\Gamma(\calA) \subseteq \PO_\Gamma(\calA)$;
(iv) $\MPO_\Gamma(\calA) \cap \EXT_\Gamma(\calA)$ is always non-empty.
(v) If $\NO_\Gamma(\calA)$ is non-empty then
$\NO_\Gamma(\calA) = \MPO_\Gamma(\calA) = \EXT_\Gamma(\calA) = \CSD_\Gamma(\calA)$.
\end{proposition}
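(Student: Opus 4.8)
The plan is to prove each of the five parts by unfolding the definitions and converting statements about optimality into statements about consistency and entailment, using Lemma~\ref{le:OO-PO-basic} as the main bridge. The two facts I expect to lean on most are its last two bullets: $\alpha\in\PO_\Gamma(\calA)$ iff $\Opt_\Gamma^\calA(\alpha)\neq\emp$, and $\Opt_\Gamma^\calA(\beta)\supseteq\Opt_\Gamma^\calA(\alpha)$ iff $\Gamma\cup\Delta^\calA_\alpha\modelslex\alpha\equiv\beta$. I would first note the standing assumption that $\Gamma$ is consistent and $\calA\neq\emp$: this is the only regime in which (iii) and (iv) can hold, since for inconsistent $\Gamma$ every $\Opt_\Gamma^\calA(\alpha)$ is empty, making $\MPO_\Gamma(\calA)=\calA$ while $\PO_\Gamma(\calA)=\emp$.

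For (iii), if $\alpha\in\MPO_\Gamma(\calA)$ then $\Opt_\Gamma^\calA(\alpha)$ is inclusion-maximal; since $\Gamma$ is consistent there is some $\pi\modelslex\Gamma$, and because $\succeq_\pi$ is a total pre-order on the finite set $\calA$ we have $\OO_\pi(\calA)\neq\emp$, so some alternative has a non-empty $\Opt$-set. An inclusion-maximal $\Opt$-set therefore cannot be empty, giving $\alpha\in\PO_\Gamma(\calA)$. For the $\MPO$ half of (i), I would argue: if $\alpha\in\NO_\Gamma(\calA)$ then $\Opt_\Gamma^\calA(\alpha)=\set{\pi\st\pi\modelslex\Gamma}$ is the top element of the inclusion order and is trivially maximal; and if $\alpha\in\PSO_\Gamma(\calA)$ with witnessing model $\pi$, then any $\beta$ with $\Opt_\Gamma^\calA(\beta)\supsetneq\Opt_\Gamma^\calA(\alpha)$ satisfies $\pi\in\Opt_\Gamma^\calA(\beta)$, hence $\beta\in\OO_\pi(\calA)$, hence $\Gamma\modelslex\alpha\equiv\beta$, which forces $\Opt_\Gamma^\calA(\alpha)=\Opt_\Gamma^\calA(\beta)$ and contradicts strict containment. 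The remaining inclusions $\NO_\Gamma(\calA)\subseteq\EXT_\Gamma(\calA)$, $\PSO_\Gamma(\calA)\subseteq\EXT_\Gamma(\calA)$, and (ii) $\EXT_\Gamma(\calA)\subseteq\CSD_\Gamma(\calA)\cap\PO_\Gamma(\calA)$ I would check directly from the definition of $\EXT_\Gamma(\calA)$; these are the definition-chasing steps that pin $\EXT$ between $\PSO$ and $\CSD\cap\PO$.

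For (v), assume $\NO_\Gamma(\calA)\neq\emp$ and fix $\alpha_0$ in it, so $\Opt_\Gamma^\calA(\alpha_0)$ is the set of all models of $\Gamma$. For any $\beta\in\MPO_\Gamma(\calA)$ we then have $\Opt_\Gamma^\calA(\beta)\subseteq\Opt_\Gamma^\calA(\alpha_0)$, and maximality forces equality, i.e.\ $\beta\in\NO_\Gamma(\calA)$; combined with (i) this gives $\NO_\Gamma(\calA)=\MPO_\Gamma(\calA)$. For the $\CSD$ equality: $\alpha_0\in\NO_\Gamma(\calA)$ gives $\alpha_0\succeq_\Gamma\gamma$ for every $\gamma$, so no $\gamma$ can strictly dominate $\alpha_0$ and $\alpha_0\in\CSD_\Gamma(\calA)$; conversely, for $\gamma\in\CSD_\Gamma(\calA)$ the relation $\alpha_0\succeq_\Gamma\gamma$ cannot be strict, so $\alpha_0\equiv_\Gamma\gamma$, whence $\gamma$ is optimal in every model and $\gamma\in\NO_\Gamma(\calA)$. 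This yields $\NO_\Gamma(\calA)=\CSD_\Gamma(\calA)$, and then the sandwich $\NO_\Gamma(\calA)\subseteq\EXT_\Gamma(\calA)\subseteq\CSD_\Gamma(\calA)=\NO_\Gamma(\calA)$ coming from (i) and (ii) collapses $\EXT$ as well, so $\EXT$ requires no separate argument here.

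The main obstacle is (iv), that $\MPO_\Gamma(\calA)\cap\EXT_\Gamma(\calA)$ is always non-empty: this is the one part needing a genuine existence argument rather than definition-chasing. The plan is to take a maximal model $\pi$ of $\Gamma$ (which exists by consistency), use finiteness of $\calA$ to select an alternative whose $\Opt$-set is inclusion-maximal, placing it in $\MPO_\Gamma(\calA)$, and then argue that a suitably chosen such witness simultaneously meets the defining condition of $\EXT_\Gamma(\calA)$. Securing both memberships from a single well-chosen witness, rather than two unrelated ones, is the delicate step, and it is precisely where the exact definition of $\EXT_\Gamma(\calA)$ must be invoked.
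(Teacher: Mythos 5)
First, a point of reference: the paper does not actually prove Proposition~\ref{pr:MODS-results} itself --- it defers to \cite{WilsonOMahonyAICS2011} and \cite{OMahonyThesis13} --- so your attempt can only be judged on its own correctness. Your observation that the statement tacitly needs $\Gamma$ consistent and $\calA\neq\emp$ is right, and your treatment of (iii), (v) and the $\MPO$ half of (i) via the $\Opt_\Gamma^\calA$ picture of Lemma~\ref{le:OO-PO-basic} is sound. One inclusion you dismiss as ``definition-chasing'' is not: $\NO_\Gamma(\calA)\subseteq\EXT_\Gamma(\calA)$. A single model $\pi\models\Gamma$ gives $\alpha\in\OO_\pi(\calA)$, but $\OO_\pi(\calA)$ may contain elements not $\Gamma$-equivalent to $\alpha$, so no length-one sequence need witness membership in $\EXT_\Gamma(\calA)$. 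You need the finite refinement argument: while some surviving $\beta$ has $\Gamma\not\models\alpha\equiv\beta$, pick $\pi'\models\Gamma$ with $\alpha\not\equiv_{\pi'}\beta$, note $\alpha\succ_{\pi'}\beta$ because $\alpha$ is necessarily optimal, and apply $\OO_{\pi'}$; $\alpha$ survives every step, the set strictly shrinks, and termination yields the required sequence. This is recoverable but is a genuine (if short) argument.

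The real gap is (iv), and the plan you sketch for it would fail as stated. You propose to pick an alternative whose $\Opt$-set is inclusion-maximal (hence in $\MPO_\Gamma(\calA)$) and then argue that a ``suitably chosen'' such witness also lies in $\EXT_\Gamma(\calA)$; but membership in $\MPO_\Gamma(\calA)$ gives no purchase on membership in $\EXT_\Gamma(\calA)$. Concretely, take alternatives $a,b,x$ and let $\Gamma$ have exactly two models $\pi_1,\pi_2$ with $a\equiv_{\pi_1}b\succ_{\pi_1}x$ and $x\succ_{\pi_2}a\succ_{\pi_2}b$. Then $\Opt_\Gamma^\calA(a)=\Opt_\Gamma^\calA(b)=\set{\pi_1}$ and $\Opt_\Gamma^\calA(x)=\set{\pi_2}$ are all inclusion-maximal, so $\MPO_\Gamma(\calA)=\set{a,b,x}$; yet the only sets reachable by iterated maximisation are $\set{a,b}$, $\set{a}$ and $\set{x}$, so $b\notin\EXT_\Gamma(\calA)$. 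Here $\MPO_\Gamma(\calA)\cap\EXT_\Gamma(\calA)=\set{a,x}$ is non-empty, as (iv) promises, but an arbitrarily chosen inclusion-maximal witness ($b$) fails. So the ``delicate step'' you flag is precisely the missing idea: the two memberships cannot be secured independently, and one must show that the iterated-maximisation process can be steered (e.g.\ by controlling which models are used to refine, or by arguing about the $\Opt$-set of the terminal equivalence class) so that its output retains a maximal $\Opt$-set. As written, (iv) is asserted rather than proved, and since it is the one part of the proposition requiring a real existence argument, the proposal is incomplete exactly where it matters.
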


We can visualise these relations in the following diagram, where $A \rightarrow B$ represents the relation $A\subseteq B$.
\tikzstyle{line} = [draw, -latex']
    \resizebox{\columnwidth}{!}{
\begin{tikzpicture}[node distance = 2cm, auto]
    % Place nodes
     \node (empty) {$\emptyset$};
     \node [right = 1cm of empty] (NO) {$\NO_\Gamma(\calA)$};
     \node [below right = 1.4cm of empty] (PSO) {$\PSO_\Gamma(\calA)$};
     \node [right = 1cm of NO] (EXT) {$\EXT_\Gamma(\calA)$};
     \node [right = 1cm of PSO] (MPO) {$\MPO_\Gamma(\calA)$};
     \node [right = 1cm of EXT] (CSD) {$\CSD_\Gamma(\calA)$};
     \node [right = 1cm of MPO] (PO) {$\PO_\Gamma(\calA)$};
     \node [right = 1cm of CSD] (A) {$\calA$};
     
     \node [above = -.3cm of EXT] (eqToNO) {};
      \node [above = 0.8cm of EXT] {Equal to $\NO$, if $\NO \neq \emptyset$};
     \node [below left= -.1cm of CSD] (nonempt) {};
     \node [above = .2cm of CSD] {Always non-empty};
     
\path [line] (empty) -- (NO);
\path [line] (empty) -- (PSO);
\path [line] (NO) -- (EXT);
\path [line] (NO) -- (MPO);
\path [line] (PSO) -- (EXT);
\path [line] (PSO) -- (MPO);
\path [line] (EXT) -- (CSD);
\path [line] (EXT) -- (PO);
\path [line] (MPO) -- (PO);
\path [line] (CSD) -- (A);
\path [line] (PO) -- (A);

\draw[green] (eqToNO) ellipse (3.5cm and 1.8cm);
\draw[red, dashed] (nonempt) ellipse (3.4cm and 1.6cm);
\end{tikzpicture}
}

The following lemmas and propositions will extend these results by relations of $\POM_\Gamma(\calA)$ and the case where $\Gamma$ is compositional.

%\commentph
\begin{lemma}
\label{le:POM-equals-PSO}
For any $\calA\subseteq\und{V}$ and $\Gamma\subseteq\calL$ we have $\PSO_\Gamma(\calA) \subseteq \POM_\Gamma(\calA)$.
If $\Gamma$ is compositional then
$\PSO_\Gamma(\calA) = \POM_\Gamma(\calA)$.
\end{lemma}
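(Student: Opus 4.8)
The plan is to prove the two inclusions separately, with only the second requiring compositionality.

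For the first inclusion, $\PSO_\Gamma(\calA)\subseteq\POM_\Gamma(\calA)$, I would start from $\alpha\in\PSO_\Gamma(\calA)$, witnessed by a model $\pi\models\Gamma$ with $\alpha\in\OO_\pi(\calA)$ and $\Gamma\models\alpha\equiv\beta$ for every $\beta\in\OO_\pi(\calA)$. Since $\pi\models\Gamma$, the set $\Gamma$ is consistent, so I can extend $\pi$ to a maximal model $\pi'$ of $\Gamma$ (possibly $\pi'=\pi$), using the fact noted just after the definition of maximal model. The goal is then to show $\alpha\in\OO_{\pi'}(\calA)$, i.e.\ $\alpha\succceq_{\pi'}\gamma$ for every $\gamma\in\calA$. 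I would split on whether $\gamma\in\OO_\pi(\calA)$: if so, the strictness hypothesis gives $\Gamma\models\alpha\equiv\gamma$, hence $\alpha\equiv_{\pi'}\gamma$ (as $\pi'\models\Gamma$), so $\alpha\succceq_{\pi'}\gamma$; if not, then since $\alpha$ is a maximum of the total preorder $\succceq_\pi$ while $\gamma$ is not, we must have $\alpha\succ_\pi\gamma$ strictly (were $\gamma\equiv_\pi\alpha$, transitivity would make $\gamma$ a maximum too), and Lemma~\ref{le:extends-pi-basic} lifts this to $\alpha\succ_{\pi'}\gamma$. Either way $\alpha\succceq_{\pi'}\gamma$, so $\alpha\in\OO_{\pi'}(\calA)$ with $\pi'$ maximal, giving $\alpha\in\POM_\Gamma(\calA)$.

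For the reverse inclusion under compositional $\Gamma$, I would take $\alpha\in\POM_\Gamma(\calA)$, witnessed by a maximal model $\pi$ of $\Gamma$ with $\alpha\in\OO_\pi(\calA)$; in particular $\Gamma$ is consistent. I claim $\pi$ itself witnesses $\alpha\in\PSO_\Gamma(\calA)$. Indeed, for any $\beta\in\OO_\pi(\calA)$, both $\alpha$ and $\beta$ are maxima of the total preorder $\succceq_\pi$, so $\alpha\equiv_\pi\beta$. Compositionality then lets me promote this local equivalence to a global one: by Proposition~\ref{pr:compos-max-models}, since $\pi$ is a maximal model of the consistent compositional set $\Gamma$, $\alpha\equiv_\pi\beta$ implies $\Gamma\models\alpha\equiv\beta$. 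Hence every $\beta\in\OO_\pi(\calA)$ is $\Gamma$-equivalent to $\alpha$, which is exactly the condition for $\alpha\in\PSO_\Gamma(\calA)$.

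The two routine observations I lean on throughout --- that co-optimal outcomes in a single lex model are $\equiv_\pi$-equivalent, and that a non-optimal outcome sits strictly below an optimal one --- are immediate from $\succceq_\pi$ being a total preorder, so there is no real difficulty there. The one place where genuine care is needed is the first inclusion: extending the witness $\pi$ to a maximal model $\pi'$ shrinks the optimal set (Lemma~\ref{le:OO-PO-basic}), so $\alpha\in\OO_\pi(\calA)$ does not by itself survive the extension. The crux is that the strictness condition defining $\PSO$ is exactly what guarantees survival --- it forces every competitor to be either $\Gamma$-equivalent to $\alpha$ (hence still tied with $\alpha$ in $\pi'$) or strictly dominated by $\alpha$ already in $\pi$ (hence, by Lemma~\ref{le:extends-pi-basic}, still strictly dominated in $\pi'$). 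Making this dichotomy precise, and noting that compositionality is invoked only in the converse (through Proposition~\ref{pr:compos-max-models}, to turn $\equiv_\pi$ into $\Gamma\models\alpha\equiv\beta$), is the heart of the argument.
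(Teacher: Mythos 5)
Your proof is correct and follows essentially the same route as the paper's: extend the $\PSO$-witness $\pi$ to a maximal model $\pi'$ and use the $\Gamma$-equivalence condition to show $\alpha$ stays optimal, then for the converse apply Proposition~\ref{pr:compos-max-models} to promote $\alpha\equiv_\pi\beta$ in a maximal model to $\Gamma\models\alpha\equiv\beta$. The only cosmetic difference is in the first inclusion, where the paper picks some $\beta\in\OO_{\pi'}(\calA)\subseteq\OO_\pi(\calA)$ and transfers optimality to $\alpha$ via equivalence, while you verify $\alpha\succceq_{\pi'}\gamma$ directly by a case split on whether $\gamma\in\OO_\pi(\calA)$; both rest on the same lemmas.
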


\begin{proof}
%\commentphstart
Suppose that $\alpha\in\PSO_\Gamma(\calA)$,
so there exists $\pi\in\Glex$ with
$\pi\models\Gamma$, and $\OO_\pi(\calA)\ni\alpha$,
and $\Gamma\models\alpha\equiv\beta$ for all $\beta\in\OO_\pi(\calA)$.
Let $\pi'$ be any maximal model of $\Gamma$ that extends $\pi$.
Choose some $\beta$ that is optimal in $\pi'$,
i.e., $\beta\in\OO_{\pi'}(\calA)$.
Then, using Lemma~\ref{le:OO-PO-basic}, $\beta\in\OO_{\pi}(\calA)$,
and thus, $\Gamma\models\alpha\equiv\beta$, so $\pi'\models\alpha\equiv\beta$,
which implies that $\alpha\in\OO_{\pi'}(\calA)$,
and thus, $\alpha\in\POM_\Gamma(\calA)$.

%\commentphend
Assume now that $\Gamma$ is compositional.
Let $\alpha\in\POM_\Gamma(\calA)$,
so there exists $\pi\in\Glex$ with $\pi\modelsmax\Gamma$
and $\alpha\in\OO_\pi(\calA)$.
Proposition~\ref{pr:compos-max-models} implies that
for all $\beta\in\OO_\pi(\calA)$ we have $\Gamma\models\alpha\equiv\beta$,
and thus, $\alpha\in\PSO_\Gamma(\calA)$.
\end{proof}

%\commentph
\begin{lemma}
\label{le:POM-supseteq-MPO}
For any $\calA\subseteq\und{V}$ and compositional $\Gamma\subseteq\calL$,
%if $\Gamma$ is compositional then
we have
$\MPO_\Gamma(\calA) \subseteq \POM_\Gamma(\calA)$.
\end{lemma}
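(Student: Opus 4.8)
The plan is to work with the enriched set $\Gamma' := \Gamma\cup\Delta^\calA_\alpha$ and exploit the fact that it is again compositional. First I would observe that $\Gamma'$ is compositional: $\Gamma$ is compositional by hypothesis, and each member $\alpha\ge\beta$ of $\Delta^\calA_\alpha$ is strongly compositional, hence compositional, by Proposition~\ref{pr:strong-compos-alpha-beta-calR}; since a set is compositional exactly when each of its elements is, $\Gamma'$ is compositional. Next, because $\alpha\in\MPO_\Gamma(\calA)\subseteq\PO_\Gamma(\calA)$ by Proposition~\ref{pr:MODS-results}(iii), Lemma~\ref{le:OO-PO-basic} tells us that $\Gamma'$ is consistent, so it admits a maximal model $\pi^*$. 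By construction $\pi^*\models\Gamma$ and $\pi^*\models\Delta^\calA_\alpha$, the latter meaning $\alpha\in\OO_{\pi^*}(\calA)$.

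The crux is the claim that $\pi^*$ is in fact a maximal model of $\Gamma$; granting it, $\pi^*$ is a maximal model of $\Gamma$ in which $\alpha$ is optimal, so $\alpha\in\POM_\Gamma(\calA)$ and the inclusion follows. I would prove the claim by contradiction. Suppose $\pi^*$ is not a maximal model of $\Gamma$; then there is a model $\pi''$ of $\Gamma$ strictly extending $\pi^*$. Since $\pi^*$ is maximal for $\Gamma'$, we cannot have $\pi''\models\Gamma'$, so $\pi''\not\models\Delta^\calA_\alpha$, i.e.\ $\alpha\notin\OO_{\pi''}(\calA)$. Choosing any $\beta\in\OO_{\pi''}(\calA)$ (nonempty, as $\calA$ is finite and $\succceq_{\pi''}$ is a total pre-order), the optimality of $\beta$ together with the non-optimality of $\alpha$ forces $\beta\succ_{\pi''}\alpha$.

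The key comparison is between the two models. In $\pi^*$ we have $\alpha\succceq_{\pi^*}\beta$ (as $\alpha$ is optimal there), while in the extension $\pi''\sqsupset\pi^*$ we have $\beta\succ_{\pi''}\alpha$. Lemma~\ref{le:extends-pi-basic} gives the implication $\alpha\succ_{\pi^*}\beta\Rightarrow\alpha\succ_{\pi''}\beta$; since $\alpha\not\succ_{\pi''}\beta$, we deduce $\alpha\not\succ_{\pi^*}\beta$, and together with $\alpha\succceq_{\pi^*}\beta$ this yields $\alpha\equiv_{\pi^*}\beta$. Now I would invoke Proposition~\ref{pr:compos-max-models} for the consistent compositional $\Gamma'$ with maximal model $\pi^*$: from $\alpha\equiv_{\pi^*}\beta$ it gives $\Gamma'\models\alpha\equiv\beta$, which by the last bullet of Lemma~\ref{le:OO-PO-basic} is equivalent to $\Opt_\Gamma^\calA(\beta)\supseteq\Opt_\Gamma^\calA(\alpha)$. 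But $\pi''\in\Opt_\Gamma^\calA(\beta)$ while $\pi''\notin\Opt_\Gamma^\calA(\alpha)$, so the inclusion is strict, contradicting $\alpha\in\MPO_\Gamma(\calA)$. This establishes the claim.

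The main obstacle is exactly this claim that a maximal model of the enriched set $\Gamma'$ is already maximal for $\Gamma$; the whole argument hinges on converting a hypothetical proper extension of $\pi^*$ into an alternative $\beta$ whose optimality region strictly contains that of $\alpha$, which is precisely what $\MPO$-maximality forbids. The two facts doing the real work are the monotonicity of $\succ$ under extension (Lemma~\ref{le:extends-pi-basic}) and the ``equivalence propagates to all models'' half of Proposition~\ref{pr:compos-max-models}; the compositionality of $\Gamma'$ is what makes the latter applicable.
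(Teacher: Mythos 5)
Your proof is correct and follows essentially the same route as the paper's: both hinge on taking a maximal model of $\Gamma\cup\Delta^\calA_\alpha$, extending it to a model of $\Gamma$ in which $\alpha$ is no longer optimal, and using Proposition~\ref{pr:compos-max-models} together with the last bullet of Lemma~\ref{le:OO-PO-basic} to exhibit a $\beta$ with $\Opt_\Gamma^\calA(\beta)\supsetneq\Opt_\Gamma^\calA(\alpha)$, contradicting $\alpha\in\MPO_\Gamma(\calA)$. The only differences are cosmetic: the paper argues via the set identity $(\PO_\Gamma(\calA)-\POM_\Gamma(\calA))\cap\MPO_\Gamma(\calA)=\emp$ and passes to a maximal model of $\Gamma$, whereas you argue directly that the maximal model of the enriched set is already maximal for $\Gamma$ (and you make explicit the compositionality of $\Gamma\cup\Delta^\calA_\alpha$, which the paper uses tacitly).
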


\begin{proof}
%\commentphstartend
We will prove that
$(\PO_\Gamma(\calA) - \POM_\Gamma(\calA)) \cap \MPO_\Gamma(\calA) = \emp$.
Since, $\MPO_\Gamma(\calA) \subseteq \PO_\Gamma(\calA)$, this implies
that $\MPO_\Gamma(\calA) \subseteq \POM_\Gamma(\calA)$.
Let $\alpha\in \PO_\Gamma(\calA) - \POM_\Gamma(\calA)$.
By Lemma~\ref{le:OO-PO-basic},
$\Gamma\cup\Delta^\calA_\alpha$ is consistent;
we choose some maximal model $\pi$ of $\Gamma\cup\Delta^\calA_\alpha$.
In particular,  $\OO_\pi(\calA) \ni\alpha$.
Choose some maximal model $\pi'$ of $\Gamma$ extending $\pi$.
Then, $\alpha\notin \OO_{\pi'}(\calA)$, since $\alpha\notin\POM_\Gamma(\calA)$.
Choose some $\beta\in\OO_{\pi'}(\calA)$, and thus, $\beta\in\OO_{\pi}(\calA)$,
by Lemma~\ref{le:OO-PO-basic}. This implies $\alpha\equiv_\pi\beta$,
and thus, by Proposition~\ref{pr:compos-max-models},
$\Gamma\cup\Delta^\calA_\alpha \models\alpha\equiv\beta$.
This implies that $\Opt_\Gamma^\calA(\beta) \supseteq  \Opt_\Gamma^\calA(\alpha)$,
by Lemma~\ref{le:OO-PO-basic}.
Since $\pi' \in \Opt_\Gamma^\calA(\beta) - \Opt_\Gamma^\calA(\alpha)$,
we have that $\alpha\notin\MPO_\Gamma(\calA)$.
\end{proof}

%\commentph
A straight-forward argument implies that
$\PSO_\Gamma(\calA) \subseteq \MPO_\Gamma(\calA)$.
Lemmas~\ref{le:POM-equals-PSO} and~\ref{le:POM-supseteq-MPO} then imply the following.

%\commentph
\begin{proposition}
\label{pr:POM-equals-PSO-equals-MPO}
For any $\calA\subseteq\und{V}$ and $\Gamma\subseteq\calL$ we have $\PSO_\Gamma(\calA) \subseteq \POM_\Gamma(\calA)
\cap \MPO_\Gamma(\calA)$.
If $\Gamma$ is compositional then
$\PSO_\Gamma(\calA) = \POM_\Gamma(\calA) = \MPO_\Gamma(\calA)$.
\end{proposition}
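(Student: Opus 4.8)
The plan is to obtain this proposition purely by assembling the two preceding lemmas together with the single missing inclusion $\PSO_\Gamma(\calA) \subseteq \MPO_\Gamma(\calA)$, which holds for arbitrary $\Gamma$. So I would first dispatch that inclusion on its own, and then combine everything mechanically.

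To show $\PSO_\Gamma(\calA) \subseteq \MPO_\Gamma(\calA)$, take $\alpha\in\PSO_\Gamma(\calA)$ with a witnessing model $\pi\models\Gamma$ such that $\alpha\in\OO_\pi(\calA)$ and $\Gamma\models\alpha\equiv\beta$ for every $\beta\in\OO_\pi(\calA)$. I would argue by contradiction: suppose $\alpha\notin\MPO_\Gamma(\calA)$, so there is some $\gamma\in\calA$ with $\Opt_\Gamma^\calA(\gamma)$ a strict superset of $\Opt_\Gamma^\calA(\alpha)$. Directly from the definition of $\Opt_\Gamma^\calA$ we have $\pi\in\Opt_\Gamma^\calA(\alpha)$, hence $\pi\in\Opt_\Gamma^\calA(\gamma)$ and so $\gamma\in\OO_\pi(\calA)$; the $\PSO$ condition then forces $\Gamma\models\alpha\equiv\gamma$. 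Now pick any $\pi^*\in\Opt_\Gamma^\calA(\gamma)\setminus\Opt_\Gamma^\calA(\alpha)$, which is nonempty by strictness. Then $\pi^*\models\Gamma$ and $\gamma\in\OO_{\pi^*}(\calA)$; since $\Gamma\models\alpha\equiv\gamma$ we get $\alpha\equiv_{\pi^*}\gamma$, and transitivity of the total pre-order $\succceq_{\pi^*}$ upgrades $\gamma\in\OO_{\pi^*}(\calA)$ to $\alpha\in\OO_{\pi^*}(\calA)$, i.e.\ $\pi^*\in\Opt_\Gamma^\calA(\alpha)$, contradicting the choice of $\pi^*$.

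The remaining assembly is bookkeeping. For the first, unconditional half, the inclusion $\PSO_\Gamma(\calA)\subseteq\POM_\Gamma(\calA)$ is exactly the first part of Lemma~\ref{le:POM-equals-PSO}, and intersecting it with the inclusion just proved yields $\PSO_\Gamma(\calA)\subseteq\POM_\Gamma(\calA)\cap\MPO_\Gamma(\calA)$. For the compositional half, Lemma~\ref{le:POM-equals-PSO} gives $\PSO_\Gamma(\calA)=\POM_\Gamma(\calA)$ and Lemma~\ref{le:POM-supseteq-MPO} gives $\MPO_\Gamma(\calA)\subseteq\POM_\Gamma(\calA)$; chaining these with $\PSO_\Gamma(\calA)\subseteq\MPO_\Gamma(\calA)$ produces $\PSO_\Gamma(\calA)\subseteq\MPO_\Gamma(\calA)\subseteq\POM_\Gamma(\calA)=\PSO_\Gamma(\calA)$, which collapses to equality throughout.

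The one genuinely non-clerical step is the interchangeability used above: that necessary equivalence $\Gamma\models\alpha\equiv\gamma$ makes $\alpha$ and $\gamma$ optimal in exactly the same models of $\Gamma$. I expect this to be the main, though mild, obstacle, since it is where one must use that $\equiv_{\pi^*}$ sits inside a transitive total pre-order, so that an optimal element may be swapped for an equivalent one without affecting optimality. Everything else reduces to the two cited lemmas.
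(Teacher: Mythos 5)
Your proposal is correct and follows essentially the same route as the paper: the paper also obtains the result by combining Lemma~\ref{le:POM-equals-PSO} and Lemma~\ref{le:POM-supseteq-MPO} with the inclusion $\PSO_\Gamma(\calA) \subseteq \MPO_\Gamma(\calA)$, which it dismisses as ``a straight-forward argument'' (and which is also contained in Proposition~\ref{pr:MODS-results}(i)). Your explicit contradiction argument for that inclusion --- using $\Gamma\models\alpha\equiv\gamma$ and transitivity of the total pre-order $\succceq_{\pi^*}$ to transfer optimality from $\gamma$ to $\alpha$ --- is a correct filling-in of exactly that omitted step.
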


Let $\pi_1, \ldots, \pi_k$ be a finite sequence of models.
Define $\calA_{\pi_1}$ to be
$\OO_{\pi_1}(\calA)$.
For $i=1,\ldots, k$ we iteratively define $\calA_{\pi_1, \ldots, \pi_i}$
to be
$\OO_{\pi_i}(\calA_{\pi_1, \ldots, \pi_{i-1}})$.
%$\max_{\pi_i}(\calA_{\pi_1, \ldots, \pi_{i-1}})$.
We define the \emph{extreme elements} $\EXT_\Gamma(\calA)$ as follows.
$\alpha\in\EXT_\Gamma(\calA)$ if and only if there exists
a sequence $\pi_1, \ldots, \pi_k$ of models of $\Gamma$
such that $\calA_{\pi_1, \ldots, \pi_k}\ni\alpha$
and for all $\beta\in\calA_{\pi_1, \ldots, \pi_k}$, $\Gamma\models\alpha\equiv\beta$.
Therefore, $\alpha\in\EXT_\Gamma(\calA)$ if there is a sequence of lex models such that iteratively maximising with respect to each lex model in turn leads to a set containing $\alpha$ and only other alternatives that are $\Gamma$-equivalent to $\alpha$.

%%\commentph
%\begin{lemma}
%\label{le:EXT-supseteq-PSO}
%For any $\calA\subseteq\und{V}$ and  $\Gamma\subseteq\calL$,
%we have
%$\EXT_\Gamma(\calA) \supseteq \PSO_\Gamma(\calA)$.
%\end{lemma}
%
%\begin{proof}
%%\commentphstartend
%Suppose $\alpha\in\PSO_\Gamma(\calA)$,
%so there exists $\pi\in\Glex$ with
%$\pi\models\Gamma$, and $\OO_\pi(\calA)\ni\alpha$,
%and $\Gamma\models\alpha\equiv\beta$ for all $\beta\in\OO_\pi(\calA)$.
%Using the singleton sequence $\pi$, we have $\alpha\in\EXT_\Gamma(\calA)$.
%\end{proof}

%\commentph
\begin{lemma}
\label{le:EXT-circ-basic}
Let $\pi_1, \ldots, \pi_k$ be a finite sequence of models,
and let $\pi = \pi_1\circ\cdots\circ\pi_k$.
Then, $\calA_{\pi_1, \ldots, \pi_k} = \calA_\pi = \OO_{\pi}(\calA)$.
\end{lemma}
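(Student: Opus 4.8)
The plan is to reduce the statement to the two-model case and then induct on the length $k$ of the sequence. Note first that $\calA_\pi = \OO_\pi(\calA)$ holds by definition (treating $\pi$ as a one-element sequence), so the real content is the identity $\calA_{\pi_1,\ldots,\pi_k} = \OO_{\pi_1\circ\cdots\circ\pi_k}(\calA)$. The base case $k=1$ is immediate. For the inductive step I would set $\sigma = \pi_1\circ\cdots\circ\pi_{k-1}$, so that by associativity of $\circ$ (Lemma~\ref{le:circ-associative}) we have $\pi = \sigma\circ\pi_k$. The induction hypothesis gives $\calA_{\pi_1,\ldots,\pi_{k-1}} = \OO_\sigma(\calA)$, and the definition gives $\calA_{\pi_1,\ldots,\pi_k} = \OO_{\pi_k}(\calA_{\pi_1,\ldots,\pi_{k-1}}) = \OO_{\pi_k}(\OO_\sigma(\calA))$. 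So the whole lemma follows from the key claim: for any $\sigma,\tau\in\Glex$ and any $\calA\subseteq\und{V}$,
\[
\OO_{\sigma\circ\tau}(\calA) \;=\; \OO_\tau(\OO_\sigma(\calA)).
\]

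To prove the inclusion $\OO_{\sigma\circ\tau}(\calA)\subseteq\OO_\tau(\OO_\sigma(\calA))$, I would take $\alpha\in\OO_{\sigma\circ\tau}(\calA)$, so $\alpha\succceq_{\sigma\circ\tau}\beta$ for all $\beta\in\calA$. Since $\sigma\circ\tau$ extends or equals $\sigma$, Lemma~\ref{le:extends-pi-basic} yields $\alpha\succceq_\sigma\beta$ for all $\beta\in\calA$, hence $\alpha\in\OO_\sigma(\calA)$. For any $\beta\in\OO_\sigma(\calA)$ we have $\alpha\equiv_\sigma\beta$ (both lie in the top $\equiv_\sigma$-class), and then the equivalence clause of Lemma~\ref{le:composition-basic} gives $\alpha\succceq_{\sigma\circ\tau}\beta\iff\alpha\succceq_\tau\beta$; since the left side holds, $\alpha\succceq_\tau\beta$, so $\alpha\in\OO_\tau(\OO_\sigma(\calA))$.

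For the reverse inclusion I would take $\alpha\in\OO_\tau(\OO_\sigma(\calA))$ and show $\alpha\succceq_{\sigma\circ\tau}\gamma$ for every $\gamma\in\calA$, splitting on whether $\gamma$ is $\sigma$-optimal. If $\gamma\notin\OO_\sigma(\calA)$, then since $\alpha\in\OO_\sigma(\calA)$ we have $\alpha\succ_\sigma\gamma$ (we have $\alpha\succceq_\sigma\gamma$, and $\alpha\equiv_\sigma\gamma$ would force $\gamma$ into the top class), so Lemma~\ref{le:extends-pi-basic} gives $\alpha\succ_{\sigma\circ\tau}\gamma$ and in particular $\alpha\succceq_{\sigma\circ\tau}\gamma$. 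If $\gamma\in\OO_\sigma(\calA)$, then $\alpha\equiv_\sigma\gamma$ and $\alpha\succceq_\tau\gamma$ (by choice of $\alpha$), so the equivalence clause of Lemma~\ref{le:composition-basic} again converts $\alpha\succceq_\tau\gamma$ into $\alpha\succceq_{\sigma\circ\tau}\gamma$. Both cases give the desired domination, so $\alpha\in\OO_{\sigma\circ\tau}(\calA)$.

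The step I expect to be the main obstacle is the correct bookkeeping in the two-model claim: one must track precisely when elements are strictly dominated versus merely $\equiv_\sigma$-equivalent, because the composition $\sigma\circ\tau$ treats these two cases differently (strict $\sigma$-preferences survive unchanged, whereas $\tau$ acts as a tie-breaker only within each $\equiv_\sigma$-class). The case analysis above is exactly what isolates these two behaviours, and Lemma~\ref{le:composition-basic} is the workhorse that handles the tie-breaking cleanly; once the claim is in hand, the induction and the appeal to associativity are routine.
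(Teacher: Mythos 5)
Your proposal is correct and follows essentially the same route as the paper's proof: both reduce to the two-model identity $\OO_{\pi'}(\OO_\pi(\calA)) = \OO_{\pi\circ\pi'}(\calA)$, prove it via the same case split (strict $\sigma$-domination handled by Lemma~\ref{le:extends-pi-basic}, ties handled by the equivalence clause of Lemma~\ref{le:composition-basic}), and finish by the same induction on $k$. The only cosmetic difference is that you make the appeal to associativity explicit where the paper leaves it implicit.
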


\begin{proof}
%\commentphstart
We first show that for arbitrary $\pi,\pi'\in\Glex$,
$\OO_{\pi'}(\calA_\pi) = \calA_{\pi\circ\pi'}$,
i.e., $\OO_{\pi'}(\OO_\pi(\calA)) = \OO_{\pi\circ\pi'}(\calA)$.

%\commentph
Consider an element $\alpha\in \OO_{\pi'}(\OO_\pi(\calA))$;
we will show that $\alpha\succceq_{\pi\circ\pi'} \beta$ for every $\beta\in\calA$,
showing that $\alpha\in\OO_{\pi\circ\pi'}(\calA)$.
We have that $\alpha\in\OO_\pi(\calA)$, which implies
$\alpha\succceq_\pi \beta$.
If $\alpha\succ_\pi \beta$ then $\alpha\succ_{\pi\circ\pi'} \beta$,
by Lemma~\ref{le:extends-pi-basic}.
Otherwise,  we have  $\alpha\equiv_\pi \beta$, which implies that $\beta\in\OO_\pi(\calA)$,
and thus, $\alpha\succceq_{\pi'}\beta$.
 Lemma~\ref{le:composition-basic} implies that $\alpha\succceq_{\pi\circ\pi'}\beta$.

%\commentph
Conversely, assume that
$\alpha\in\OO_{\pi\circ\pi'}(\calA)$.
Consider any $\beta\in\OO_\pi(\calA)$.
We need to show that $\alpha\succceq_{\pi'}\beta$.
We have $\alpha\succceq_{\pi\circ\pi'}\beta$.
Lemma~\ref{le:extends-pi-basic} implies that $\alpha\succceq_{\pi}\beta$,
which implies that $\alpha\in\OO_\pi(\calA)$, and also $\alpha\equiv_\pi \beta$.
 Since $\alpha\succceq_{\pi\circ\pi'}\beta$,
 Lemma~\ref{le:composition-basic} implies  $\alpha\succceq_{\pi'}\beta$, as required.

%\commentphend
We now prove the result by induction.
It is trivial for $k=1$.
%We have just proved that it holds for $k=2$.
Now, $\calA_{\pi_1, \ldots, \pi_k} = \OO_{\pi_k}(\calA_{\pi_1, \ldots, \pi_{k-1}})$,
which by the inductive hypothesis equals
$\OO_{\pi_k}(\calA_{\pi_1\circ\cdots\circ\pi_{k-1}})$,
which equals $\calA_{\pi_1\circ\cdots\circ\pi_{k}}$, by the argument above.
\end{proof}

%\commentphnot
%\begin{proposition}
%\label{pr:POM-equals-PSO-equals-MPO-equals-EXT}
%For any $\calA\subseteq\und{V}$ and $\Gamma\subseteq\calL$ we have $\PSO_\Gamma(\calA) \subseteq \POM_\Gamma(\calA)
%\cap \MPO_\Gamma(\calA) \cap \EXT_\Gamma(\calA)$.
%If $\Gamma$ is compositional then
%$\PSO_\Gamma(\calA) = \POM_\Gamma(\calA) = \MPO_\Gamma(\calA)
%= \EXT_\Gamma(\calA)$.
%\end{proposition}

%\commentph
The optimality class $\EXT_\Gamma(\calA)$ turns out also to be equivalent to $\PSO_\Gamma(\calA)$ when
$\Gamma$ is compositional.

%\commentph
\begin{proposition}
\label{pr:EXT-equals-PSO-compos}
Consider  any $\calA\subseteq\und{V}$ and compositional $\Gamma\subseteq\calL$.
Then $\EXT_\Gamma(\calA) = \PSO_\Gamma(\calA)$.
\end{proposition}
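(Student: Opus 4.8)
**The plan is to prove both inclusions, relying on the already-established equivalence $\PSO_\Gamma(\calA) = \POM_\Gamma(\calA)$ for compositional $\Gamma$ (Lemma~\ref{le:POM-equals-PSO}), so that it suffices to relate $\EXT_\Gamma(\calA)$ to $\POM_\Gamma(\calA)$.**

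First I would show $\PSO_\Gamma(\calA) \subseteq \EXT_\Gamma(\calA)$. Take $\alpha\in\PSO_\Gamma(\calA)$, so there is a model $\pi\models\Gamma$ with $\alpha\in\OO_\pi(\calA)$ and $\Gamma\models\alpha\equiv\beta$ for all $\beta\in\OO_\pi(\calA)$. The single-element sequence $\pi_1=\pi$ witnesses $\alpha\in\EXT_\Gamma(\calA)$ directly: indeed $\calA_{\pi_1}=\OO_\pi(\calA)\ni\alpha$, and by hypothesis every element of this set is $\Gamma$-equivalent to $\alpha$. This direction requires no compositionality and matches the general inclusion $\PSO_\Gamma(\calA)\subseteq\EXT_\Gamma(\calA)$ visible in the earlier diagram.

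For the converse, $\EXT_\Gamma(\calA)\subseteq\PSO_\Gamma(\calA)$, I would exploit Lemma~\ref{le:EXT-circ-basic}. Suppose $\alpha\in\EXT_\Gamma(\calA)$, witnessed by a sequence $\pi_1,\ldots,\pi_k$ of models of $\Gamma$ with $\calA_{\pi_1,\ldots,\pi_k}\ni\alpha$ and all elements of $\calA_{\pi_1,\ldots,\pi_k}$ being $\Gamma$-equivalent to $\alpha$. Set $\pi=\pi_1\circ\cdots\circ\pi_k$. By Lemma~\ref{le:EXT-circ-basic}, $\calA_{\pi_1,\ldots,\pi_k}=\OO_\pi(\calA)$, so $\alpha\in\OO_\pi(\calA)$ and $\Gamma\models\alpha\equiv\beta$ for every $\beta\in\OO_\pi(\calA)$. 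The remaining point is that $\pi$ is itself a model of $\Gamma$: since $\Gamma$ is compositional and each $\pi_i\models\Gamma$, repeated application of Lemma~\ref{le:strong-compos-basic-Gamma} (the second bullet, closure of $\models\Gamma$ under $\circ$) gives $\pi\models\Gamma$. These three facts are exactly the defining conditions of $\PSO_\Gamma(\calA)$, so $\alpha\in\PSO_\Gamma(\calA)$.

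I expect the main subtlety to lie in the converse direction, specifically in correctly chaining compositionality across a $k$-fold composition and in invoking Lemma~\ref{le:EXT-circ-basic} to collapse the iterated optimisation into a single lex model $\pi$; once that reduction is in place, the equivalence is immediate. An alternative, equally clean route is to combine the already-proven $\PSO=\POM$ (Lemma~\ref{le:POM-equals-PSO}) with a lemma that $\EXT_\Gamma(\calA)=\POM_\Gamma(\calA)$ under compositionality—the latter again following from Lemma~\ref{le:EXT-circ-basic} together with the fact that a composition of maximal-model-witnessing sequences yields a model whose extension to a maximal model preserves the optimal set—but the direct argument above is the most economical.
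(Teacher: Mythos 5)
Your proof is correct and follows essentially the same route as the paper's: the key direction $\EXT_\Gamma(\calA)\subseteq\PSO_\Gamma(\calA)$ is handled identically, via Lemma~\ref{le:EXT-circ-basic} to collapse the witnessing sequence into $\pi=\pi_1\circ\cdots\circ\pi_k$ and compositionality of $\Gamma$ to get $\pi\models\Gamma$. The only cosmetic difference is that for the easy inclusion the paper simply cites Proposition~\ref{pr:MODS-results}, whereas you re-derive it directly with the one-element sequence; both are fine.
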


\begin{proof}
%\commentphstartend
Proposition~\ref{pr:MODS-results} implies $\EXT_\Gamma(\calA) \supseteq \PSO_\Gamma(\calA)$.
To prove the converse, suppose that $\alpha\in\EXT_\Gamma(\calA)$.
Then there exists a sequence $\pi_1, \ldots, \pi_k$ of models of $\Gamma$
such that $\calA_{\pi_1, \ldots, \pi_k}\ni\alpha$
and for all $\beta\in\calA_{\pi_1, \ldots, \pi_k}$, $\Gamma\models\alpha\equiv\beta$.
By Lemma~\ref{le:EXT-circ-basic},
$\alpha\in\OO_{\pi}(\calA)$, where
 $\pi = \pi_1\circ\cdots\circ\pi_k$,
 and $\Gamma\models\alpha\equiv\beta$ for all $\beta\in\OO_{\pi}(\calA)$.
Since $\Gamma$ is compositional, $\pi\models\Gamma$,
and thus, $\alpha\in\PSO_\Gamma(\calA)$.
\end{proof}

%\commentph
\noindent
Propositions~\ref{pr:MODS-results},~\ref{pr:EXT-equals-PSO-compos} and~\ref{pr:POM-equals-PSO-equals-MPO} imply the following result, showing that there are substantial simplifications of the optimality classes when $\Gamma$ is compositional.

\begin{theorem}
\label{th:optimality-classes-compositional}
Consider  any $\calA\subseteq\und{V}$ and compositional $\Gamma\subseteq\calL$.
Then
$\NO_\Gamma(\calA) \subseteq \PSO_\Gamma(\calA) = \EXT_\Gamma(\calA) = \MPO_\Gamma(\calA)
= \POM_\Gamma(\calA) \subseteq \CSD_\Gamma(\calA)
\cap \PO_\Gamma(\calA)$.
\end{theorem}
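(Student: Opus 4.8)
The plan is to assemble the claimed chain purely from the three cited propositions, since every piece is already available and no fresh model-theoretic argument is needed. I would first pin down the block of equalities in the middle of the display. Proposition~\ref{pr:EXT-equals-PSO-compos} gives directly $\EXT_\Gamma(\calA) = \PSO_\Gamma(\calA)$ for compositional $\Gamma$, and the compositional half of Proposition~\ref{pr:POM-equals-PSO-equals-MPO} gives $\PSO_\Gamma(\calA) = \POM_\Gamma(\calA) = \MPO_\Gamma(\calA)$. Chaining these two statements through the common set $\PSO_\Gamma(\calA)$ yields $\PSO_\Gamma(\calA) = \EXT_\Gamma(\calA) = \MPO_\Gamma(\calA) = \POM_\Gamma(\calA)$, which is the interior of the asserted inclusion chain.

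Next I would handle the two outer inclusions. For the left-hand inclusion, Proposition~\ref{pr:MODS-results}(i) gives $\NO_\Gamma(\calA) \subseteq \MPO_\Gamma(\calA) \cap \EXT_\Gamma(\calA) \subseteq \EXT_\Gamma(\calA)$, and since $\EXT_\Gamma(\calA) = \PSO_\Gamma(\calA)$ from the previous step, we conclude $\NO_\Gamma(\calA) \subseteq \PSO_\Gamma(\calA)$. For the right-hand inclusion, Proposition~\ref{pr:MODS-results}(ii) gives $\EXT_\Gamma(\calA) \subseteq \CSD_\Gamma(\calA) \cap \PO_\Gamma(\calA)$; rewriting $\EXT_\Gamma(\calA)$ as $\PSO_\Gamma(\calA)$ (or any of its equals) yields the desired $\PSO_\Gamma(\calA) \subseteq \CSD_\Gamma(\calA) \cap \PO_\Gamma(\calA)$.

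The only point to verify with care is that the compositionality hypotheses required by Propositions~\ref{pr:EXT-equals-PSO-compos} and~\ref{pr:POM-equals-PSO-equals-MPO} are in force, which they are since $\Gamma$ is assumed compositional, and that the parts of Proposition~\ref{pr:MODS-results} invoked hold unconditionally (they are stated for arbitrary $\Gamma$). There is no genuine obstacle here: the result is a bookkeeping corollary, and the main care is simply to route every inclusion through the identified common set so that the directions of the equalities and containments line up correctly.
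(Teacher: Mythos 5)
Your proposal is correct and follows essentially the same route as the paper's own proof: both derive the middle equalities from Propositions~\ref{pr:POM-equals-PSO-equals-MPO} and~\ref{pr:EXT-equals-PSO-compos}, and both obtain the outer inclusions from Proposition~\ref{pr:MODS-results}. The only difference is that you spell out which parts of Proposition~\ref{pr:MODS-results} are used, which the paper leaves implicit.
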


\begin{proof}
%\commentphstartend
Proposition~\ref{pr:POM-equals-PSO-equals-MPO} implies that
$\PSO_\Gamma(\calA) =  \MPO_\Gamma(\calA) = \POM_\Gamma(\calA)$.
Proposition~\ref{pr:EXT-equals-PSO-compos} implies that
$\EXT_\Gamma(\calA) = \PSO_\Gamma(\calA)$.
Proposition~\ref{pr:MODS-results} implies that
$\NO_\Gamma(\calA) \subseteq  \EXT_\Gamma(\calA) \subseteq \CSD_\Gamma(\calA) \cap \PO_\Gamma(\calA)$, completing the proof.
\end{proof}

We can summarise the relations in the following diagram.
\tikzstyle{line} = [draw, -latex']
    \resizebox{\columnwidth}{!}{
\begin{tikzpicture}[node distance = 2cm, auto]
    % Place nodes
     \node (empty) {$\emptyset$};
     \node [right = 1cm of empty] (NO) {$\NO_\Gamma(\calA)$};
     \node [below right = 1.4cm of empty] (PSO) {$\PSO_\Gamma(\calA)$};
     \node [right = 1cm of NO] (EXT) {$\EXT_\Gamma(\calA)$};
     \node [right = 1cm of PSO] (MPO) {$\MPO_\Gamma(\calA)$};
     \node [below right = 1.4cm of PSO] (POM) {$\POM_\Gamma(\calA)$};
     \node [right = 1cm of EXT] (CSD) {$\CSD_\Gamma(\calA)$};
     \node [right = 1cm of MPO] (PO) {$\PO_\Gamma(\calA)$};
     \node [right = 1cm of CSD] (A) {$\calA$};
     
%     \node [below right = 0.3cm of PSO] (eqComp) {};
     \node [below left = -.2cm of MPO] (eqComp) {};
      \node [left = 0.3cm of POM] (ww){Equal, for};
      \node [below = 0cm of ww] {compositional $\Gamma$};
%     \node [below left= -.1cm of CSD] (nonempt) {};
%     \node [above = .2cm of CSD] {Always non-empty};
     
\path [line] (empty) -- (NO);
\path [line] (empty) -- (PSO);
\path [line] (NO) -- (EXT);
\path [line] (NO) -- (MPO);
\path [line] (PSO) -- (EXT);
\path [line] (PSO) -- (MPO);
\path [line] (EXT) -- (CSD);
\path [line] (EXT) -- (PO);
\path [line] (MPO) -- (PO);
\path [line] (CSD) -- (A);
\path [line] (PO) -- (A);
\path [line] (PSO) -- (POM);
\path [line] (POM) -- (PO);

\draw[green, rotate = 30] (eqComp) ellipse (3cm and 2.2cm);
%\draw[red, dashed] (nonempt) ellipse (3.4cm and 1.6cm);
\end{tikzpicture}
}

\section{Computing Optimal Solutions}
\label{sec: experiments}

In this section, we analyse the efficiency of computing $\PO_\Gamma$, $\PSO_\Gamma$, $\CSD_\Gamma$ and $\NO_\Gamma$ theoretically and experimentally
for $\Gamma\subseteq\calLpqn$.
Note, that by Theorem~\ref{th:optimality-classes-compositional} since $\Gamma\subseteq\calLpqn$ is compositional, $\EXT_\Gamma = \MPO_\Gamma = \POM_\Gamma = \PSO_\Gamma$ and thus $\PSO_\Gamma$ can be chosen to represent all of these classes.

\ssk\noindent
\textbf{Theoretical Running Times:}
One approach to compute $\op(\calA)$ for $\op \in \{\PO_\Gamma,\PSO_\Gamma,\CSD_\Gamma,\NO_\Gamma\}$ is to test membership for every alternative separately,
i.e., checking if $\alpha \in \op(\calA)$ for all $\alpha \in \calA$.
By the results in Section~\ref{subsec:checking-cons-calLpqn}, we can decide consistency for $g$ statements and $n$ variables in $\calO(n^2 g)$.
To test if $\alpha \in \PO_\Gamma(\calA)$, we test whether $\Gamma \cup  \{\alpha \geq \beta \mid \beta \in \calA- \{\alpha\}\}$ is consistent in $\calO(n^2 (g + m))$, where $|\calA| = m$.
Similarly, we test if $\alpha \in \PSO_\Gamma(\calA)$ in $\calO(n^2 (g + m))$, by checking if $\Gamma \cup  \{\alpha > \beta \mid \beta \in \calA,  \beta \not\equiv_\Gamma \alpha\}$ is consistent.
To test if $\alpha \in \CSD_\Gamma(\calA)$, we check for all $\beta \in \calA$ with $\beta \not\equiv_\Gamma \alpha$ if $\Gamma \cup  \{\alpha > \beta\}$ is consistent in $\calO(m n^2 g)$.
To test if $\alpha \in \NO_\Gamma(\calA)$, we test for all $\beta \in \calA- \{\alpha\}$ if $\Gamma \cup  \{\alpha < \beta\}$ is consistent in $\calO(m n^2 g)$.
The theoretical running times for computing $\op(\calA)$ for $\op \in \{\PO_\Gamma,\PSO_\Gamma,\CSD_\Gamma,\NO_\Gamma\}$ are summarised by the following table.
\resizebox{\columnwidth}{!}{
\begin{tabular}{l|c|c|c}
					& \shortspecialcell{$\PO_\Gamma, \PSO_\Gamma$}		& \shortspecialcell{$\CSD_\Gamma$}			& \shortspecialcell{$\NO_\Gamma$}		\\ \hline
\shortspecialcell{ Running time}	& $\calO(m n^2 (g + m))$		& \shortspecialcell{$\calO(m^2 n^2 g)$}	& \shortspecialcell{$\calO(m^2 n^2 g)$} 	
\end{tabular}
}
\label{tab: RT gap}

Let $M_\op(m)$ denote the worst case running time of testing if an outcome $\alpha \in \calA$ is in $\op(\calA)$ for $|\calA| = m$ and operator $\op$.
Then the running time to compute $\op(\calA)$ can be estimated by $O(m M_\op(m))$.

The work in~\cite{WilsonRM15} describes the algorithm ``IncrementalO'' to compute $\op(\calA)$ for optimality operators $\op$ and alternatives $\calA = \{\alpha_1, \dots, \alpha_m\}$ in an incremental way by testing if $\alpha_i \in \op(\{\alpha_1, \dots, \alpha_{i-1}\})$ and if so computing $\op(\{\alpha_1, \dots, \alpha_{i}\}$.
This algorithm may be used to compute $\PO_\Gamma,\PSO_\Gamma$ and $\CSD_\Gamma$ since all of these operators satisfy path independence and thus are optimality operators.

Under the assumption that $M_\op(m)$ is monotonically increasing in $m$,
we can estimate the running time $M_\op(m)$ of the algorithm described in~\cite{WilsonRM15} to compute $\op(m)$ by $O(\sum_{i=1, \dots, m} i M_\op(i))$ in the worst case and $O(m M_\op(1))$ in the best case.

The algorithm ``IncrementalO'' can not be applied to compute $\NO_\Gamma$, since $\NO_\Gamma$ does not satisfy path independence. However, a similar incremental computation is possible. 
Furthermore, by preprocessing the set of alternatives and including only one representative of every equivalence class w.r.t. $\Gamma$-equivalence (in $O(n^2g + m n)$), we get that $|\NO_\Gamma(\calA')| \leq 1$.

The incremental computation of computing $\op(\calA)$ for $\op \in \{\PO_\Gamma,\PSO_\Gamma,\CSD_\Gamma,\NO_\Gamma\}$ results in the following theoretical best and worst case running times.
\resizebox{\columnwidth}{!}{
\begin{tabular}{l|c|c|c}
					& \shortspecialcell{$\PO_\Gamma, \PSO_\Gamma$}		& \shortspecialcell{$\CSD_\Gamma$}			& \shortspecialcell{$\NO_\Gamma$}		\\ \hline
\shortspecialcell{ Best case}	& $\calO(m n^2 g)$		& \shortspecialcell{$\calO(m n^2 g)$}	& \shortspecialcell{$\calO(m n^2 g)$} 	\\ \hline
\shortspecialcell{ Worst case}	& $\calO(m^2 n^2 (g + m))$		& \shortspecialcell{$\calO(m^3 n^2 g)$}	& \shortspecialcell{$\calO(m^2 n^2 g)$} 	
\end{tabular}
}
\label{tab: RT gap1}

However, our experiments indicate no or only little improvement for the running times of the incremental computation compared with the previously described computation of separate membership tests.
We will thus concentrate on the running time results for the computation of separate membership tests.
%
%\ssk\noindent
%\textbf{Instances:}
% We randomly generated 10 consistent instances each for $g \in \{100, \dots, 1000\}$ preference statements $\Gamma \subseteq \calLpqn$ (equally divided in fully strict, weakly strict, non-strict and negated non-strict) and $n \in \{100, \dots, 1000\}$ variables with random domain sizes 2 or 3.
%Here, the values of assignments involved in $\Gamma$ were chosen randomly so that $\pi_S \vDash \Gamma$, where $\pi_S= (X_1, \geq_1)\dots(X_n,\geq_n)$ with trivial strict value orders.
%Here, the values of outcomes involved in $\Gamma$ were chosen randomly so that $\pi_S \vDash \Gamma$, where $\pi_S$ is the trivial complete lex model.
%In this way we ensure that all variables are involved in a maximal model.

\ssk\noindent
\textbf{Experimental Results:}
We implemented the methods for computing $\op(\calA)$ for $\op \in \{\PO_\Gamma,\PSO_\Gamma,\CSD_\Gamma,\NO_\Gamma\}$ in Java Version 1.8 and conducted our experiments independently on a 2.66Ghz quad-core processor with 12GB memory.
We randomly generated 10 consistent instances each, with $g \in \{100, \dots, 1000\}$ preference statements $\Gamma \subseteq \calLpqn$ (equally divided into fully strict, weakly strict, non-strict and negated non-strict) and $n \in \{100, \dots, 1000\}$ variables with random domain sizes 2 or 3.
Here, to ensure that $\Gamma$ is consistent,
the values of assignments involved in $\Gamma$ were chosen randomly so that $\pi_S \models \Gamma$,
for a given fixed lex order
 $\pi_S= (X_1, \geq_1)\dots(X_n,\geq_n)$.
% with trivial strict value orders.
Since the experiments showed similar behavior for $m=100, \dots, 500$ alternatives% (as expected by the theoretical results running times increase with increasing $m$)
, we fix $m=100$ in the following analysis.
In every instance, the experiments result in $\NO_\Gamma(\calA) \subseteq \PO_\Gamma(\calA) = \PSO_\Gamma(\calA) \subseteq \CSD_\Gamma(\calA)$ which extends the relations from
Theorem~\ref{th:optimality-classes-compositional}.
%Section~\ref{sec:optimality}.
%** NO cardinality now hidden.
%and $|\NO_\Gamma(\calA)| \leq 1$.
%
Here, $\NO_\Gamma(\calA)$ is non-empty more often the higher the number of preference statements $g$ and the lower the number of variables $n$ is.
The cardinality of sets $\PO_\Gamma(\calA), \PSO_\Gamma(\calA), \CSD_\Gamma(\calA)$ tends to be higher the lower the number of preference statements $g$ and the higher the number of variables $n$ is.
The diagram below shows mean cardinalities (over 10 instances each) of $\PO_\Gamma(\calA)=\PSO_\Gamma(\calA)$, $\CSD_\Gamma(\calA)$, $\NO_\Gamma(\calA)$ for different $n$ with $g = 1000$.

\resizebox{\columnwidth}{!}{
\begin{tikzpicture}
\begin{axis}[
  	ybar,
	bar width= 21,
 	height=4.95cm,
  	width=13cm,
  	ymajorgrids = true,
  	xmajorgrids = true,
  	enlarge x limits=0.05,
  	enlarge y limits=0.01,
  	xlabel={Number of Variables},
 	 ylabel={Mean Cardinalities},
  	extra x ticks = {100, 200, 300, 400, 500, 600, 700, 800, 900, 1000,1100},
  	ymin = 0,
  	ymax = 37,
  	xmin = 100,
  	xmax = 1000,
  	legend style={at={(0.59,0.8)},anchor=south east,legend columns=-1},
  	cycle list = {black!90,black!70,black!40}
   ]
\addplot+[] coordinates{
(100, 0.8)(200, 0.5)(300, 0.1)(400, 0)(500, 0)(600, 0)(700, 0)(800, 0)(900, 0)(1000, 0)(1100, 0)
};
\addplot+[fill,text=black] coordinates{
(100, 1.4)(200, 2.2)(300, 6.8)(400, 12.6)(500, 15)(600, 25.2)(700, 20.9)(800, 28.3)(900, 32.8)(1000, 26.5)(1100, 30)
};
\addplot+[fill,text=black] coordinates{
(100, 1.4)(200, 2.2)(300, 7.0)(400, 14.0)(500, 16.4)(600, 27.1)(700, 23.3)(800, 32.7)(900, 34.9)(1000, 29)(1100, 0)
};
\legend{$|\NO_\Gamma(\calA)|$,$|\PO_\Gamma(\calA)|$, $|\CSD_\Gamma(\calA)|$}
\end{axis}
\end{tikzpicture}
}

The mean running times (over 10 instances) for computing $\PO_\Gamma(\calA)$ and $\PSO_\Gamma(\calA)$ are similar for each instance class.
They range from 0.11s to
%16.47s,
16.5s,
and tend to increase with $n$ and $g$. % (only 6\% outliers).
For computing $\CSD_\Gamma(\calA)$ they range from 2.69s to
%659.03s
659s
and tend to increase with $n$ % (only 4\% outliers)
but show no clear trend by scaling $g$.
For computing $\NO_\Gamma(\calA)$ they range from 0.09s to
%18.81s
18.8s
and tend to increase with $n$ and $g$.
%_\Gamma(\calA)
Thus, in all instances $\PO$, $\PSO$, $\CSD$ and $\NO$ are computed efficiently.
In 71\% of the instance classes $\PO$ is computed faster than $\NO$.
Although $\CSD$ can be computed efficiently too, it is clearly computationally more expensive than $\PO$, $\PSO$ and $\NO$.

%\commentph
The following figure shows the mean running times (over 10 instances) of computing $\PO_\Gamma(\calA)$, $\CSD_\Gamma(\calA)$ and $\NO_\Gamma(\calA)$ for different $n$ with $g = 1000$.
\resizebox{\columnwidth}{!}{
\begin{tikzpicture}
%\node[label={180:{\small50+}}] at (.2,3.1) {};
\begin{axis}[
  	ybar,
  	ymode=log,
	bar width= 21,
 	height=4.95cm,
  	width=13cm,
  	ymajorgrids = true,
  	xmajorgrids = true,
  	enlarge x limits=0.05,
  	enlarge y limits=0.01,
  	xlabel={Number of Variables},
 	 ylabel={Mean Times in s},
  	extra x ticks = {100, 200, 300, 400, 500, 600, 700, 800, 900, 1000,1100},
  	ymin = 1,
  	xmin = 100,
  	xmax = 1000,
  	legend style={at={(0.54,0.8)},anchor=south east,legend columns=-1},
  	cycle list = {black!90,black!70,black!40}
   ]
\addplot+[] coordinates{
(100,2.86)
(200,3.91)
(300,3.86)
(400,5.24)
(500, 5.31)
(600, 5.79)
(700, 6.81)
(800, 13.86)
(900, 12.27)
(1000, 14.13)
};
\addplot+[fill,text=black] coordinates{
(100,0.59)
(200,1.16)
(300,1.84)
(400,2.62)
(500, 3.51)
(600, 4.90)
(700, 5.90)
(800, 12.00)
(900, 10.75)
(1000, 14.13)
};
\addplot+[fill,text=black] coordinates{
(100,5.46)
(200,14.97)
(300,35.47)
(400,67.48)
(500, 99.53)
(600, 173.98)
(700, 192.23)
(800, 531.57)
(900, 439.30)
(1000, 533.94)
};
\node[label={360:{70+}}] at (-10,110) {};
\legend{$\NO_\Gamma(\calA)$,$\PO_\Gamma(\calA)$, $\CSD_\Gamma(\calA)$}
\end{axis}
\end{tikzpicture}
}

\ssk\noindent
\textbf{Evaluation:}
In practice, it is often desired to compute relatively small sets of optimal solutions, e.g., not to overwhelm a user with too many choices.
%But since $\NO_\Gamma$ is often empty,
Since $\NO_\Gamma$ is usually empty,
%(especially for high numbers of variables and low numbers of statements),
computing $\PSO_\Gamma$ or $\PO_\Gamma$
may be the best choice.
%is the most advantageous choice.
This is further supported by the analysis of computational cost.
In the experiments, computing
$\PSO_\Gamma$ and $\PO_\Gamma$ scales much better than %$\NO_\Gamma$ and
$\CSD_\Gamma$,
with the number of statements and variables,
%, in theory,
and is often faster than $\NO_\Gamma$.
% in the experiments.

\section{Discussion}
\label{sec:discussion}

We have developed a new approach for preference inference based on lexicographic models, using a notion of strong compositionality that allows a greedy algorithm.
This is shown to be polynomial for relatively expressive preference languages.
We also examined different notions of optimality, and proved relationships between them.
Our experimental results confirm that the preference inference/consistency algorithm is fast,
and examined the problem of generating optimal solutions, where it was found that generating the sets of possibly optimal and possibly strictly optimal solutions were significantly faster than generating the undominated solutions.

There are other common forms of preference statement that are strongly compositional, and for which the greedy algorithm will enable checking consistency.
For instance, a restriction on the allowed value orderings of each variable is always strongly compositional.
This can include,
for instance,
highly disjunctive statements, for instance, structural properties of the value orderings, such as being single-peaked \cite{Conitzer09}.
Certain kinds of restrictions on variable orderings
are also strongly compositional.

%\commentph
More formally, regarding the value ordering restrictions,
for each variable $X\in V$
let $\calP_X$ be a set of total orders on $\und{X}$, and let
$\calP$ be $\set{\calP_X \st X\in V}$.
We say that lex model $\pi$, equalling
$(Y_1, \ge_{Y_1}), \ldots, (Y_k, \ge_{Y_k})$,
satisfies $\calP$ if and only if
each $\ge_{Y_i}$ is in $\calP_{Y_i}$.
The definitions immediately imply that
$\calP$ is strongly compositional.

%\commentph
Lemma~\ref{le:compos-conjunction} showed that the property of being strongly compositional is (roughly speaking) preserved under conjunction.
Although this is far from being the case for disjunctions in general, some disjunctive statements are strongly compositional;
this includes the weakly strict statements in $\calLpq$,
and restrictions on value orderings.

A lexicographic order can also be viewed as a static variable and value search ordering, for a constraint satisfaction or optimisation problem.
In a configuration problem, for instance, we may want to generate solutions in an ordering which fits in with what one knows about user preferences and the structure of the problem \cite{JunkerM03}.
We can thus take, as input, past user preferences between solutions, or even partial tuples;
or that all solutions of one constraint are preferred to all those satisfying another constraint.
There may be natural restrictions on the set of value orderings,
%for example, with a numerical domain one will sometimes want to assume at least that the value ordering is single peaked or has a single trough.
e.g., that there is a single peak or trough with a numerical domain.
Given such inputs, which are strongly compositional, one can apply our approach to see if there is a compatible lexicographic order, which can then be used to generate solutions in a best first order.

Our approach in Section~\ref{sec:lex-inf-strong-compos} was rather abstract, in that no explicit structure on the language $\calL$  was assumed.
In fact, a closer examination of the proofs of some of the main results
%in that section
suggests that these will hold more generally than for lexicographic orders:
essentially they just depend on %some
basic properties of composition and extension, but not otherwise on the structure of %the
models.
In particular, it is natural to look at the generalisation to conditional lexicographic orders, which are closely related to conditional preference languages \cite{Wilson09}.
It would also be interesting to explore the potential connection with efficient classes of planning problems.

	\section*{Acknowledgments}
	This publication has emanated from research conducted
	with the financial support of Science Foundation Ireland
	(SFI) under Grant Number SFI/12/RC/2289.

\bibliography{reference}
\end{document}